\documentclass[sigconf,final]{acmart}

\usepackage{amsmath,amsfonts}
\usepackage[caption=false]{subfig}

\pagestyle{plain}%

% Frequently-used equation definitions

\newcommand{\bdfn}{\begin{dfn}	\rm}
\newcommand{\edfn}{\end{dfn}}
\newcommand{\bthm}{\begin{thm}	\rm}
\newcommand{\ethm}{\end{thm}}
\newcommand{\blem}{\begin{lem}	\rm}
\newcommand{\elem}{\end{lem}}
\newcommand{\bcor}{\begin{cor}	\rm}
\newcommand{\ecor}{\end{cor}}
\newcommand{\bexa}{\begin{exa}	\rm}
\newcommand{\eexa}{\hfill$\Box$\end{exa}}
\newcommand{\bexr}{\begin{exr}	\rm}
\newcommand{\eexr}{\end{exr}}
\newcommand{\bpbm}{\begin{pbm}	\rm}
\newcommand{\epbm}{\end{pbm}}

\newcommand{\Z}{{\mathbb{Z}_q}}

\newcommand{\EQ}{\begin{eqnarray*}}
\newcommand{\EN}{\end{eqnarray*}}
\newcommand{\eq}{\begin{eqnarray}}
\newcommand{\en}{\end{eqnarray}}

\newcommand{\prob}{{\mathrm{Pr}}}

\newcommand{\calA}{{\mathcal{A}}}

\newcommand{\calD}{{\mathcal{D}}}

\newcommand{\calF}{{\mathcal{F}}}

\newcommand{\calL}{{\mathcal{L}}}
\newcommand{\calM}{{\mathcal{M}}}

\newcommand{\calO}{{\mathcal{O}}}

\newcommand{\calQ}{{\mathcal{Q}}}

\newcommand{\calS}{{\mathcal{S}}}

\newcommand{\calU}{{\mathcal{U}}}

%\newcommand{\G}{{\mathbb{G}}}

% one-time signature

\newcommand{\h}{{\tt H}}
\newcommand{\e}{{\mathsf{\hat{e}}}}
\newcommand{\mpk}{{\tt mpk}}
\newcommand{\msk}{{\tt msk}}
\newcommand{\sk}{{\tt sk}}
\newcommand{\pk}{{\tt pk}}

\newcommand{\abort}{{\tt abort}}

\newcommand{\iseq}{{\stackrel{?}{=}}}

\newcommand{\adv}{{\mathtt{Adv}}}

\usepackage{tikz}
\usepackage{pgfplots}
\usepackage{xcolor}
\usepackage{tabularx}

\usepackage{graphicx}
\usepackage{enumitem}
\setlist[itemize]{leftmargin=*}
\setlist[enumerate]{leftmargin=*}

\pagestyle{plain}

\begin{document}

\title{Accountable Fine-grained Blockchain Rewriting in the Permissionless Setting}

\author{Yangguang Tian}
\affiliation{%
  \institution{Osaka University}
  \country{Japan}
}
\email{}

\author{Bowen Liu}
\affiliation{%
  \institution{Singapore University of \\Technology and Design}
  \country{Singapore}
}
\email{bowen_liu@mymail.sutd.edu.sg}

\author{Yingjiu Li}
\affiliation{%
  \institution{University of Oregon}
  \country{USA}
}
\email{yingjiul@uoregon.edu}

\author{Pawel Szalachowski}
\affiliation{%
  \institution{Singapore University of \\Technology and Design}
  \country{Singapore}
}
\email{pjszal@gmail.com}

\author{Jianying Zhou}
\affiliation{%
  \institution{Singapore University of \\Technology and Design}
  \country{Singapore}
}
\email{jianying_zhou@sutd.edu.sg}

\renewcommand{\shortauthors}{Tian and Liu, et al.}

\begin{abstract}

Blockchain rewriting with fine-grained access control allows a user to create a transaction associated with a set of attributes, while another user (or modifier) who possesses enough rewriting privileges from a trusted authority satisfying the attribute set can rewrite the transaction. However, it lacks accountability and is not designed for open blockchains that require no trust assumptions. In this work, we introduce accountable fine-grained blockchain rewriting in a permissionless setting. The property of accountability allows the modifier's identity and her rewriting privileges to be held accountable for the modified transactions in case of malicious rewriting (e.g., modify the registered content from good to bad). We first present a generic framework to secure blockchain rewriting in the permissionless setting. Second, we present an instantiation of our approach and show its practicality through evaluation analysis. Last, we demonstrate that our proof-of-concept implementation can be effectively integrated into open blockchains. 

\end{abstract}

\keywords{Blockchain Rewriting, Accountability, Open Blockchains}

\maketitle

\section{Introduction}

Blockchains have received tremendous attention from research communities and industries in recent years. The concept was first introduced in the context of Bitcoin \cite{nakamoto2008bitcoin}, where all payment transactions are appended in a public ledger, and each transaction is ordered and verified by network nodes in a peer-to-peer manner. Blockchain ledgers grow by one block at a time, where the new block in the chain is decided by a consensus mechanism (e.g., Proof-of-Work in Bitcoin \cite{jakobsson1999proofs}) executed by the network nodes. Usually, blockchains deploy hash-chains as an append-only structure, where the hash of a block is linked to the next block in the chain. Each block includes a set of valid transactions which are accumulated into a single hash value using the Merkle tree \cite{merkle1989certified}, and each transaction contains certain content which needs to be registered in the blockchain. 

Blockchains are designed to be immutable, such that the registered content cannot be modified once they are appended. However, blockchain rewriting is often required in practice, or even legally necessary in data regulation laws such as GDPR in Europe \cite{GDPR}. Since the platform is open, it is possible some users append transactions into a chain containing illicit content such as sensitive information, stolen private keys, and inappropriate videos \cite{matzutt2016poster,matzutt2018quantitative}. The existence of illicit content in the chain could pose a challenge to law enforcement agencies like Interpol \cite{tziakouris2018cryptocurrencies}. 

Blockchain rewriting is usually realized by replacing a standard hash function, used for generating transaction hash in the blockchain, by a trapdoor-based chameleon hash \cite{DBLP:conf/ndss/KrawczykR00}. Then the users, who have the same privilege (i.e., hold the trapdoor), can modify a transaction. In other words, the same transaction can be modified by users with the same privileges only. Nonetheless, for most real-life blockchain applications, blockchain rewriting with fine-grained access control is desired so that  various rewriting privileges can be granted to different modifiers and that the same transaction can be modified by users with different privileges. For blockchain rewriting with fine-grained access control, a user first associates his transaction with a set of attributes. Then, any user possessing a chameleon trapdoor can modify the transaction if her access privilege corresponding to the chameleon trapdoor satisfies the embedded attribute set. It is possible that multiple users with different access privileges can modify a same transaction if their privileges satisfy the set of attributes associated with the transaction. 

\smallskip\noindent\textbf{Motivation.} Blockchain rewriting with fine-grained access control has been recently studied in the permissioned setting \cite{DBLP:conf/ndss/DerlerSSS19}; however, the proposed solution is not suitable for permissionless (i.e., open) blockchains for two reasons: 1) It requires a trusted authority to distribute rewriting privileges. 2) It lacks accountability as modifiers may rewrite the blockchain without being identified. The main motivation of this work is to extend blockchain rewriting with fine-grained access control to open blockchains such as Bitcoin \cite{bitcoin} and Ethereum \cite{ethereum}, which assume no trusted authority for managing access privileges.

Blockchain rewriting with fine-grained access control requires accountability especially in the permissionless setting. It is more critical for open blockchains, as there is no trusted authority to identify which modifier is responsible for any maliciously modified transaction (e.g., a modifier may rewrite the registered content in a transaction from good to bad). Besides, if modifiers attempt to generate an access device (or blackbox) that accumulates various kinds of rewriting privileges, and distribute or sell it to the public, it is more challenging to identify the responsible rewriting privileges given a modified transaction. This is because any user may rewrite a transaction successfully if he holds that access blackbox. Therefore, public accountability in this work encompasses two aspects: 1) identify a modifier in case of malicious modification on a transaction, and 2) identify the modifier's rewriting privileges even if an access blackbox is used. 

Let us consider an open blockchain system that includes multiple committees, and each committee contains multiple users appointed for granting rewriting privileges. For ease of exposition, we assume there exists only one committee and one modifier for modifying mutable transactions during an epoch (i.e., a fixed interval of time). We have two blockchain rewriting cases: 1) a modifier rewrites a transaction using her access privilege granted from a committee, and the modifier may have various access privileges by joining different committees at different epochs. 2) an unauthorized user has no rewriting privilege from any committee; he may still rewrite a transaction using an access blackbox that accumulates various modifiers' access privileges. In such cases, it is desired to hold the rewriting privileges accountable for the modified transactions. Overall, if a transaction is maliciously modified, any public user can identify the responsible rewriting privilege that links to either a designated modifier or an unauthorized user with an access blackbox. Public accountability allows the modifiers (including the unauthorized users with access blackboxes) and the rewriting privileges to be held accountable for the modified transactions in open blockchain systems.

\smallskip\noindent\textbf{This Work.} We introduce a new framework of accountable fine-grained blockchain rewriting, which is used to secure blockchain rewriting in a permissionless setting. The proposed framework achieves strong security without trusted authority and achieves public accountability simultaneously. Strong security without trusted authority indicates that the proposed framework remains secure even when attackers can compromise a threshold number of users in any committee. Public accountability means that any user in the public can link a modified transaction to a modifier and the responsible rewriting privilege.

We now explain our key technical insights. First, we rely on dynamic proactive secret sharing (DPSS) \cite{maram2019churp} to remove the trusted authority and achieve strong security. We replace the trusted authority by a committee of multiple users, where each user holds a share of the trust. We allow any user to join in and leave from a committee in any time epoch. Since we use key-policy attribute-based encryption (KP-ABE) \cite{goyal2006attribute} to ensure fine-grained access control, we highlight the following points: 1) The master secret key is split into multiple key shares so that each user in a committee holds a single key share. 2) A certain number of shareholders in a committee collaboratively recover the master secret key and distribute access privileges to modifiers. 3) Any user can freely join/leave a committee, and the master secret key remains fixed across different committees. We achieve strong security because the master secret key remains secure even if the attackers are allowed to compromise a threshold number of shareholders in any committee.

Second, we achieve public accountability using a digital signature scheme, a commitment scheme, and KP-ABE with public traceability (ABET). We show the purpose of using those primitives as follows: 1) The digital signature helps the public link a modified transaction to a modifier (or modifier's public key), as she signs the modified transaction using her signing key, and the signed transaction is publicly verifiable. 2) The commitment scheme helps the public link modifiers' public keys with committees. The intention is to show a modifier is indeed obtained a rewriting privilege from a committee. 3) If an unauthorized user holds an access blackbox, ABET helps the public to obtain a set of rewriting privileges from interacting with the access blackbox due to ABET's public traceability. Since there is no existing ABET can be applied to this work, we propose a new ABET scheme, which is suitable for decentralized systems such as open blockchains.

\smallskip\noindent\textbf{Our Contributions.} The major contributions of this work are summarized as follows.

\begin{itemize}

    \item{\it Generic Framework.} We introduce a new generic framework of accountable fine-grained blockchain rewriting, which is based on the chameleon hash function. A unique feature of this framework is that it allows the fine-grained blockchain rewriting to be performed in the permissionless setting. 
    
    \item{\it Public Accountability.} We introduce a new notion called public accountability, such that modifiers' public keys and rewriting privileges are held accountable for the modified transactions, which is essential to blockchain rewriting because it helps thwart malicious rewriting. 

  \item{\it Practical Instantiation.} We present a practical instantiation, and our evaluation analysis validates its practicality. We present an efficient ABET, which is of independent interest. The proposed ABET scheme is the first KP-ABE scheme with public traceability designed for decentralized systems. 
  
  \item{\it Integration to Open Blockchains.} The proof-of-concept implementation shows that blockchain rewriting based on our approach incurs almost no overhead to chain validation when compared to the immutable blockchain.
    
\end{itemize}

\section{Preliminary}

In this section, we present the complexity assumptions and the building blocks, which are used in our proposed generic construction and instantiation.

\subsection{Complexity Assumptions}
\label{assumptions}
\noindent\textbf{Bilinear Maps.} Let $(g, h)$ denote two group generators, which takes a security parameter $\lambda$ as input and outputs a description of a group $\mathbb{G}, \mathbb{H}$. We define the output of $(g , h)$ as $(q, \mathbb{G}, \mathbb{H} ,\mathbb{G}_T, \e)$, where $q$ is a prime number, $\mathbb{G}$, $\mathbb{H}$ and $\mathbb{G}_T$ are cyclic groups of order $q$, and $\e: \mathbb{G} \times \mathbb{H} \rightarrow \mathbb{G}_T$ is a bilinear map such that: (1) Bilinearity: $\forall g,h \in \mathbb{G}$ and $a,b \in \mathbb{Z}_q$, we have $\e (g^a,h^b)=e(g,h)^{ab}$; (2)  Non-degeneracy: $\exists g \in \mathbb{G}$ such that $\e(g,h)$ has order $q$ in $\mathbb{G}_T$. We assume that group operations in $\mathbb{G}$, $\mathbb{H}$ and $\mathbb{G}_T$ and bilinear map $\e$ are computable in polynomial time with respect to $\lambda$. We refer to $\mathbb{G}$ and $\mathbb{H}$ as the source groups and $\mathbb{G}_T$ as the target group.

\noindent We introduce a new assumption below, which is used to prove the semantic security of the proposed ABET scheme. The new assumption is proven secure in the generic group model \cite{shoup1997lower}. We underline specific elements to show the differences between the new assumption and the original one. 

\begin{definition}[Extended $q$-type Assumption] Given group generators $g \in \mathbb{G}$ and $h \in \mathbb{H}$, define the following distribution:
\begin{eqnarray*}
\adv_{\calA}^{q'} & = &  | \Pr[ \adv( 1^{\lambda}, {\sf par}, D, T_0 ) =1 ] \\
& & - \Pr[ \adv( 1^{\lambda}, {\sf par}, D, T_1 ) =1 ] |, where \\
& & {\sf par} = ( q, \mathbb{G}, \mathbb{H}, \mathbb{G}_T, \e, g, h  ) \leftarrow {\sf GroupGen}(1^{\lambda}) \\
& & a, b, c, d \leftarrow \mathbb{Z}_q^*, s, \{ z \} \leftarrow \Z ; D = ( g^{a}, \underline{h^{b}}, g^{c}, \\
& & g^{(ac)^2}, \underline{ g^{abd}, g^{d/ab}, h^{abd}, h^{abcd}, h^{d/ab}, h^c, h^{cd/ab} }, \\
& & g^{z_i}, g^{acz_i}, g^{ac/z_i}, g^{a^{2}cz_i}, g^{b/z_i^2}, g^{b^{2}/z_i^2},  \forall i \in [q] \\
& & g^{acz_i/z_j}, g^{bz_i/z_j^2}, g^{abcz_i/z_j}, g^{(ac)^2z_i/z_j}, \forall i,j \in [q], \\ 
& & i \neq j ); T_0 = g^{abc} , T_1 = g^s. 
\end{eqnarray*}

\noindent The extended $q$-type (or $q'$-type) assumption is secure if $\adv_{\mathcal{A}} (\lambda)$ is negligible $\lambda$. 

\end{definition}

\noindent The detailed theorem and proof are shown in Appendix A. We also present an Extended Decisional Diffie-Hellman Assumption \cite{tian2020policy}, which is used to prove the ciphertext anonymity of the proposed ABET scheme. 

\begin{definition}[Extended Decisional Diffie-Hellman (eDDH)] Given group generators $g \in \mathbb{G}$ and $h \in \mathbb{H}$, define the following distribution:
\begin{eqnarray*}
\adv_{\calA}^{\text{eDDH}} & = &  | \Pr[ \adv( 1^{\lambda}, {\sf par}, D, T_0 ) =1 ] \\
& & - \Pr[ \adv( 1^{\lambda}, {\sf par}, D, T_1 ) =1 ] |, where \\
& & {\sf par} = ( q, \mathbb{G}, \mathbb{H}, \mathbb{G}_T, \e, g, h  ) \leftarrow {\sf GroupGen}(1^{\lambda}) \\
& & a, b, c \leftarrow \mathbb{Z}_q^*, s \leftarrow \Z ; D = ( g^{a}, g^{b}, g^{ab}, h^c, h^{ab}, \\
& & h^{1/ab}, h^{abc} ); T_0 = h^{c/ab} , T_1 = h^s.
\end{eqnarray*}

\noindent The eDDH assumption is secure if $\adv_{\mathcal{A}} (\lambda)$ is negligible in $\lambda$. 

\end{definition}

\subsection{Attribute-based Encryption}
\label{abe}
\smallskip\noindent\textbf{Access Structure.} Let $\calU$ be an attribute universe. An access structure $\Lambda$ is a collection of non-empty subsets of $\calU$ (i.e., $\Lambda \subseteq 2^{\calU} \backslash \lbrace \phi \rbrace$). It is called monotone if $\forall B, C:$ if $B \in \Lambda$ and $B \subseteq C$ then $C\in \Lambda$.

\label{MSP}
\smallskip\noindent{\bf Monotone Span Program (MSP).} A secret-sharing scheme $\prod$ with domain of secrets realizing access structure is called linear over $\Z$ if: 1) The shares of a secret $s \in \Z$ for each attribute form a vector over $\Z$; 2) For each access structure $\Lambda$ on $\delta$, there exists a matrix ${\bf M}$ with $n_1$ rows and $n_2$ columns called the share-generating matrix for $\prod$. For $\mu = 1,...,n_1$, we define a function $\pi$ labels row $\mu$ of ${\bf M}$ with attribute $\pi(\mu)$ from the attribute universe $\calU$. We consider the column vector $\vec{\nu}=(s,r_2,...,r_{n_2})^{\top}$, where $s \in \Z$ is the secret to be shared and $r_2,...,r_{n_2} \in \Z$ are chosen at random. Then ${\bf M} \vec{\nu} \in \mathbb{Z}^{n_1 \times 1}_q$ is the vector of $n_1$ shares of the secret $s$ according to $\prod$. The share $({\bf M} \vec{\nu})_{\mu}$ belongs to attribute $\pi(\mu)$, where $\mu \in [n_1]$.

According to \cite{beimel1996secure}, every linear secret-sharing scheme has the linear reconstruction property, which is defined as follows: we assume that $\prod$ is a MSP for the access structure $\Lambda$, $\delta' \in \Lambda$ is an authorized set and let $I \subset \{1,2,..., n_1 \}$ be defined as $I=\{ \mu \in [n_1] \wedge \pi(\mu)\in \delta' \}$. There exists the constants $\{\gamma_{\mu} \in \Z \}_{\mu \in I}$ such that for any valid share $\{\lambda_{\mu} = ({\bf M} \vec{\nu})_{\mu} \}_{ \mu \in I}$ of a secret $s$ according to $\prod$, $\sum_{ \mu \in I}\gamma_{\mu} \lambda_{\mu}=s$. Meanwhile, such constants $\{\gamma_{\mu} \}_{\mu \in I}$ can be found in time polynomial in the size of the share-generating matrix ${\bf M}$. For any unauthorized set $\delta''$, no such $\{ \gamma_{\mu} \}$ exist.

\smallskip\noindent\textbf{Attribute-Based Encryption with Public Traceability.} It consists of the following algorithms. We assume an index space $\{ 1, \cdots, k \}$, where $k$ denotes the maximal number of the index. This definition is inspired by \cite{liu2013blackbox}. 

\begin{itemize}
\item {{\sf Setup}$(1^{\lambda})$:} It takes a security parameter $\lambda$ as input, outputs a master key pair $(\msk, \mpk)$.

\item {{\sf KeyGen}$(\msk, \Lambda)$:} It takes the master secret key $\msk$, an access policy as input, outputs a decryption key $ssk_{\Lambda_i}$, which is assigned by a unique index $i$.  
 
\item {{\sf Enc}$(\mpk, m, \delta, j )$:} It takes the master public key $\mpk$, a message $m$, a set of attributes $\delta \in \calU$, and an index $j \in \{ 1, k+1 \}$ as input, outputs a ciphertext $C$. Note that $C$ contains $\delta$, not index $j$.

\item {{\sf Dec}$(\mpk, C, ssk_{\Lambda_i})$:} It takes the master public key $\mpk$, a ciphertext $C$, and the decryption key $ssk_{\Lambda_i}$ as input, outputs the message $m$ if $1 = \Lambda_i (\delta) \wedge j \leq i $. 

\item{{\sf Trace}$(\mpk, \calD, \epsilon )$:} It takes master public key pair $\mpk$, a policy-specific decryption device/blackbox $\calD$, and a parameter $\epsilon > 0$ as input, outputs a set of indexes $\mathbb{K}_T \in \{ 1, \cdots, k \}$, where $\epsilon$ is polynomially related to $\lambda$, and $\mathbb{K}_T$ denotes the index set of the accused decryption keys. 

\end{itemize}

\smallskip\noindent\textbf{Public Traceability.} Given a policy-specific decryption device/blackbox that includes a set of decryption keys, the tracing algorithm, which treats the decryption blackbox as an oracle, can identify the accused decryption keys that have been used in constructing the decryption blackbox. The decryption blackbox is associated with a specific access policy $\Lambda_{\calD}$. Informally speaking, any public user can generate a ciphertext on a message under a set of attributes that satisfies $\Lambda_{\calD}$, and an index $j \in \{ 1, \cdots, k+1 \} $. Then, the public sends the ciphertext to the decryption blackbox and checks whether the decryption is successful. If decryption succeeds, the public outputs the accused index $j$; otherwise, the public generates a new ciphertext under another attribute set and index. The public continues this process until finding a set of accused indexes $\mathbb{K}_T \in \{ 1, \cdots, k \} $. The tracing algorithm's formal definition is referred to \cite{boneh2006fully,boneh2006ful}, which is analogous to the traitor tracing algorithm used in the broadcast encryption.
 
The ABET scheme requires that the encryptor generates a ciphertext on a message associated with a set of attributes and a hidden index $j$. The decryptor can decrypt the message if the set of attributes is satisfied by her access policy, and $j \leq i$. We stress that the hidden index (or index-hiding) is critical to ABET. On the one hand, the index-hiding ensures that a ciphertext generated by the encryptor using an index $j$ reveals no information about $j$. On the other hand, the public user can pick a possible accused index $j \in  \{ 1, \cdots, k+1 \} $ in generating ciphertext for public tracing. In this work, we call index-hiding as ciphertext anonymity. We denote policy-specific decryption blackbox as access blackbox because it accumulates various rewriting privileges for blockchain rewriting.

\subsection{Digital Signature}
\label{homo_sig}

A digital signature scheme $\Sigma $ = ({\sf Setup}, {\sf KeyGen}, {\sf Sign}, {\sf Verify}) is homomorphic, if the following conditions are held. 

\begin{itemize}

	\item{\textit{Simple Key Generation.}} It requires $ (\sk, \pk) \leftarrow {\sf KeyGen} (pp)$ and $pp \leftarrow {\sf Setup}(1^{\lambda})$, where $\pk$ is derived from $\sk$ via a deterministic algorithm $ \pk \leftarrow {\sf KeyGen}'(pp, \sk)$. 
	
	\item{\textit{Linearity of Keys.}} It requires $ {\sf KeyGen}' (pp, \sk+ \Delta(\sk)) = M_{\pk} ( pp, \break {\sf KeyGen}' (pp, \sk), \Delta(\sk)) $, where $M_{\pk}$ denotes a deterministic algorithm which takes $pp$, a verification key $\pk$ and a ``shifted$"$ value $\Delta(\sk)$ as input, outputs a ``shifted$"$ verification key $\pk'$. $\Delta$ denotes the difference or shift between two keys. 
	
	\item{\textit{Linearity of Signatures.}} Two distributions are identical: $\{ \sigma' \leftarrow {\sf Sign}(pp, \sk + \Delta(\sk), m) \} $ and $ \{ \sigma' \leftarrow M_{\Sigma} (pp, \pk , m, \sigma, \Delta(\sk)) \} $, where $\sigma \leftarrow {\sf Sign}(pp, \sk , m) $, and $M_{\Sigma}$ denotes a deterministic algorithm which takes $pp$, a verification key $\pk$, a message-signature pair $(m, \sigma)$ and a ``shifted$"$ value $\Delta(\sk)$ as input, outputs a ``shifted$"$ signature $\sigma'$.
	
	\item{\textit{Linearity of Verifications.}} It requires ${\sf Verify} (pp, M_{\pk}$ ($pp, \pk, \Delta(\sk)), \break m, M_{\Sigma} ( pp, \pk , m, \sigma, \Delta(\sk) ) ) = 1 $, and ${\sf Verify}( pp, \pk, m, \sigma ) = 1$.
	 
\end{itemize}

\noindent The Schnorr signature scheme \cite{schnorr1991efficient} satisfies the homomorphic properties regarding keys and signatures. We rely on this homomorphic property to find the connection between a transaction and its modified versions. 

\subsection{Dynamic Proactive Secret Sharing}
\label{DPSS}
A dynamic proactive secret sharing {\sf DPSS} consists of {\sf Share}, {\sf Redistribute}, and {\sf Open} \cite{baron2015communication} protocols. It allows a dealer to share a secret $s$ among a group of $n_0$ users such that the secret is secure against a {\it mobile} adversary, and allow any group of $n_0$-$t$ users to recover the secret, where $t$ denotes a threshold. The proactive security means that the execution of the protocol is divided into phases (or epochs) \cite{ostrovsky1991withstand}, and a mobile adversary is allowed to corrupt users across all epochs, under the condition that no more than a threshold number of users are corrupted in any given epoch. The {\sf Redistribute} protocol prevents the mobile adversary from disclosing or destroying the secret and allows the set of the users and the threshold to change. Assuming that for each epoch $i$, no more than $t$ users are corrupted during epoch $i$, the following three properties hold:

\begin{itemize}
	\item{\it Termination:} All honest users engaged in the protocol complete each execution of {\sf Share}, {\sf Redistribute}, and {\sf Open}.
	
	\item{\it Correctness:} All honest users output a secret $s'$ upon completing of {\sf Open}, such that $s'$ = $s$ if the dealer was honest during the execution of {\sf Share}. 
	
	\item{\it Secrecy:} If the dealer is honest, then $s$ leaks no information to the adversary. 
	
\end{itemize}

\noindent The definition described in \cite{baron2015communication} is for information-theoretically (or perfectly) secure protocols. We merely require {\sf DPSS} to be computationally secure due to the instantiation used in this work has computational security. Dynamic allows the set of users in a group (or committee) to be dynamically changed, which is useful in the permissionless blockchains. The {\sf Redistribute} protocol has two processes: resharing the key shares to change the committee membership and threshold, and updating the key shares across epochs to tackle mobile adversary. 

\begin{itemize}
	
	\item{\it Resharing the Key Shares \cite{desmedt1997redistributing}.} We rely on a bivariate polynomial to share a secret $s$: $f(x, y) = \underline{s} + a_{0,1}x + a_{1,0}y + a_{1,1}xy + \cdots + a_{t_x, t_y}x^{t_x}y^{t_y} $, where $t_x, t_y$ denote different thresholds. So there are two ways to share the secret $s$:
	
	\begin{enumerate}
	
		\item If we fix $y=0$, then the key shares include $\{ f(i_0, 0), f(i_1, 0), \cdots, \break f(i_{t_x}, 0) \}$;
		
		\item If we fix $x=0$, then the key shares include $\{ f(0, j_0), f(0, j_1), \cdots, \break f(0, j_{t_y}) \}$. 
		
	\end{enumerate}
	
We show how to transfer the ownership of the shareholders from committee $A$ to committee $B$. First, we distribute key shares $\{ f(i,y) \}$ to all users in committee $A$. Second, each user in committee $A$ generates a set of temporary shares by running a secret sharing scheme (SSS) (e.g., Shamir's \cite{shamir1979share}) on his own key share. In other words, his key share is the secret for SSS. Third, users in committee $A$ send those temporary shares to users in committee $B$. Now, users in the committee $B$ accumulate the received temporary shares and obtain another form of key shares $\{ f(x,j) \}$ via interpolation of $t_y$ temporary shares. To this end, the transfer between the two committees is successful. Note that either key shares $\{ f(i,y) \}$ or $\{ f(x,j) \}$ can be used to recover the secret $s$. 
		
		\item{\it Updating the Key Shares \cite{herzberg1995proactive}.} Suppose that a bivariate polynomial is used to share the secret $s$: $f(x, y) = \underline{s} + a_{0,1}x + a_{1,0}y + a_{1,1}xy + a_{0,2}x^2 + a_{2,0}y^2 + a_{2,2}x^2y^2 + \cdots + a_{t_x, t_y}x^{t_x}y^{t_y} $. To update $f(x, y)$, we need another bivariate polynomial: $f'(x, y) = \underline{0} + a_{0,1}'x + a_{1,0}'y + a_{1,1}'xy + \cdots + a_{t_x, t_y}'x^{t_x}y^{t_y} $, which takes 0 as the secret. The reason is that the secret $s$ in $f(x,y)$ will not be changed after updating by $f'(x,y)$. A crucial point is, we allow users in a new committee to collaboratively generate a polynomial $f'(x, y)$, thus the shareholders between old and new committees become independent (to ensure proactive security). Note that $t_x$ may not equal to $t_y$ because the threshold between committees can be different, and we call it asymmetric bivariate polynomial. 
		
\end{itemize}

\subsection{Polynomial Commitments}

A simplified version of polynomial commitment scheme \cite{kate2010constant} is shown as follows.

\begin{itemize}
	
	\item{${\sf Setup}(1^{\lambda}, t)$:} It takes a security parameter $\lambda$ and $t$ as input, outputs a key pair $(\msk, \mpk)$, where $\msk = \alpha$, $\mpk = (g, g^{\alpha}, \cdots, g^{\alpha^t}, \break h, h^{\alpha}, \e)$. 
	
	\item{${\sf Commit}(\mpk, f(x) )$:} It takes the public key $\mpk$, and a polynomial $f(x) = a_0 + a_{1}x + a_{2}x^2 + \cdots + a_{t}x^{t}$ as input, outputs $C = \prod_{j=0}^{t} (g^{\alpha^j})^{a_j}$ as the commitment to $f(x)$. 
	
	\item{${\sf CreateWitness}(\mpk, C, f(x))$:} It takes the public key $\mpk$, and the polynomial $f(x)$ as input, outputs a tuple $(i, f(i), w_i)$. Specifically, it computes a polynomial $\frac{f(x)-f(i)}{x-i}$ (note that the coefficients of the resulting quotient polynomial are $(\widehat{a_0}, \widehat{a_1}, \cdots, \widehat{a}_{t})$), and a witness $w_i = \prod_{j=0}^{t} (g^{\alpha^j})^{ \widehat{a}_j} $.
	
	\item{${\sf VerifyEval}(\mpk, C, i, f(i), w_i)$:} It takes the public key $\mpk$, a commitment $C$, and the tuple $(i, f(i), w_i)$ as input, outputs 1 if $ \e(C/g^{f(i)}, \break h) = \e( w_i, h^{\alpha}/h^i)$. 
	
\end{itemize}

\noindent The witness $w_i$ proves that $f(i)$ is a correct evaluation at $i \in \Z $, without revealing the polynomial $f(x)$. The binding property is based on the $t$-Strong Diffie-Hellman assumption \cite{kate2010constant}, while the hiding property is based on the Discrete Logarithm (DL) assumption. If the KZG commitment scheme is used in DPSS \cite{maram2019churp}, we can hold users accountable in a committee. In particular, the KZG commitment scheme is publicly verifiable if we append users' commitments and witnesses to blockchain. We assume that they are confirmed in the blockchain using Proof-of-Work (PoW) consensus (it is not difficult to extend this assumption to other consensus like Proof of Stake \cite{proof_of_stake}, Proof of Space \cite{dziembowski2015proofs}). In this work, a polynomial commitment scheme is used in DPSS. 

\section{Models and Definitions}

\subsection{System Model}

The system model involves three types of entities: user, modifier, and miner, in which the entities can intersect, such as a user can be a modifier and/or a miner. The communication model considers both on-chain and off-chain settings. The on-chain setting is the permissionless blockchain, where {\it read} is public, but {\it write} is granted to anyone who can show PoW. The off-chain setting assumes that every user has a point-to-point (P2P) channel with every other users. One may use Tor or transaction ghosting to establish a P2P channel \cite{maram2019churp}, and further detail is given in Appendix B. Such P2P channel works in a synchronous model, i.e., any message sent via this channel is received within a known bounded time-period.

\begin{figure}	
\centering
\includegraphics[scale=0.22]{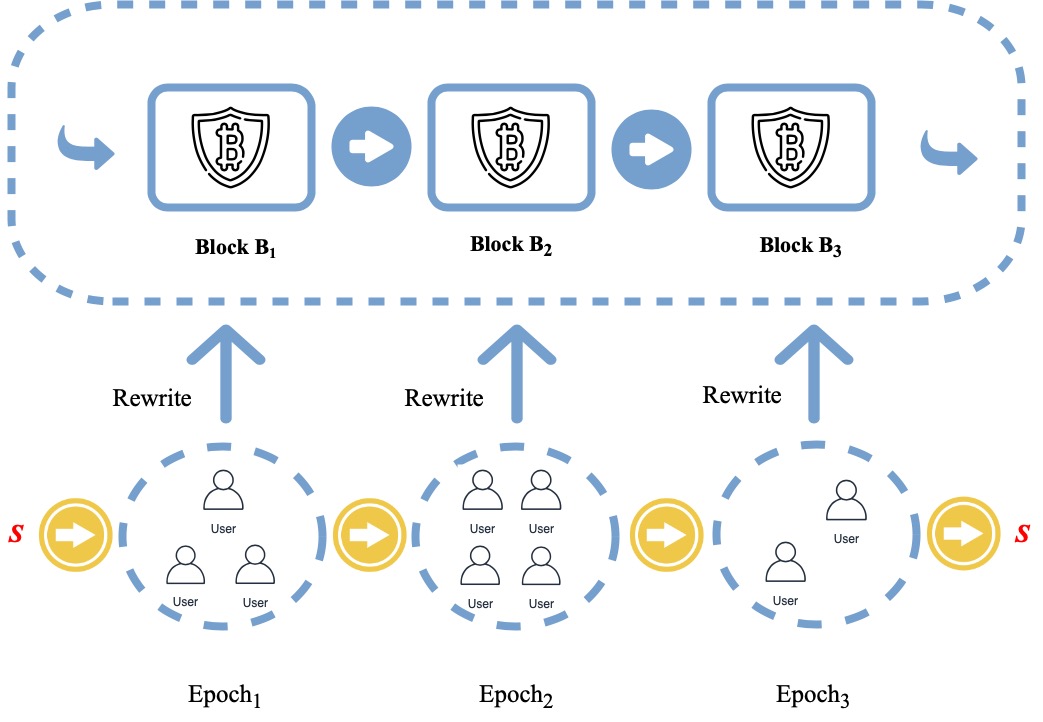}
\centering\caption{Blockchain rewriting with dynamic committees. Users may join in or leave from a committee, and a designated modifier in a committee can rewrite the blockchain. The secret $s$ remains fixed across different committees.}
\label{sys}
\end{figure}

The system proceeds in fixed time periods called epochs. In the first epoch, a committee election protocol (e.g., Algorand's $BA*$ protocol \cite{gilad2017algorand}, or other methods \cite{luu2016secure,zamani2018rapidchain}) is executed, so that a set of users can agree on an initial committee with Byzantine fault tolerance (e.g., up to 1/3 malicious members). The secret $s$ in the initial committee can be generated by an honest user (e.g., committee leader) or in a distributed fashion \cite{gennaro1999secure}. The secret $s$ is shared among the committee members. Similarly, the setup of the commitment scheme can be performed by an honest user in the initial committee.

In Figure \ref{sys}, a blockchain is generated by users who append their hashed contents to the blockchain. Later, modifiers with enough rewriting privileges are required to rewrite the hashed contents. We stress that the link of hash-chain remains intact after rewriting, and the secret remains fixed across different committees. We assume at most $n$ users (i.e., protocol participants) exist in each epoch. We consider $k$ dynamic committees, each of which has a varying number of committee members, and we denote $n_0 \leq n$ as a committee's size. The parameters $n$ and $k$ are independent. We also consider dynamic churn (i.e., join/leave) of the protocol participants. In particular, we do not assume that $k$ committees exist in each different epoch (or we allow several committees to exist in the same epoch).

\noindent\textbf{Remark.} To prevent a malicious user from controlling a committee by launching Sybil attacks \cite{douceur2002sybil}, we rely on the PoW-based identity generation mechanism \cite{luu2016secure,andrychowicz2015pow}. The mechanism allows all users to establish their identities in a committee, yet limiting the number of Sybil identities created by a malicious user. In Elastico \cite{luu2016secure}, each user locally generates/establishes an identity consisting of a public key, an IP address, and a PoW solution. The user must solve a PoW puzzle which has publicly verifiable solutions to generate the final component of the identity. A PoW solution also allows other committee members to verify and accept the identity of a user. Because solving PoW requires computation, the number of identities that the malicious user can create is limited by a fraction of malicious computational power. One can refer to \cite{luu2016secure,zamani2018rapidchain,yu2020ohie} for the detailed discussion on Byzantine fault resiliency. 

\subsection{Definition} 

An accountable fine-grained chameleon hash with dynamic committees consists of the following algorithms.

\begin{itemize}
	\item{${\sf Setup}(1^{\lambda})$:} It takes a security parameter $\lambda$ as input, outputs a master key pair $(\msk, \mpk )$. Note that $\msk$ is shared by committee members.  
	
	\item{${\sf KeyGen}({\sf C}_i^e, \Lambda )$:} It takes a committee ${\sf C}_i^e$, and a policy $\Lambda$ as input, outputs a secret key $\sk_{\Lambda_i}$. The committee index $i \in \{ 1, \cdots, k \}$, where $k$ denotes the total number of committees. 
		
	\item{${\sf Hash}(\mpk, m, \delta, j)$:} It takes the master public key $\mpk$, a message $m \in \calM $, a set of attributes $\delta \in \calU$, and an index $j \in \{ 1, \cdots, k+1 \}$ as input, outputs a chameleon hash $h$, a randomness $r$, and a signature $\sigma$. Note that $\calM = \{ 0, 1 \}^*$ denotes a general message space.  
	
	\item{${\sf Verify}( \mpk, h, m, r, \sigma)$:} It takes the master public key $\mpk$, chameleon hash $h$, message $m$, randomness $r$, signature $\sigma$ as input, output a bit $b \in \{ 0, 1 \}$. 
	 
	\item{${\sf Adapt}(\sk_{\Lambda_i}, h, m, m', r , \sigma )$:} It takes the secret key $\sk_{\Lambda_i}$, chameleon hash $h$, messages $m$ and $m'$, randomness $r$, and signature $\sigma$ as input, outputs $r'$ and $\sigma'$ if $1 = \Lambda (\delta)$ and $ i \leq j$.  
	
	\item{${\sf Judge}( \mpk, T' )$:} It takes the master public key $\mpk$, a modified transaction $T'$ as input, outputs a linked transaction-committee pair $(T', {\sf C}_i^e)$, where $T' = (h, m', r', \sigma' )$. It means a user with a rewriting privilege from committee ${\sf C}_i^e$ has modified transaction $T = (h, m, r, \sigma)$. 
	
\end{itemize}

\noindent\textbf{Correctness.} The definition is {\it correct} if for all security parameters $\lambda$, for all $\delta \in \calU$, all keys $(\msk, \mpk) \leftarrow {\sf Setup} (1^{\lambda})$, for all $\delta \in \Lambda$, for all $i \leq j$, for all $\sk_{\Lambda_i} \leftarrow {\sf KeyGen} ({\sf C}_i^e, \Lambda )$, for all $m \in \calM$, for all $ (h, r, \sigma) \leftarrow {\sf Hash} (\mpk, m , \delta, j ) $, for all $m' \in \calM$, for all $ r' \leftarrow {\sf Adapt} (\sk_{\Lambda_i}, m, m', h, r, \sigma) $, we have $ 1 =  {\sf Verify} (\mpk, h, m, r, \sigma) = {\sf Verify} (\mpk, h, m', r', \sigma' ) $.

\subsection{Security Model}

We consider three security guarantees, including indistinguishability, adaptive collision-resistance, and accountability. 

\begin{itemize}
    \item{\it Indistinguishability.} Informally, an adversary cannot decide whether for a chameleon hash its randomness was freshly generated using {\sf Hash} algorithm or was created using {\sf Adapt} algorithm even if the secret key is known. We define a formal experiment between an adversary $\calA$ and a simulator $\calS$ in Figure \ref{IND}. The security experiment allows $\calA$ to access a left-or-right {\sf HashOrAdapt} oracle, which ensures that the randomness does not reveal whether it was obtained from {\sf Hash} or {\sf Adapt} algorithm. The hashed messages are adaptively chosen from the same message space $\calM$ by $\calA$. 

\begin{figure}
\centering
\scriptsize
	\fbox{\parbox{0.47\textwidth}{
\[\begin{array}{l}
\text{Experiment}~ {\sf Exp}_{ \calA }^{\sf IND}( \lambda ) \\

({\sf C}_i^e, \mpk) \leftarrow {\sf Setup}(1^{\lambda}) , b \leftarrow \{ 0,1 \} \\

b^* \leftarrow \calA^{ {\sf HashOrAdapt}({\sf C}_i^e, \cdots, b) } (\msk ) \\

\quad \text{where}~ {\sf HashOrAdapt}({\sf C}_i^e, \cdots, b)~\text{on input}~ m, m',\delta, \Lambda, j, j' : \\

\quad\quad \sk_{\Lambda_i} \leftarrow {\sf KeyGen}({\sf C}_i^e, \Lambda ) \\

\quad\quad (h_0, r_0, \sigma_0) \leftarrow {\sf Hash}( \mpk, m', \delta, j' ) \\

\quad\quad (h_1, r_1, \sigma_1) \leftarrow {\sf Hash}( \mpk, m, \delta, j ) \\

\quad\quad r_1 \leftarrow {\sf Adapt}(\sk_{\Lambda_i}, m, m', h_1, r_1, \sigma_1 ) \\

\quad\quad \text{return}~ \bot ~\text{if}~ r_0 = \bot \vee r_1 = \bot  \\

\quad\quad \text{return}~ (h_b, r_b, \sigma_b) \\

\text{return 1, if}~ b^* = b; \text{else, return}~ 0.

\end{array}\]
}}
\caption{Indistinguishability.}
\label{IND}
\end{figure}

\noindent We require $1 = \Lambda(\delta) $ and ${\sf Verify} (\mpk, m', h_0, r_0, \sigma_0) = {\sf Verify} (\mpk, m,\break h_1, r_1, \sigma_1) = 1$. Note that $\msk$ is shared by committee members, such that ${\sf C}_i^e = \{ s_i^e \}^{n_0}$. We define the advantage of the adversary as
\begin{equation*}
 \adv_{\mathcal{A} }^{\sf IND}(\lambda ) = |
 \prob[{\sf Exp}_{ \calA }^{\sf IND}(1^{\lambda } ) \rightarrow 1] - 1/2|.
\end{equation*}

\begin{definition} 
The proposed generic framework is indistinguishable if for any probabilistic polynomial-time (PPT) $\calA$, $\adv_{\mathcal{A}}^{\sf IND} (\lambda)$ is negligible in $\lambda$. 
\end{definition}

\item{\it Adaptive Collision-Resistance.} Informally, a mobile adversary can find collisions for a chameleon hash if she possesses a secret key satisfies an attribute set associated with the chameleon hash (this condition is modelled by {\sf KeyGen'} oracle). We define a formal experiment in Figure \ref{Coll}. We allow $\calA$ to see collisions for arbitrary access policies (i.e., {\sf KeyGen''} and {\sf Adapt'} oracles). We also allow $\calA$ to corrupt a threshold number of shareholders (i.e., {\sf Corrupt} oracle) in a committee. Note that the key shares can be transferred between committees while $\msk$ is fixed. 

\begin{figure}
\centering
\scriptsize	
\fbox{\parbox{0.47\textwidth}{
\[\begin{array}{l}
\text{Experiment}~ {\sf Exp}_{ \calA }^{\sf ACR}( \lambda ) \\

({\sf C}_i^e , \mpk) \leftarrow {\sf Setup}(1^{\lambda}), \calQ_1, \calQ_2, \calQ_3 \leftarrow \emptyset \\

( m^*, r^*, m^{*'}, r^{*'}, h^*, \sigma^*, \sigma^{*'} ) \leftarrow \calA^{ \calO } ( \mpk ) \\

\quad \text{where}~ \calO \leftarrow \{ {\sf KeyGen'}, {\sf KeyGen''}, {\sf Hash'}, {\sf Adapt'}, {\sf Corrupt} \} \\

\quad \text{and}~{\sf KeyGen'}({\sf C}_i^e, \cdot )~\text{on input}~ \Lambda: \\
 
\quad\quad \msk_{\Lambda_i} \leftarrow {\sf KeyGen}({\sf C}_i^e, \Lambda) \\

\quad\quad \calQ_1 \leftarrow \calQ_1 \cup \{ \Lambda \} \\

\quad\quad \text{return}~\msk_{\Lambda_i} \\

\quad \text{and}~{\sf KeyGen''}({\sf C}_i^e, \cdot )~\text{on input}~ \Lambda: \\
 
\quad\quad \msk_{\Lambda_i} \leftarrow {\sf KeyGen}({\sf C}_i^e, \Lambda) \\

\quad\quad \calQ_2 \cup \{ (i, \msk_{\Lambda_i} ) \} \\

\quad\quad i \leftarrow i+1 \\
 
\quad \text{and}~{\sf Hash'}(\mpk, \cdots )~\text{on input}~ m, \delta, j: \\
 
\quad\quad (h, r, \sigma) \leftarrow {\sf Hash}( \mpk, m, \delta, j ) \\

\quad\quad \calQ_3 \leftarrow \calQ_3 \cup \{ (h, m, \delta ) \} \\

\quad\quad \text{return}~(h, r, \sigma) \\

\quad \text{and}~{\sf Adapt'}(\mpk, \cdots )~\text{on input}~ m, m', h, r, i, \sigma : \\

\quad\quad \text{return}~\bot, \text{if}~ (i, \msk_{\Lambda_i}) \notin \calQ_2 ~\text{for some}~ \msk_{\Lambda_i} \\

\quad\quad r' \leftarrow {\sf Adapt} ( \msk_{\Lambda_i}, m, m', h, r, \sigma ) \\

\quad\quad \text{if}~(h, m, \delta) \in \calQ_3 ~\text{for some}~\delta, \\

\quad\quad \text{let}~ \calQ_3 \leftarrow \calQ_3 \cup \{ (h, m', \delta) \} \\

\quad\quad \text{return}~r' \\

\quad \text{and}~ {\sf Corrupt} (\mpk, \cdot )~\text{on input}~ {\sf C}_i^e : \\

\quad\quad \text{return}~ \{ s_i^e \}^t  \\

\text{return 1, if} \\

1 = {\sf Verify} (\mpk, m^*, h^*, r^*, \sigma^*) = {\sf Verify} (\mpk, m^{*'}, h^*, r^{*'}, \sigma^{*'}) \\

\wedge (h^*, \cdot, \delta) \in \calQ_3, \text{for some}~\delta \wedge m^* \neq m^{*'} \\

 \wedge \delta \cap \calQ_1 = \emptyset \wedge (h^*, m^*, \cdot ) \notin \calQ_3 \\

\text{else, return}~ 0. \\

\end{array}\]

}}

\caption{Adaptive Collision-Resistance.}
\label{Coll}
\end{figure}

\noindent $\calA$ is not allowed to corrupt more than a threshold number of shareholders in any committee. We define the advantage of the adversary as
\begin{equation*}
\adv_{\mathcal{A} }^{\sf ACR}(\lambda ) = \prob[{\sf Exp}_{ \calA }^{\sf ACR}(1^{\lambda } ) \rightarrow 1].
\end{equation*}

\begin{definition} 
The proposed generic framework is adaptively collision-resistant if for any PPT $\calA$, $\adv_{\mathcal{A}}^{\sf ACR} (\lambda)$ is negligible in $\lambda$. 
\end{definition}

\item{\it Accountability.} Informally, an adversary cannot generate a bogus message-signature pair for a chameleon hash, which links a user to an accused committee, but the user has never participated in the accused committee. We define a formal experiment in Figure \ref{Acc}. We allow $\calA$ to see whether a modified transaction links to a committee (i.e., {\sf Judge'} oracle). Let set $\calQ$ record the transactions produced by the {\sf Judge'} oracle. 
\begin{figure}
\centering
\scriptsize
\fbox{\parbox{0.47\textwidth}{
\[\begin{array}{ll}
\text{Experiment}~ {\sf Exp}_{ \calA }^{\sf ACT}( \lambda  ) \\

({\sf C}_i^e, \mpk) \leftarrow {\sf Setup}(1^{\lambda}), \mathcal{Q} \leftarrow \emptyset \\

%T \leftarrow {\sf Hash}( \mpk, m, \delta, j ) \\

T^{*} \leftarrow \calA^{ {\sf Judge'}( {\sf C}_i^e , \cdots ) } (\mpk) \\

\quad \text{where}~ {\sf Judge'}({\sf C}_i^e, \cdots )~\text{on input}~ T, \Lambda, m' : \\

\quad\quad \msk_{\Lambda_i} \leftarrow {\sf KeyGen}({\sf C}_i^e, \Lambda) \\

\quad\quad r' \leftarrow {\sf Adapt}(\msk_{\Lambda_i}, T, m' ) \\

\quad\quad \calQ \leftarrow \calQ \cup \{ (T, T' ) \} \\

\quad\quad (T' , {\sf C}_i^e ) \leftarrow {\sf Judge} ({\sf C}_i^e, T')  \\

\quad\quad \text{return}~ (T' , {\sf C}_i^e ) \\

\text{return}~ 1, \text{if}~( T^{*}, {\sf C}^*) \wedge (T^* \notin \calQ \vee \pk^* \notin {\sf C}^* )  \\

\text{else, return 0}.

\end{array}\]
}}

\caption{Accountability.}	
\label{Acc}
\end{figure}

\noindent We denote $T = (h, m, r, \sigma)$ and $T' = (h, m', r', \sigma')$ as original and modified transactions with respect to chameleon hash $h$. We also denote the linked transaction-committee pair as $( T', {\sf C}_i^e )$. We define the advantage of the adversary as
\begin{equation*}
\adv_{\mathcal{A} }^{\sf ACT}(\lambda ) = \prob[{\sf Exp}_{ \calA }^{\sf ACT}(1^{\lambda } ) \rightarrow 1].
\end{equation*}

\begin{definition} 
The proposed generic framework is accountable if for any PPT $\calA$, $\adv_{\mathcal{A}}^{\sf ACT} (\lambda)$ is negligible in $\lambda$. 
\end{definition}

\end{itemize}

\section{Generic Construction}

The proposed generic construction consists of the following building blocks.

\begin{itemize}
	\item A chameleon hash scheme {\sf CH} = {\sf (Setup, KeyGen, Hash, Verify, Adapt)}.
	
	\item An attribute-based encryption scheme with public traceability {\sf ABET} = {\sf (Setup, KeyGen, Enc, Dec, Trace)}. 
	
	\item A dynamic proactive secret sharing scheme {\sf DPSS} = ({\sf Share}, {\sf Redistribute}, {\sf Open}). 
	
	\item A digital signature scheme $\Sigma$ = ({\sf Setup}, {\sf KeyGen}, {\sf Sign}, {\sf Verify}).
	
\end{itemize}

\noindent\textbf{High-level Description.} We assume that every user has a key pair $(\sk, \pk)$ and that users' public keys (i.e., users' identities) are known to all users in a committee. Meanwhile, each user possesses a set of attributes. In particular, more than a threshold number of users in a committee can collectively grant a rewriting privilege to a modifier based on her attribute set. A user with $\pk$ creates a transaction $T$ that includes a chameleon hash, a ciphertext under his attribute set, and a signature (i.e., signs $T$ using his secret key $\sk$). A modifier with $\pk'$, who is granted the rewriting privilege from a committee, is allowed to rewrite the transaction $T$, and signs the modified transaction using her secret key $\sk'$. We assume the $t$-out-of-$n_0$ {\sf DPSS} scheme to be executed over off-chain P2P channels and let all $k$ committees have the same parameters ($t, n_0$). The proposed construction is shown below.

\begin{itemize}
	\item{${\sf Setup}(1^{\lambda})$:} A user takes a security parameter $\lambda$ as input, outputs a public parameter $ {\sf PP} = (\mpk_{\sf ABET}, {\sf PP}_{\Sigma}, {\sf PP}_{\sf CH})$, and a secret key $\msk_{\sf ABET}$, where $(\msk_{\sf ABET}, \mpk_{\sf ABET}) \leftarrow {\sf Setup}_{\sf ABET}(1^{\lambda})$, ${\sf PP}_{\Sigma} \leftarrow {\sf Setup}_{\Sigma}( 1^{\lambda} )$, $ {\sf PP}_{\sf CH} \leftarrow {\sf Setup}_{\sf CH} (1^{\lambda}) $. The key shares $\{ s_0 \}^{n_0} \leftarrow {\sf Share}_{\sf DPSS}(\msk_{\sf ABET}) $ are distributed to users within committee ${\sf C}^{0}$, where each user holds a key share, and a key pair $(\sk, \pk) \leftarrow {\sf KeyGen}_{\Sigma}( {\sf PP}_{\Sigma}) $.
	
    \item{${\sf KeyGen}( {\sf C}_i^e, \Lambda )$:} A group of $t$+1 users in committee ${\sf C}_i^e$ take their secret shares $\{ s_e \}^{t+1}$ and a policy $\Lambda$ as input, output a secret key $\sk_{\Lambda_i}$ for a modifier, where $\sk_{\Lambda_i} \leftarrow {\sf KeyGen}_{\sf ABET}(\msk_{\sf ABET}, \Lambda )$, $\msk_{\sf ABET} \leftarrow {\sf Open}_{\sf DPSS} ( \{ s_e \}^{t+1} ) $, and secret shares $\{ s_e \}^{n_0} \leftarrow {\sf Redistribute}_{\sf DPSS} ( \{ s_{e-1} \}^{n_0} )$. Note that one of $t$+1 users can be the modifier. 
		
	\item{${\sf Hash}({\sf PP}, m, \delta, j)$:} A user appends a message $m$, a set of attributes $\delta$, and an index $j$ to the blockchain, performs the following operations
	
	\begin{enumerate}
		
		\item generate a chameleon hash $ (h_{\sf CH}, {\sf r}) \leftarrow {\sf Hash}_{\sf CH}( \pk_{\sf CH} , m) $, where $ (\sk_{\sf CH}, \pk_{\sf CH}) \leftarrow {\sf KeyGen}_{\sf CH} ( {\sf PP}_{\sf CH} )$.
				
		\item generate a ciphertext $C \leftarrow {\sf Enc}_{\sf ABET}(\mpk_{\sf ABET}, \sk_{\sf CH}, \break \delta, j )$, where $\sk_{\sf CH}$ denotes the encrypted message.  
		
		\item generate a message-signature pair $(c, \sigma_{\Sigma})$, where $\sigma_{\Sigma} \leftarrow {\sf Sign}_{\Sigma} \break (\sk, c) $, and message $c$ is derived from $\sk$ and $\sk_{\sf CH}$.
	
		\item output $( h, m, {\sf r}, \sigma)$, where $h \leftarrow (h_{\sf CH}, \pk_{\sf CH}, C )$, and $\sigma \leftarrow (c, \sigma_{\Sigma})$.
	\end{enumerate}
	
	\item{${\sf Verify}({\sf PP}, h, m, {\sf r}, \sigma)$:} It outputs 1 if $1 \leftarrow {\sf Verify}_{\sf CH}( \pk_{\sf CH}, m, h_{\sf CH}, {\sf r} )$ and $1 \leftarrow {\sf Verify}_{\Sigma}(\pk, c, \sigma_{\Sigma})$, and 0 otherwise.
	
	\item{${\sf Adapt}(\sk_{\Lambda_i}, h, m, m', {\sf r}, \sigma )$:} A modifier with a secret key $ \sk_{\Lambda_i}$ and a new message $m'$, performs the following operations
	
	\begin{enumerate}
		
		\item check $ 1 \iseq {\sf Verify}({\sf PP}, h, m, {\sf r}, \sigma ) $.
		
		\item compute $\sk_{\sf CH} \leftarrow {\sf Dec}_{\sf ABET} (\mpk_{\sf ABET}, C, \sk_{\Lambda_i})$.
        
		\item compute a new randomness ${\sf r'} \leftarrow {\sf Adapt}_{\sf CH}( \sk_{\sf CH}, m, m', \break h, {\sf r})$.
		
		\item generate a ciphertext $ C' \leftarrow {\sf Enc}_{\sf ABET} ( \mpk_{\sf ABET}, \sk_{\sf CH}, \break \delta, j )$.
	
		\item generate a message-signature pair $(c', \sigma_{\Sigma}')$, where $c'$ is derived from $\sk'$ and $\sk_{\sf CH}$.
		
		\item output $(h, m', {\sf r'}, \sigma')$, where $h \leftarrow (h_{\sf CH}, \pk_{\sf CH}, C' )$, and $\sigma' \leftarrow (c', \sigma_{\Sigma}')$.
	\end{enumerate}
	
	\item{${\sf Judge} ( {\sf PP}, T' )$:} It takes the public parameter {\sf PP}, and a modified transaction $T'$ as input, outputs a transaction-committee pair $(T', {\sf C}_i^e)$ if the modified transaction $T'$ links to a committee ${\sf C}_i^e$, where $T' = ( h, m', {\sf r}', \sigma' )$. 
				
\end{itemize}

\noindent\textbf{Correctness.} The ${\sf Judge}$ algorithm allows any public user to identify the responsible modifiers and committees given a modified transaction. The modifier (or modifier's public key) is publicly known because a digital signature is used in the construction. Below, we explain the judge process in detail.

First, any public user verifies a connection between a transaction $T$ and its modified version $T'$. The connection can be established, as both message-signature pair $(c, \sigma_{\Sigma})$ in $T$ and message-signature pair $(c', \sigma_{\Sigma}')$ in $T'$, are derived from the same chameleon trapdoor $\sk_{\sf CH}$. Since different modifiers may modify the same transaction, the chameleon trapdoor $\sk_{\sf CH}$ is used in many modified versions of a transaction. Here, we consider a single modified transaction $T'$ for simplicity.

Second, any public user obtains a set of accused committees from interacting with an access blackbox $\calO$, such that $ \{ {\sf C} \} \leftarrow {\sf Trace}_{\sf ABET} ( \mpk_{\sf ABET}, \calO, \epsilon ) $. Specifically, the public sends a ciphertext encrypted a message under a set of attributes (that satisfies the access privilege involved in $\calO$) and a committee index $j \in \{ 1, \cdots, k+1 \}$ to $\calO$. Then, the public outputs the committee index $j$ (we call it accused committee) if decryption succeeds. The public repeats this tracing procedure until output all accused committees. 

Third, if a user with $\pk'$ acts as a modifier in an accused committee, the public outputs $(T', {\sf C})$. It means that a transaction $T'$ is indeed modified by the user $\pk'$ whose rewriting privilege is granted from committee ${\sf C}$. Because we allow the commitment scheme to be used in {\sf DPSS}, the user $\pk'$ is held accountable in a committee. More specifically, user $\pk'$ joins in committee ${\sf C}$ by showing a commitment on his key share to other committee members, and further detail is given in the instantiation. If user $\pk'$ acts as modifiers for many accused committees, the public outputs $(T', \{ {\sf C} \})$. However, if user $\pk'$ did not join in any accused committees, the public still outputs the indexes of the accused committees. This is the second case of blockchain rewriting: an unauthorized user has no granted rewriting privileges from any committee but holds an access blackbox.

To conclude, we achieve public accountability via three steps: 1) Verify a modified transaction; 2) Find an accused committee; 3) Link the modified transaction to the accused committee. We also consider a committee has multiple modifiers with different rewriting privileges, but they should have the same committee index. In this case, the public still identifies the modifiers (i.e., holding different rewriting privileges) in the same committee as the modifiers sign the modified transactions using their secret keys. 
  
\smallskip\noindent\textbf{Remark.} One may notice that a modifier $\pk'$ can assign a new set of attributes $\delta'$ to a modified transaction $T'$ in the ${\sf Adapt}$ algorithm. In other words, the attribute set associated with a mutable transaction can be dynamically updated to satisfy different security requirements in case blockchain system evolves. Therefore, we remark that the proposed generic construction supports fine-grained and flexible blockchain rewriting. Besides, the modifier can use a new index to create $T'$. One may also notice that such flexibility could be misused. The modifier $\pk'$ may intend to rewrite the transaction $T$ with malicious content and change the rewriting privileges to disallow others to modify the transaction. We argue that more than a threshold number of committee members can collectively reset the transaction $T$'s attribute set if such malicious behavior happens. If flexibility is not desired in accountable blockchain rewriting, one can let the transaction owner sign the embedded attribute set so that the modifier rewrites the transaction $T$ only without changing its rewriting privilege. 

The second remark is that we only allow blockchain rewriting that does not affect a transaction's past and future events. If a modifier removes a transaction entirely or changes spendable data of a transaction, it may lead to serious transaction inconsistencies in the chain \cite{deuber2019redactable}. The last remark is, the committee members can reconstruct the secret $\msk_{\sf ABET}$ as DPSS includes the {\sf Open} algorithm. If any committee member rewrites blockchain maliciously, he/she is held accountable because the modified transaction is signed by him/her.

\smallskip\noindent\textbf{Security Analysis.} We show the security result of our proposed construction, and the detailed proofs are referred to Appendix C. 

\begin{theorem}\label{the_ind}	
	The proposed generic framework is indistinguishable if the {\sf CH} scheme is indistinguishable. 
\end{theorem} 

\begin{theorem}\label{the_cr}	
	The proposed generic framework is adaptively collision-resistant if the {\sf ABET} scheme is semantically secure, the {\sf CH} scheme is collision-resistant, and the {\sf DPSS} scheme has secrecy.
\end{theorem}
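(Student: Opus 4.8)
The plan is to prove the theorem by a short sequence of game hops that successively removes from the adversary $\calA$ every avenue for learning the chameleon trapdoor $\sk_{\sf CH}$, which is hidden as an {\sf ABET} plaintext inside the hash that $\calA$ eventually attacks; once that is done, a winning $\calA$ is nothing but a collision-finder against {\sf CH}. In the first hop I would replace the committee-based key management by a single simulator holding $\msk_{\sf ABET}$ in the clear, which answers {\sf KeyGen'} and {\sf KeyGen''} by running ${\sf KeyGen}_{\sf ABET}$ directly and answers {\sf Corrupt} by invoking the {\sf DPSS} simulator. Since $\calA$ never corrupts more than $t$ shareholders in a single epoch, secrecy of {\sf DPSS} makes this change invisible and, more importantly, decouples the rest of the argument from the sharing layer.

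Next I would guess the target hash. Let $q_H$ bound the {\sf Hash'} queries; the simulator picks $i^\ast\getsr\{1,\dots,q_H\}$ and aborts unless the forged hash $h^\ast$ equals the output $h_{i^\ast}$ of that query, which is legitimate because the winning predicate requires $(h^\ast,\cdot,\delta)\in\calQ_3$ and new chameleon-hash values enter $\calQ_3$ only through {\sf Hash'}. Writing $\delta^\ast,j^\ast$ for the attribute set and index used at query $i^\ast$, the third hop replaces the ciphertext component of that hash by ${\sf Enc}_{\sf ABET}(\mpk_{\sf ABET},0,\delta^\ast,j^\ast)$, i.e.\ it stops encrypting $\sk_{\sf CH}$. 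A distinguisher here breaks semantic security of {\sf ABET}: the reduction forwards every {\sf KeyGen'} query to the {\sf ABET} key-generation oracle — admissible because winning forces $\delta^\ast\cap\calQ_1=\emptyset$ — and plants the challenge ciphertext on $(m_0,m_1)=(\sk_{\sf CH},0)$ at query $i^\ast$; since it still knows $\sk_{\sf CH}$ and the trapdoors of all other hashes, it answers every {\sf Adapt'} query (including adaptations of $h_{i^\ast}$) without ever calling ${\sf Dec}_{\sf ABET}$, hence never needs an {\sf ABET} key for a policy satisfying $\delta^\ast$.

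After this hop the value $\sk_{\sf CH}$ of $h_{i^\ast}$ survives in $\calA$'s view only through $\pk_{\sf CH}$, through {\sf Adapt'}-answers on $h_{i^\ast}$, and through the signature-linking value, none of which expose the trapdoor, so the final step is a reduction $\calB$ to collision-resistance of {\sf CH}: $\calB$ imports $\pk_{\sf CH}$ from its challenger, uses it for query $i^\ast$, routes {\sf Adapt'} queries on $h_{i^\ast}$ to its own {\sf Adapt} oracle, and simulates everything else as in the last game. When $\calA$ returns $(m^\ast,r^\ast,m^{\ast\prime},r^{\ast\prime},h^\ast=h_{i^\ast},\sigma^\ast,\sigma^{\ast\prime})$, the pair $((m^\ast,r^\ast),(m^{\ast\prime},r^{\ast\prime}))$ is a {\sf CH} collision: both open $h^\ast$ by the verification checks in the predicate, $m^\ast\neq m^{\ast\prime}$, and $(h^\ast,m^\ast,\cdot)\notin\calQ_3$ guarantees $\calB$ never obtained that opening from its {\sf Adapt} oracle. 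Summing the bounds gives $\adv_\calA^{\sf ACR}(\lambda)\le\adv^{\sf DPSS}(\lambda)+q_H\bigl(\adv^{\sf ABET}(\lambda)+\adv^{\sf CR}_{\sf CH}(\lambda)\bigr)$, which is negligible.

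I expect the main obstacle to be the bookkeeping around the {\sf Adapt'} and {\sf KeyGen''} oracles: one must argue carefully that adaptations of the target hash can always be answered — through the {\sf CH} {\sf Adapt} oracle in the last reduction, and through directly-known trapdoors in the {\sf ABET} reduction — so that no reduction ever requests a decryption key that would be forbidden in the {\sf ABET} game, and that the late-revealed constraint $\delta^\ast\cap\calQ_1=\emptyset$ is compatible with forwarding pre-challenge {\sf KeyGen'} queries, the standard fix being the guessing step together with an abort whenever the guess is inconsistent with the query history. A secondary point to pin down in the instantiation is that the message $c$ ``derived from $\sk$ and $\sk_{\sf CH}$'' is formed so that publishing $(c,\sigma_\Sigma)$ leaks nothing about $\sk_{\sf CH}$, since each reduction must be able to produce these components without the chameleon trapdoor.
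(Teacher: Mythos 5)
Your proposal matches the paper's proof in both structure and substance: the same four game hops (DPSS secrecy to absorb the committee layer, a guess over the {\sf Hash'} queries to locate the target hash, an {\sf ABET} semantic-security hop replacing the encrypted trapdoor $\sk_{\sf CH}$ by a dummy plaintext, and a final reduction to {\sf CH} collision-resistance), with the same admissibility arguments for the oracle simulations. The only cosmetic difference is the placement of the guessing loss in the final bound (you keep the {\sf DPSS} term outside the $q_H$ factor, which is slightly tighter than the paper's $n(\lambda)\cdot(\adv^{\sf DPSS}+\adv^{\sf ABET}+\adv^{\sf CH})$), and your closing remarks correctly flag the selective-security and leakage-of-$c$ subtleties that the paper itself glosses over.
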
 

\begin{theorem}\label{the_acc}	
	The proposed generic framework is accountable if the $\Sigma$ scheme is existential unforgeability under chosen message attack (EUF-CMA) secure, and the {\sf DPSS} scheme has correctness. 
\end{theorem}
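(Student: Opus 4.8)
The plan is a two-pronged reduction that partitions the winning event of ${\sf Exp}^{\sf ACT}_{\calA}(\lambda)$ and charges one part to the unforgeability of $\Sigma$ and the other to the correctness of {\sf DPSS}. Fix a PPT adversary $\calA$ with advantage $\epsilon=\adv^{\sf ACT}_{\calA}(\lambda)$. When $\calA$ wins it outputs $T^{*}=(h^{*},m^{*},{\sf r}^{*},\sigma^{*})$ with $\sigma^{*}=(c^{*},\sigma^{*}_{\Sigma})$ such that ${\sf Judge}(\mpk,T^{*})$ returns a valid pair $(T^{*},{\sf C}^{*})$ and moreover either (i)~$T^{*}\notin\calQ$, i.e.\ the modified transaction was never produced by the ${\sf Judge'}$ oracle, or (ii)~$T^{*}\in\calQ$ but the modifier's public key $\pk^{*}$ recovered from $\sigma^{*}$ was never a member of ${\sf C}^{*}$. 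Call these events $\event_{1},\event_{2}$, so $\epsilon\le\Pr[\event_{1}]+\Pr[\event_{2}]$.

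First I would handle $\event_{1}$ by a reduction $\calB$ to $\eufcma$-security of $\Sigma$. The only signatures produced on behalf of honest parties inside the experiment are the modifier signatures emitted in step~5 of ${\sf Adapt}$ during ${\sf Judge'}$ queries, so $\calB$ receives a challenge verification key $\pk^{\dagger}$, guesses which of the polynomially many users appearing in the game $\calA$ will frame, and plants $\pk^{\dagger}$ as that user's public key; all remaining keys and the {\sf CH}/{\sf ABET}/{\sf DPSS} material are generated honestly, so $\calB$ answers every ${\sf Judge'}$ query exactly as in the real experiment, invoking its signing oracle whenever a signature under $\pk^{\dagger}$ is needed. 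If $\calA$ wins via $\event_{1}$ with $\pk^{*}=\pk^{\dagger}$, then because $T^{*}\notin\calQ$ and because the signed message $c^{*}$ is uniquely derived from the signer's key and the chameleon trapdoor $\sk_{\sf CH}$, the pair $(c^{*},\sigma^{*}_{\Sigma})$ was never returned by the signing oracle, so $\calB$ outputs it as a valid forgery; hence $\Pr[\event_{1}]\le{\sf poly}(\lambda)\cdot\adv^{\eufcma}_{\Sigma}(\lambda)+{\sf negl}(\lambda)$.

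Next I would handle $\event_{2}$ by a reduction $\calB'$ to {\sf DPSS} correctness. Since the ${\sf ACT}$ experiment exposes no ${\sf Corrupt}$ oracle, every committee is populated by honest parties, and the committee-membership link used by ${\sf Judge}$ is precisely the on-chain {\sf DPSS} transcript together with the polynomial commitments/openings to key shares (and the committee index carried by the {\sf ABET} component) that a member posts when joining. Thus $\pk^{*}\notin{\sf C}^{*}$ while ${\sf Judge}$ nonetheless certifies $\pk^{*}\in{\sf C}^{*}$ forces the honest execution of ${\sf Share}/{\sf Redistribute}/{\sf Open}$ to have produced a transcript inconsistent with the fixed secret $\msk_{\sf ABET}$, i.e.\ some honest party's ${\sf Open}$ output differs from the shared value; $\calB'$ runs the game honestly and, upon $\event_{2}$, extracts this offending share/transcript, contradicting correctness. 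Hence $\Pr[\event_{2}]\le\adv^{\sf corr}_{\sf DPSS}(\lambda)$.

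Combining the two bounds yields $\epsilon\le{\sf poly}(\lambda)\cdot\adv^{\eufcma}_{\Sigma}(\lambda)+\adv^{\sf corr}_{\sf DPSS}(\lambda)+{\sf negl}(\lambda)$, which is negligible. The step I expect to be the main obstacle is the freshness argument inside the $\event_{1}$ reduction: ruling out that $\calA$ recycles a message--signature pair emitted by ${\sf Judge'}$ (for instance by reusing the same trapdoor $\sk_{\sf CH}$ under a new hash or a new index) so that the recovered $(c^{*},\sigma^{*}_{\Sigma})$ is a genuinely new forgery and not a replay; this forces a careful inspection of how $c$ is bound to $(\sk,\sk_{\sf CH})$ and of the collision structure of {\sf CH}. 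A secondary subtlety is pinning down exactly which sub-step of ${\sf Judge}$ the {\sf DPSS}-correctness reduction attacks, since the committee link is mediated jointly by the posted commitments and the {\sf ABET} index.
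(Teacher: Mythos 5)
Your proposal is correct in outline and takes essentially the same route as the paper: the paper likewise splits the winning event between the two assumptions, first charging a wrong reconstruction of $\msk$ by honest committee members to {\sf DPSS} correctness (its game $\mathbb{G}_1$) and then charging the remaining forgery event to $\eufcma$-security of $\Sigma$ via a reduction that guesses one of the $n$ users and plants the challenge verification key, giving the bound $\adv_{\calS}^{\sf DPSS}(\lambda)+n\cdot\adv_{\calF}^{\Sigma}(\lambda)$. The one ingredient you flag as the main obstacle but leave open --- freshness of $(c^{*},\sigma^{*}_{\Sigma})$ --- is resolved in the paper by bookkeeping at the level of message--signature pairs rather than whole transactions, and by invoking the linearity of keys and signatures of $\Sigma$ to map the adversary's signature, which may verify under a shifted key $\pk^{*}\cdot\Delta(\sk)$ with $\Delta(\sk)$ derived from $(c,c^{*})$, back to a forgery under the challenge key via $M_{\Sigma}$.
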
 

\section{Instantiation}
\label{instantiation}

\subsection{The Proposed ABET Scheme}

For constructing a practical ABET, we require that the underlying KP-ABE scheme should have minimal number of components in master secret key, while the size of the ciphertext is constant (i.e., independent of the number of committees). Therefore, we rely on a KP-ABE scheme \cite{rouselakis2013practical} and a hierarchy identity-based encryption (HIBE) scheme \cite{boneh2005hierarchical}. The KP-ABE \cite{rouselakis2013practical} can be viewed as the stepping stone to construct ABET. It has a single component in master secret key $\msk$, which requires a single execution of the DPSS. It works in prime-order group, and its security is based on $q$-type assumption in the standard model. One may use more efficient ABE schemes such as \cite{agrawal2017fame}. However, the master secret key $\msk$ has several components, which requires multiple executions of the DPSS. The HIBE \cite{boneh2005hierarchical} has constant-size ciphertext. The ciphertext has just three group elements, and the decryption requires only two pairing operations. In particular, HIBE has one master secret key, which can be shared with KP-ABE. The ABET has been studied in \cite{liu2013blackbox,ning2016traceable,lai2018making,tian2020policy}. Their schemes are based on cipher-policy ABE. They cannot be applied to open blockchains with decentralized access control (i.e., a threshold number of committee members grant decryption keys to users). Our proposed ABET scheme makes it possible based on the KP-ABE scheme.

The intertwined ABET scheme is not anonymous because its ciphertext reveals user's committee information to the public. We extend the intertwined ABET scheme into an anonymous one using asymmetric pairings, i.e., $\e: \mathbb{G} \times \mathbb{H} \rightarrow \mathbb{G}_T$ (as described in \cite{ducas2010anonymity}). The basic idea is, the index-based elements in a modifier's decryption key belong to group $\mathbb{G}$. The index-based elements in a ciphertext belong to group $\mathbb{H}$ so that the ciphertext can conceal the committee's index if the master secret key is unknown. The concrete construction of ABET is embedded in the instantiation. Below, we present Theorem \ref{theo_ABET} to show the proposed ABET scheme has semantic security and ciphertext anonymity. The security analysis is referred to Appendix D.

\begin{theorem}\label{theo_ABET}
The proposed ABET scheme achieves semantic security and ciphertext anonymity, if the $q'$-type assumption and eDDH assumptions hold in the asymmetric pairing groups. 	
\end{theorem}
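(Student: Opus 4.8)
The statement splits into two independent claims, and I would prove each by a separate reduction: semantic security from the $q'$-type assumption, and ciphertext anonymity (index-hiding) from the eDDH assumption. Recall that the scheme layers the key-policy ABE of \cite{rouselakis2013practical}, which contributes the single-component master secret key and the $q$-type structure, over the BBG hierarchical IBE of \cite{boneh2005hierarchical}, which contributes the constant-size ciphertext and encodes the committee index $j$ hierarchically so that a key of index $i$ opens a ciphertext of index $j$ whenever $j \le i$ and the policy is satisfied; the index-carrying elements are split across $\mathbb{G}$ (in keys) and $\mathbb{H}$ (in ciphertexts) via the asymmetric pairing $\e$.

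\textbf{Semantic security.} The plan is a programming reduction. Given a $q'$-type instance $({\sf par}, D, T_b)$, I would set up $\mpk$ so that the implicit master secret key is the monomial carried by the target $T_0 = g^{abc}$, and so that every published component of $\mpk$ is a known product of monomials appearing in $D$; the underlined extra terms $g^{abd}, g^{d/ab}, h^{abd}, h^{abcd}, h^{d/ab}, h^c, h^{cd/ab}$ are exactly what is needed to publish the hierarchical-index material on both the $\mathbb{G}$- and $\mathbb{H}$-sides. To answer a {\sf KeyGen} query for $(\Lambda_i, i)$ with $\Lambda_i(\delta^*) = 0$, I would invoke the unauthorised-set property of the MSP: there is a vector with nonzero first coordinate that is orthogonal to every row of ${\bf M}$ labelled by attributes of $\delta^*$; substituting it into the sharing vector makes the terms touching the master secret cancel, so the key assembles from the $\{g^{z_i}\}$-family of $D$ without ever needing the hidden monomial. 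For a query with $\Lambda_i(\delta^*) = 1$, the game restriction forces the index to lie below the challenge index, and BBG-style delegation lets me build the key from the low-level index terms. The challenge ciphertext on $m_\beta$ under $(\delta^*, j^*)$ uses $T_b$ as the blinding factor: when $T_b = g^{abc}$ it is a faithful encryption of $m_\beta$, and when $T_b = g^s$ the blinding factor is uniform and independent of $\beta$, so the adversary's guessing advantage vanishes. This yields $\adv_{\calA}^{\mathrm{sem}}(\lambda) \le \adv_{\calA}^{q'}(\lambda) + \mathrm{negl}(\lambda)$, and the $q'$-type assumption is in turn justified in the generic group model in Appendix A.

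\textbf{Ciphertext anonymity.} Here the plan is a hybrid argument over the encoding of the index. The adversary commits to $(m^*, \delta^*, j_0, j_1)$ and must distinguish an encryption under $j_0$ from one under $j_1$; I would walk from $j_0$ to $j_1$ through intermediate indices, embedding at each step an eDDH instance $(g^{a}, g^{b}, g^{ab}, h^{c}, h^{ab}, h^{1/ab}, h^{abc}, T_b)$ into the index-carrying ciphertext components, which all live in $\mathbb{H}$. The structural lever is that the index-carrying elements of every decryption key live in $\mathbb{G}$ and $\e$ is asymmetric with no efficient map $\mathbb{H}\to\mathbb{G}$, so a key cannot be transported onto the ciphertext side; hence when $T_b = h^{c/ab}$ the ciphertext is correctly distributed under one index and when $T_b = h^{s}$ its index component is uniform, letting the simulator flip the index with no knowledge of $\msk$ (queried keys being subject only to the natural restriction that none of them can itself separate $j_0$ from $j_1$). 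Summing over the polynomially many hybrid steps gives $\adv_{\calA}^{\mathrm{anon}}(\lambda) \le \mathrm{poly}(\lambda)\cdot \adv_{\calA}^{\mathrm{eDDH}}(\lambda)$.

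\textbf{Main obstacle.} The crux is the semantic-security reduction: I must verify that the monomials supplied by the $q'$-type instance simultaneously suffice to (i) publish all of $\mpk$, (ii) answer key queries for unauthorised policies via the MSP cancellation, and (iii) answer key queries for authorised policies with index below $j^*$ via hierarchical delegation, all without exposing the monomial hidden in $T_0$. Closing the exponent bookkeeping in $a,b,c,d,\{z_i\}$, and in particular confirming that the underlined terms are precisely those forced by pushing the index layer into the asymmetric setting, is where the real effort lies; by contrast the anonymity argument is comparatively routine once the $\mathbb{G}$/$\mathbb{H}$ separation of the index elements is in place. Full details are deferred to Appendix D.
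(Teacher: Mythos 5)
Your semantic-security reduction is essentially the paper's (Appendix D): a selective reduction that programs the $q'$-type instance into $\mpk$ with $\alpha$ implicitly set to $ab$, answers key queries for policies with $\Lambda_i(\delta^*)\neq 1$ by the Rouselakis--Waters MSP-cancellation technique, and uses the challenge term $T$ as the blinding factor $\e(g,h)^{\alpha s}$ with $s$ implicitly set to $c$ (so the hidden monomial is $abc$; note the master secret itself is $ab$, not $abc$ as you phrase it --- the $c$ comes from the challenge randomness). One substantive divergence: you additionally plan to simulate keys for \emph{satisfied} policies whose index forbids decryption, via BBG-style delegation. The paper's written proof never treats this case --- it only simulates keys with $\Lambda_i(\delta^*)\neq 1$, inheriting the key-query restriction of the underlying KP-ABE model --- so you are proving something stronger than what the paper actually establishes, and the exponent bookkeeping you flag as the main obstacle is exactly where that extra case would have to be discharged (the paper offers no such accounting).

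For ciphertext anonymity your route is genuinely different. The paper's definition is real-or-random in the index (chosen index versus random index), and its proof is a single one-shot embedding of one eDDH instance: the simulator sets $\alpha=ab$, places $T\in\{h^{c/ab},h^{s}\}$ into the second and third components of $ct_0$ (i.e., $ct_0=(h^c,T,T^{\beta})$), re-randomizes the queried keys so that they remain correctly distributed without $\msk$, and concludes with no hybrid and no polynomial loss. You instead assume a left-or-right formulation over $(j_0,j_1)$ and walk through intermediate indices, paying a $\mathrm{poly}(\lambda)$ factor. The two formulations are polynomially equivalent, so your argument would still establish the theorem, but it is looser and does not match the stated experiment; the paper's single-shot version also makes clearer that what is being randomized is the $h^{s/\alpha}$-component that couples $ct_1=\widehat{j}^{\,s}$ to the key, rather than the index encoding per se. Your structural observation about the $\mathbb{G}/\mathbb{H}$ split of the index-carrying elements is indeed the design rationale the paper states in Section 5.1.
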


\subsection{Instantiation}

First, we use the proposed ABET scheme to construct our instantiation. Specifically, the {\sf Setup} and {\sf KeyGen} algorithms in ABET are directly used in the instantiation. The {\sf Enc} and {\sf Dec} algorithms in ABET are part of {\sf Hash} and {\sf Adapt}, respectively. Second, we rely on a recent work \cite{maram2019churp} to initiate the DPSS scheme. We particularly show an instantiation of DPSS with pessimistic case, which allows users to be held accountable in a committee using KZG commitment \cite{kate2010constant}.

We denote an index space as $\{ I_1, \cdots, I_k \} \in ({\Z})^k$, which is associated with $k$ committees. We define a hierarchy as follows: index $i$ is close to the root node $k$, and index $j$ is close to the leaf node. We assume each committee has $n_0$ users, and the threshold is $t$, where $t < n_0/2$ according to \cite{maram2019churp}.  Let $ \h: \{ 0,1 \}^*  \rightarrow \Z $ be a hash function, and the size of hash output $\h$ is assumed to be $l$. Let $\e: \mathbb{G} \times \mathbb{H} \rightarrow \mathbb{G}_T$ be a bilinear pairing. The concrete instantiation is shown below.

\begin{itemize}
	\item{${\sf Setup}(1^{\lambda})$:} It takes a security parameter $\lambda$ as input, outputs a master public key $\mpk = (g, u, v, w, h, \e(g, h)^{\alpha}, \{ g_1^{\alpha}, \cdots g_k^{\alpha} \}, \{ h_1^{\alpha}, \cdots h_k^{\alpha} \},\break g^{\beta}, h^{1/ \alpha}, h^{\beta/\alpha}, \e(g, h)^{\theta /\alpha } )$, and a master secret key $\msk =(\alpha, \beta, \theta) $, where $(\alpha, \beta, \theta ) \in \mathbb{Z}_q^*$ $\{ z_1, \cdots, z_k \} \in \Z$, $g$ is generator of group $\mathbb{G}$, $h$ is generator of group $\mathbb{H}$, $(u, v, w) \in \mathbb{G}$, $ \{ g_1, \cdots, g_k \} = \{ g^{ z_1}, \cdots, g^{ z_k } \} $, $ \{ h_1, \cdots, h_k \} = \{ h^{z_1 }, \cdots, h^{z_k} \} $. Note that the key shares of $\alpha$ and $\theta$ are distributed to users in committee ${\sf C}^0$, and these key shares can be redistributed between dynamic committees (see correctness below).
	
	\item{${\sf KeyGen}( {\sf C}_i^e , ({\bf M}, \pi) )$:} It inputs a committee ${\sf C}_i^e$ with index $(I_1, \cdots, I_i)$, and an access structure $( {\bf M} , \pi )$ (${\bf M}$ has $n_1$ rows and $n_2$ columns), outputs a secret key $\sk_{\Lambda_i}$ for a modifier. Specifically, a group of $t$+1 users in the committee ${\sf C}_i^e$ first recover secrets $\alpha$ and $\theta$. Then, they pick $\{ t_1, \cdots, t_{n_1} \} \in \Z$, for all $ \tau \in [n_1]$, compute $ \sk_{(\tau, 1)} = g^{ s_{\tau} } w^{t_{\tau}} , \sk_{(\tau, 2)} = (u^{\pi(\tau)} v )^{-t_{\tau}} , \sk_{(\tau, 3)} = h^{t_{\tau}} $, where $s_{\tau}$ is a key share from $\alpha$. Eventually, they pick $\{ r_1, \cdots, r_{n_1} \} \in \Z$, compute $\sk_0= ( g^{t^* /\alpha}, g^{r^*}) , \sk_1 = g^{\theta} \cdot \widehat{i}^{t^*} \cdot g^{ \beta \cdot r^*}, \sk_2 = \{ g_{i-1}^{\alpha \cdot t^*}, \cdots g_1^{\alpha \cdot t^*} \} $, where $t^* = \sum_{\tau \in |n_1|} (t_{\tau}) $, $r^* = \sum_{\tau \in |n_1|} (r_{\tau}) $, and $\widehat{i} = g_k^{ \alpha I_1} \cdots g_i^{\alpha I_i} \cdot g \in \mathbb{G} $ is associated with a committee ${\sf C}_i^e$ with index $(I_1, \cdots, I_i)$. The secret key is $\sk_{\Lambda_i} = ( \{ \sk_{\tau} \}_{\tau \in [n_1]}, \sk_0, \sk_1, \sk_2 )$. 
	
	\item{${\sf Hash}(\mpk, m, \delta, j )$:} To hash a message $m \in \Z$ under a set of attributes $\delta $, and an index $(I_1, \cdots I_j)$, a user performs the following operations
	
		\begin{enumerate}
			
			\item choose a randomness ${\sf r} \in \mathbb{Z}_q^*$, and a trapdoor ${\sf R}$, compute a chameleon hash $b = g^m \cdot p'^{\sf r} $ where $p' = g^e$, $e =  \h({\sf R})$. Note that ${\sf R}$ denotes a short bit-string. 
			
			\item generate a ciphertext on message $M = {\sf R}$ under a set of attributes $\delta = \{ A_1, \cdots, A_{|\delta|} \}$ and index $(I_1, \cdots I_j)$. It first picks $s, r_1, r_2, \cdots, r_{|\delta|} \in \Z$, for $\tau \in {|\delta|}$ computes $ ct_{(\tau, 1)} = h^{r_{\tau}}$ and $ct_{(\tau, 2)} = (u^{A_{\tau}} v)^{r_{\tau}}w^{-s}$. Then, it computes $ct = ( {\sf R} || 0^{ l - |{\sf R}|} ) \oplus \h _2( \e(g, h)^{\alpha s} || \e(g, h)^{ \theta s /\alpha})$, $ct_0 = (h^s, h^{s/\alpha}, h^{ \beta \cdot s /\alpha } )$, and $ct_1 = \widehat{j}^{s}$, where $\widehat{j} = h_k^{\alpha I_1} \cdots h_j^{ \alpha I_j} \cdot h \in \mathbb{H} $. Eventually, it sets $C = (ct, \{ ct_{(\tau, 1)}, ct_{(\tau, 2)} \}_{\tau \in [\delta]}, ct_0, ct_1 )$. 
			
			\item generate a signature $epk = g^{esk}, \sigma = esk + \sk \cdot \h( epk || c ) $, where $(esk, epk)$ denotes an ephemeral key pair, and $c = g^{\sk + ( {\sf R} || 0^{ l - |{\sf R}|} )} $ denotes a signed message.
				
            \item output $(m, p', b, {\sf r}, C, c, epk, \sigma)$.

	\end{enumerate}
	
	\item{${\sf Verify}( \mpk, m, p', b, {\sf r}, c, epk, \sigma )$:} Any public user can verify whether a given hash $(b, p')$ is valid, it outputs 1 if $ b = g^{m} \cdot p'^{\sf r} $, and $ g^{\sigma} = epk \cdot \pk^{\h(epk||c)} $. 
	
	\item{${\sf Adapt}(\sk_{\Lambda_i}, m, m', p', b, {\sf r}, C, c, epk, \sigma)$:} A modifier with a secret key $\sk_{\Lambda_i}$, and a new message $m' \in \Z$, performs the following operations
	
		\begin{enumerate}
		
		\item check $ 1 \iseq {\sf Verify} (\mpk, m, p', b, {\sf r}, c, epk, \sigma ) $.
		
		\item run the following steps to decrypt trapdoor ${\sf R}$:

		\begin{enumerate}
			
		\item generate a delegated key w.r.t an index $(I_1, \cdots I_{i+1}) $. It picks $t' \in \Z$, computes $\sk_0= ( g^{ (t^*+t') /\alpha}, g^{r^*}) , \sk_1 = g^{\theta} \cdot \widehat{i}^{t^*} \cdot g^{ \beta \cdot r^*} \cdot ( g_{i-1}^{\alpha \cdot t^*} )^{I_{i+1}} \cdot ( g_k^{\alpha \cdot I_1} \cdots g_{i-1}^{\alpha \cdot I_{i+1}} \cdot g )^{t'} , \sk_2 = \{ g_{i-2}^{\alpha \cdot t^*} \cdot g_{i-2}^{\alpha \cdot t'}, \cdots g_1^{\alpha \cdot t^*} \cdot g_1^{\alpha \cdot t'} \} $. The delegated secret key is $\sk_{\Lambda_{i+1}} = ( \{ \sk_{\tau} \}_{\tau \in [n_1]}, \sk_0, \sk_1, \sk_2 )$. 
		
		\item if the attribute set $\delta$ involved in the ciphertext satisfies the policy MSP $( {\bf M}, \pi )$, then there exists constants $ \{ \gamma_{\mu} \}_{\mu \in I}$ that satisfy the equation in Section \ref{MSP}. It computes  $B$ as follows.  
\begin{eqnarray*}
B & = & \prod_{\mu \in I} ( \e(\sk_{(\mu,1)} , ct_{(0,1) }) \e(\sk_{(\mu,2)} , ct_{(\mu,1)}) \\
& &  \e( ct_{(\mu,2)} , \sk_{(\mu,3)}) )^{\gamma_{\mu}} \\
& = & \e(g, h)^{s \sum_{\mu \in I} \gamma_{\mu} s_{\mu} } =  \e(g, h)^{ \alpha s }, \\
& & where~ \sum_{\mu \in I} \gamma_{\mu} s_{\mu} = \alpha.
\end{eqnarray*}

		\item check $ ({\sf R} || 0^{l - |{\sf R}| } ) \iseq ct \oplus \h ( B || A )$, where $A = \break \frac{\e( \sk_1, ct_{(0,2)})}{\e( \sk_{(0,1)} , ct_1 ) \e( \sk_{(0,2)} , ct_{(0,3)} ) }$. The format ``$|| 0^{l - |{\sf R}|}$" is used to check when the encrypted value ${\sf R}$ is decrypted successfully with certainty $1-2^{ l - |{\sf R}| } $. If the encrypted value is decrypted, then the delegation procedure terminates (note that $B$ is computed once). Note that $\sk_{(0,1)},\sk_{(0,2)}$ denote the first, and second element of $\sk_0$, and the same rule applies to $ct_0$. 
		
		\end{enumerate}

		\item compute a new randomness ${\sf r'} = {\sf r} + (m - m' )/ e $, where $e = \h ({\sf R})$.
		
        \item generate a new ciphertext $C'$ on the same message $M = {\sf R} $ using the attribute set $\delta$ and index $(I_1, \cdots, I_j)$.

        \item generate a signature $ epk' = g^{esk'}, \sigma' = esk' + \sk' \cdot \h (epk' || c' ) $, where $c' = g^{\sk' + ( {\sf R} || 0^{ l - |{\sf R}|} )} $. 
        
		\item output $(m', p', b, {\sf r'}, C', c', epk', \sigma')$.
	\end{enumerate}	
	
\end{itemize}

\noindent{\bf Correctness of DPSS scheme.} Two secrets need to be distributed: $(\alpha, \theta)$. We specifically show users in committee ${\sf C}^{e-1}$ securely {\it handoff} their key shares of secret $\alpha$ to users in committee ${\sf C}^{e}$. According to the DPSS scheme in \cite{maram2019churp}, an asymmetric bivariate polynomial is used: $f(x, y) = \underline{\alpha} + a_{0,1}x + a_{1,0}y + a_{1,1}xy + a_{1,2}xy^2 + \cdots + a_{t, 2t}x^{t}y^{2t} $. Each user in committee ${\sf C}^{e-1}$ holds a {\it full} key share after running {\sf Share} protocol. For example, a user with $\pk$ holds a key share $f(i,y)$, which is a polynomial with dimension $t$. Overall, the handoff (i.e., {\sf Redistribute} protocol) includes three phases: share reduction, proactivization, and full-share distribution. 

\begin{itemize}
	\item{\it Share Reduction.} It requires each user in committee ${\sf C}^{e-1}$ reshares its full key share. For example, user $\pk$ derives a set of {\it reduced} shares $\{ f(i,j) \}_{j \in [1, n_0]}$ from its key share $f(i,y)$ using SSS. Then, each user distributes the reduced shares to users in committee ${\sf C}^{e}$, which includes a user with $\pk'$. As a result, each user in ${\sf C}^{e}$ obtains a reduced share $f(x, j)$ by interpolating the received shares $\{ f(i,j) \}_{i \in [1, t]}$. Note that the dimension of $f(x,j)$ is $2t$, and $2t$+1 of these reduced key shares $\{ f(x,j) \}_{j \in [1, 2t+1]}$ can recover $\alpha$ (see Section \ref{DPSS}). The goal of this dimension-switching (from $t$ to $2t$) is to achieve optimal communication overhead, such that only $2t$+1 users in committee ${\sf C}^{e}$ are required to update $f(x,j)$. 
	
	\item{\it Proactivization.} It requires $F(x,j) = f(x,j) + f'(x,j)$, where $f'(x, y)$ is a new asymmetric bivariate polynomial with dimension $(t, 2t)$ and $f'(0,0) = 0$. We provide more details of $f'(x,y)$ later. 
	
	\item{\it Full-share Distribution.} It requires each user in committee ${\sf C}^{e}$ to recover its full key share with dimension $t$. For example, a full key share $F(i, y)$ is recovered by interpolating the reduced shares $\{ F(i,j) \}_{j \in [1, 2t+1]}$ in committee ${\sf C}^{e}$. This full key share $F(i,y)$ belongs to user $\pk'$, and $t$+1 of these full key shares can recover $\alpha$.
	 
\end{itemize}

Now we show the generation of an asymmetric bivariate polynomial $f'(x, y)$ with dimension $(t, 2t)$ such that $f'(0,0) = 0$, which is used to update the reduced key shares $f(x, j)$ during proactivization. We denote a subset of ${\sf C}^e$ as $\calU'$, which includes $2t$+1 users. The generation of $f'(x,y)$ requires two steps: univariate zero share, and bivariate zero share. 

\begin{itemize}
	\item{\it Univariate Zero Share.} It requires each user in $\calU'$ to generate a key share $f_j'(y)$ from a common univariate polynomial with dimension $2t$. First, each user $i$ generates a univariate polynomial $f_i'(y) = 0 + a_{1}'y + a_{2}'y^2 + \cdots + a_{2t}'y^{2t}$, and broadcasts it to all users in $\calU'$. Second, each user in $\calU'$ generates a common univariate polynomial $f'(y) = \sum_{i \in [1, 2t+1]} f_i'(y)$ by combining all received polynomials, and obtains a key share $f_j'(y)$ from $f'(y)$.
	
	\item{\it Bivariate Zero Share.} It requires each user in committee ${\sf C}^e$ to generate a key share $f'(x, j)$ from a common bivariate polynomial with dimension $(t, 2t)$. First, each user in $\calU'$ generates a set of reduced shares $\{ f'(i , y ) \}_{i \in [1, n_0]}$ with dimension $t$ from its key share $f_j'(y)$ (i.e., resharing process), where $f'(i, y) = 0 + a_{1,0}'y + a_{2,0}'y^2 + \cdots + a_{2t,0}'y^{2t}$. Since the reduced shares are distributed to all users in committee ${\sf C}^e$, a common bivariate polynomial with dimension $(t, 2t)$ is established: $f'(x, y) = 0 + a_{0,1}'x + a_{1,0}'y + a_{1,1}'xy + a_{1,2}'xy^2 + \cdots + a_{t, 2t}'x^ty^{2t}$. Second, each user in committee ${\sf C}^e$ obtains a reduced key share $f'(x, j)$ by interpolating the received shares $\{ f'(i,j) \}_{j \in [1, 2t+1]}$. The key share $f'(x, j) = 0 + a_{0,1}'x + a_{0,2}'x^2 + \cdots + a_{0, t}'x^{t}$ is used to update $f(x, j)$ in the proactivization. 
	
\end{itemize}

The asymmetric bivariate polynomial $f'(x,y)$ can be reused in another proactivization when sharing secret $\theta$. In other words, multiple handoff protocols with respect to different secrets can be updated using the same bivariate polynomial, with the condition that these handoff protocols are executed within the same committee. 

\smallskip\noindent\textbf{Correctness of {\sf Judge} algorithm.} We show the judge process in detail. First, any public user verifies the connection between a transaction and its modified version, and this connection is publicly verifiable. For example, given two chameleon hash outputs: $( m, m', b, p', C, C', c, c', epk, \sigma, epk', \sigma' )$, the public performs the following operations
		
		\begin{itemize}
		\item verify chameleon hash $ b = g^m \cdot p'^{\sf r} = g^m \cdot p'^{\sf r'} $.
		
		\item verify message-signature pair $(c, \sigma)$ under $(epk, \pk)$, and message-signature pair $(c', \sigma')$ under $(epk', \pk')$.
			
		\item verify $\pk' = \pk \cdot \Delta (\sk)$, where $ \Delta( \sk ) = c'/c = g^{ \sk' - \sk } $ (the meaning of $\Delta(\sk)$ is referred to Section \ref{homo_sig}). Note that $(c, c')$ are derived from the same chameleon trapdoor ${\sf R}$. 
		\end{itemize}

Second, any public user obtains a set of accused committees from interacting with an access blackbox $\calO$. We note that the modifier's delegated secret keys are disallowed to be used in generating $\calO$. If some modifiers use their delegated secret keys to generate $\calO$, the public cannot identify the accused committees correctly because the delegated secret keys may share the same committee index. We argue that it is challenging to prevent such malicious modifiers from using their delegated secret keys to generate $\calO$, as some ABET schemes \cite{liu2013blackbox,lai2018making} (including our proposed one) support a delegation process. The delegation allows a user to obtain some delegated decryption keys by re-randomizing the given decryption key. 

Eventually, the public outputs a transaction-committee pair $(T', {\sf C}^e)$. In particular, we rely on the KZG commitment and PoW consensus to hold a modifier $\pk'$ accountable in an accused committee ${\sf C}^e$. Now we provide more details. 

\begin{itemize}
	\item{\it Share Reduction.} We require user $\pk'$ in committee ${\sf C}^e$ to generate a commitment $C_{f(x,j)}$, which is a KZG commitment to the reduced key shares $\{ f(i,j) \}_{j \in [1, 2t+1] }$, and a set of witnesses $\{ w_{f(i,j)} \}_{j \in [1, 2t+1]}$. A witness $w_{f(i,j)}$ means the witness to evaluation of $f(x,j)$ at $i$. Note that $i \in [1, 2t+1]$ indicates the order of user $\pk'$'s public key in committee ${\sf C}^e$ (we order nodes lexicographically by users' public keys and choose the first $2t+1$). 
	
	\item{\it Full-share Distribution.} We require user $\pk'$ in committee ${\sf C}^e$ to generate a commitment $C_{F(x,j)}$, which is a KZG commitment to the reduced key shares $\{ F(i,j) \}_{j \in [1, 2t+1]}$, and a set of witnesses $w_{F(i,j)}$. A witness $w_{F(i,j)}$ means the witness to evaluation of $F(x,j)$ at $i$. 
	
	\item{\it PoW Consensus.} We require user $\pk'$ to hash the KZG commitment and the set of witnesses, store them to an immutable transaction, and put them on-chain for PoW consensus.
	
\end{itemize}

Overall, the commitment and witness can ensure the correctness of handoff. Specifically, new committee members can verify the correctness of reduced shares from old committee members, thus the correctness of dimension-switching. Meanwhile, the proof of correctness is publicly verifiable, such that any public user can verify that $f(i, j)$ (or $F(i, j)$) is the correct evaluation at $i$ (i.e., user $\pk'$) of the polynomial committed by $C_{f(x, j)}$ (or $C_{F(x, j)}$) in committee ${\sf C}^e$. 

\section{Implementation and Evaluation}
\label{implementation}
In this section, we evaluate the performance of the proposed solution based on a proof-of-concept implementation in Python and Flask framework \cite{flask}. We create a mutable open blockchain system with basic functionalities and a PoW consensus mechanism. The simulated open blockchain system is ``healthy", satisfying the properties of persistence and liveness \cite{garay2015bitcoin}. The system is specifically designed to include ten blocks, each block includes 100 transactions. Please note, that our implementation can easily extend it to real-world applications such as a block containing 3500 transactions. We simulate ten nodes in a peer-to-peer network, each of them is implemented as a lightweight blockchain node. They can also be regarded as the users in a committee. A chain of blocks is established with PoW mechanism by consolidating transactions broadcast by the ten nodes. The implementation code is available on GitHub \cite{sourcecode}.

First, if users append mutable transactions to blockchain, they use the proposed solution to hash the registered message $m$. Later, a miner uses the conventional hash function SHA-256 $\h$ to hash the chameleon hash output $h$ and validates $\h(h)$ using a Merkle tree. Note that the non-hashed components such as randomness ${\sf r}$, are parts of a mutable transaction $T = ( \pk_{\sf CH}, m, h, {\sf r}, C, c, \sigma)$. As a consequence, a modifier can replace $T$ by $T' = ( \pk_{\sf CH}, m', h, {\sf r'}, C', c', \sigma')$ without changing the hash output $\h(h)$.

Second, we mimic a dynamic committee that includes five users, we split the master secret key into five key shares so that each user in a committee holds a key share. We simulate the basic functionality of DPSS, including resharing and updating key shares. Any user can join in or leave from a committee by transmitting those key shares between committee members. In particular, we simulate three users in a committee can collaboratively recover the master secret key and grant access privileges to the modifiers. 

Now, we conclude that: 1) The proposed solution incurs no overhead to chain validation. This is because, rewrite the message in $T$ has no effect on the PoW mechanism, as the chameleon hash output $h$ is used for computing the transaction hash for Merkle tree leaves. 2) The proof-of-concept implementation indicates that the proposed solution can act as an additional layer on top of any open blockchains to perform accountable rewriting. Specifically, we append mutable transactions using the proposed solution to the blockchain, and we allow dynamic committees to grant access privileges for rewriting those mutable transactions. 3) The proposed solution is compatible with existing blockchain systems for the following reasons. The only change which CH-based approach requires is to replace the standard the hash function $\h$ by CH for generating a chameleon hash value $h$ before validating the transactions in each block \cite{ateniese2017redactable}. Besides, DPSS is designed for open blockchains, and decentralized systems \cite{maram2019churp}.

\subsection{Evaluation}

We implement our proposed solution using the Charm framework \cite{akinyele2013charm} and evaluate its performance on a PC with Intel Core i5 (2.7GHz$\times$2) and 8GB RAM. We use Multiple Precision Arithmetic Library, Pairing-Based Cryptography (PBC) Library, and we choose MNT224 curve for pairing, which is the best Type-III paring in PBC. We instantiate the hash function and the pseudo-random generator with the corresponding standard interfaces provided by the Charm framework. 

First, the Setup algorithm takes about 0.52 seconds (s). The running time of KeyGen, Hash, and Adapt algorithms are measured and shown in Figure \ref{runtimes} (a-c). The performance of these algorithms is linear to the number of attributes or the size of policies. Specifically, the run-time of KeyGen takes only 2.37s, even if the size of the policy is 100. We discover that the dominating operation is parsing the access policy ({\bf M}, $\pi$), and we argue that a better designed parsing interface can reduce the overall cost of KeyGen. Moreover, it only takes 2.44s and 3.87s respectively for Hash and Adapt to handle 100 attributes. The run-time cost of such algorithms mainly comes from processing the attributes list and access policy (M, $\pi$), i.e., the coefficient calculation of every attribute and the cost of determining whether a given attribute set satisfies the access policy. 

Second, we evaluate the running time of a $t$-out-of-$n_0$ DPSS protocol, where $n_0$ indicates the number of users in a committee and $t$ is the threshold. Let $t < n_0/2$ be a safe threshold. The overhead includes the distribution cost between committee members, and the polynomial calculation cost. Figure \ref{runtimes} (d) shows that the overall running time is linear to the square number of users $n^2$ in a committee, since more shares need to be distributed and more polynomials need to calculated among $n_0$ users. Our implementation can scale up to larger committees. For example, the running times for $t = 20$ and $t = 30$ are about 3.09s and 7.34s, respectively.

To conclude, the implementation performs the resharing twice and updating once regarding two shared secrets. Besides, the number of updating process is constant in a committee, independent of the number of shared secrets used in ABET. Since only two secrets are needed to be shared and recovered, we argue that the proposed ABET scheme is the most practical one. It is suitable for decentralized systems when applying DPSS to the proposed ABET scheme. On the security-front, because every committee has at most $ \frac{n_0}{3} $ malicious members \cite{luu2016secure} and $\frac{n_0}{2}$+1 committee members recover the shared secrets \cite{maram2019churp}, the malicious committee members cannot dictate the committee and control the rewriting privileges. For the storage cost, we mention that the number of mutable transactions ranges from 2$\%$ to 10$\%$ inside a block \cite{deuber2019redactable}. Each mutable transaction needs to store $T = (\pk_{\sf CH}, m, h, {\sf r}, C, c, \sigma)$. The storage cost of a mutable transaction includes: 1) $2\calL_{\Z} + 3\calL_{\mathbb{G}}$ regarding DL-based chameleon hash; 2) $ \calL_{\Z} + |\delta| \times \calL_{ \mathbb{G}} + (|\delta|+4) \times \calL_{ \mathbb{H}} $ regarding ABET; 3) $\calL_{\Z} + 2\calL_{\mathbb{G}}$ regarding digital signature. The committee's on-chain storage cost regarding DPSS \cite{maram2019churp} is $2(t+1)\times [ \calL_{ \mathbb{G}} + (2t+1) ( \calL_{\Z} + \calL_{ \mathbb{H}}) ] $.

\begin{figure*}

    \subfloat
    {\begin{tikzpicture}[scale = 0.5,>=stealth]
\begin{axis}[
    legend cell align={left},
    xtick={10, 20,...,100},
    symbolic x coords={10, 20, 30, 40, 50, 60, 70, 80, 90, 100},
    %ytick distance=0.4,
    % grid=major,
    yticklabel style={
            /pgf/number format/.cd,
            fixed,
            fixed zerofill,
            precision=1,
        	/tikz/.cd
    },
    legend style={
    at={(0,0)},
    anchor=north west,at={(axis description cs:0,1.0)}},
    xlabel={(a) Size of Policies},
    ylabel={Runtime (s)},
]
\addplot [mark options={red}, mark=square*, mark size=4pt, style={red}] coordinates {
    (10, 0.33) (20,0.50) (30,0.71) (40,0.91) (50, 1.15) (60, 1.34) (70, 1.58) (80, 1.81) (90, 2.10) (100, 2.37)
};
    % \legend{Super,Method,Argument,Arg. (one-time)}
\end{axis}
\end{tikzpicture}}
    \subfloat
    {\begin{tikzpicture}[scale = 0.5,>=stealth]
\begin{axis}[
    legend cell align={left},
    xtick={10, 20,...,100},
    symbolic x coords={10, 20, 30, 40, 50, 60, 70, 80, 90, 100},
    %ytick distance=0.4,
    % grid=major,
    yticklabel style={
            /pgf/number format/.cd,
            fixed,
            fixed zerofill,
            precision=1,
        	/tikz/.cd
    },
    legend style={
    at={(0,0)},
    anchor=north west,at={(axis description cs:0,1.0)}},
    xlabel={(b) Number of Attributes},
    ylabel={Runtime (s)}
]
\addplot [mark options={red}, mark=square*, mark size=4pt, style={red}] coordinates {
    (10, 0.38) (20,0.69) (30,0.97) (40,1.18) (50, 1.36) (60, 1.54) (70, 1.74) (80, 1.92) (90, 2.15) (100, 2.40)
};
    % \legend{}
\end{axis}
\end{tikzpicture}}
    \subfloat
    {\begin{tikzpicture}[scale = 0.5,>=stealth]
\begin{axis}[
    legend cell align={left},
    xtick={10, 20,...,100},
    symbolic x coords={10, 20, 30, 40, 50, 60, 70, 80, 90, 100},
    %ytick distance=0.8,
    % grid=major,
    yticklabel style={
            /pgf/number format/.cd,
            fixed,
            fixed zerofill,
            precision=1,
        	/tikz/.cd
    },
    legend style={
    at={(0,0)},
    anchor=north west,at={(axis description cs:0,1.0)}},
    xlabel={(c) Number of Attributes},
    ylabel={Runtime (s)}
]
\addplot [mark options={red}, mark=square*, mark size=4pt, style={red}] coordinates {
    (10, 0.57) (20,0.95) (30,1.33) (40,1.81) (50, 2.13) (60, 2.47) (70, 2.80) (80, 3.19) (90, 3.52) (100, 3.87)
};
    % \legend{}
\end{axis}
\end{tikzpicture}}
    \subfloat
    {\begin{tikzpicture}[scale = 0.5,>=stealth]
\begin{axis}[
    legend cell align={left},
    xtick={2,3,4,5,6,7,8,9,10},
    symbolic x coords={2,3,4,5,6,7,8,9,10},
    %ytick distance=0.2,
    % grid=major,
    yticklabel style={
            /pgf/number format/.cd,
            fixed,
            fixed zerofill,
            precision=1,
        	/tikz/.cd
    },
    legend style={
    at={(0,0)},
    anchor=north west,at={(axis description cs:0,1.0)}},
    xlabel={(d) Size of threshold},
    ylabel={Runtime (s)},
]
\addplot [mark options={blue}, mark=square*, mark size=4pt, style={blue}] coordinates {
    (2,0.019) (3,0.036) (4,0.062) (5, 0.123) (6, 0.210) (7, 0.292) (8, 0.386) (9, 0.482) (10, 0.681)
};
\end{axis}
\end{tikzpicture}}    
\centering\caption{Run-time of {\sf KeyGen}, {\sf Hash}, {\sf Adapt} algorithms, and {\sf DPSS} scheme.}
\label{runtimes}
\end{figure*}
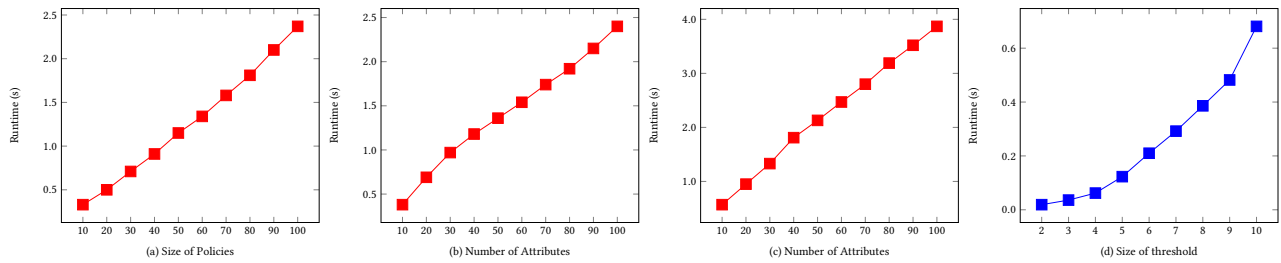

\section{Related Work}
\label{related}

\noindent\textbf{Blockchain Rewriting.} Blockchain rewriting was first introduced by Ateniese et al. \cite{ateniese2017redactable}. They propose to replace the regular SHA256 hash function with a chameleon hash (CH) in blockchain generation \cite{DBLP:conf/ndss/KrawczykR00}. The hashing of CH is parametrized by a public key $\pk$, and CH behaves like a collision-resistant hash function if the chameleon secret key $\sk$ (or trapdoor) is unknown. A trapdoor holder (or modifier) can find collisions and output a new message-randomness pair without changing the hash value.

Camenisch et al. \cite{camenisch2017chameleon} introduced a new cryptographic primitive: chameleon hash with ephemeral trapdoor (CHET). CHET requires that a modifier must have two trapdoors to find collisions: one  trapdoor $\sk$ is associated with the public key $\pk$; the other one is an ephemeral trapdoor $etd$ chosen by the party who initially computed the hash value. CHET provides more control in rewriting in the sense that the party, who computed the hash value, can decide whether the holder of $\sk$ shall be able to rewrite the hash by providing or withholding the ephemeral trapdoor $etd$. 

Derler et al. \cite{DBLP:conf/ndss/DerlerSSS19} proposed policy-based chameleon hash (PCH) to achieve fine-grained rewriting in the blockchain. The proposed PCH replaces the public key encryption scheme in CHET by a ciphertext-policy ABE scheme, such that a modifier must satisfy a policy to find collisions given a hash value. Recently, Tian et al. proposed an accountable PCH for blockchain rewriting (PCHBA) \cite{tian2020policy}. The proposed PCHBA enables the modifiers of transactions to be held accountable for the modified transactions. In particular, PCHBA allows a third party (e.g., key generation center) to resolve any dispute over modified transactions.

In another work, Puddu et al. \cite{puddu2017muchain} proposed $\mu$chain: a mutable blockchain. A transaction owner introduces a set of transactions, including an active transaction and multiple inactive transactions, where the inactive transactions are possible versions of the transaction data (namely, mutations) encrypted by the transaction owner, and the decryption keys are distributed among miners using Shamir's SSS \cite{shamir1979share}. The transaction owner enforces access control policies to define who is allowed to trigger mutations in which context. Upon receiving a mutation-trigger request, a set of miners runs a Multi Party Computation (MPC) protocol to recover the decryption key, decrypt the appropriate version of the transaction and publish it as an active transaction. $\mu$chain incurs considerable overhead due to the use of MPC protocols across multiple miners. It works at both permissioned and permissionless blockchains.

Deuber et al.~\cite{deuber2019redactable} introduced an efficient redactable blockchain in the permissionless setting. The proposed protocol relies on a consensus-based e-voting system \cite{kohno2004analysis}, such that the modification is executed in the chain if a modification request from any public user gathers enough votes from miners (we call it V-CH for convenience). In a follow-up work, Thyagarajan et al. \cite{thyagarajan2020reparo} introduced a protocol called Reparo to repair blockchains, which acts as a publicly verifiable layer on top of any permissionless blockchain. The unique feature of Reparo is that it is immediately integrable into open blockchains in a backward compatible fashion (i.e., any existing blockchains already containing illicit contents can be redacted using Reparo).

There are mainly two types of blockchain rewritings in the literature: CH-based \cite{ateniese2017redactable,camenisch2017chameleon,DBLP:conf/ndss/DerlerSSS19,tian2020policy,derler2020bringing}, and non CH-based \cite{puddu2017muchain,deuber2019redactable,thyagarajan2020reparo}. CH-based blockchain rewritings allow one or more trusted modifiers to rewrite blockchain. The non-CH-based solution requires a threshold number of parties (or miners) to rewrite the blockchain. We stress that both of them aim to rewrite blockchains securely and efficiently. One can apply both of them to redactable blockchains.

Table \ref{compare} shows a comparison between blockchain rewriting related solutions. In this work, we use chameleon hash cryptographic primitive to secure the blockchain rewriting. Our proposed solution supports a fine-grained and controlled rewriting for open blockchains. It holds both the modifiers' public keys and the rewriting privileges accountable for the modified transactions. Overall, this work can be viewed as a step forward from PCH and PCHBA.

\begin{table}

\caption{The comparison between various blockchain rewriting solutions. CH-based indicates CH-based blockchain rewriting. Fine-grained means that each mutable transaction is associated with an access policy such that the transaction can be modified by anyone whose rewriting privilege satisfy the policy.}
\label{compare}
\centering
\scriptsize
\fbox{\parbox{0.42\textwidth}{
\begin{tabular}{l@{\hspace{0.15cm}}c@{\hspace{0.15cm}}c@{\hspace{0.15cm}}c@{\hspace{0.15cm}}c@{\hspace{0.15cm}}c@{\hspace{0.15cm}}c@{\hspace{0.15cm}}c@{\hspace{0.15cm}}c@{\hspace{0.15cm}}c@{\hspace{0.15cm}}}
 
 & CH \cite{ateniese2017redactable} & $\mu$chain \cite{puddu2017muchain} & PCH \cite{DBLP:conf/ndss/DerlerSSS19} & V-CH \cite{deuber2019redactable} & PCHBA \cite{tian2020policy} & Ours \\
\hline

CH-based & \checkmark & $\times$ & \checkmark & $\times$ & \checkmark & \checkmark  \\

Permissionless & \checkmark & \checkmark & $\times$ & \checkmark & $\times$ & \checkmark \\

Fine-grained & $\times$ & $\times$ & \checkmark & $\times$ & \checkmark & \checkmark \\

Accountability & \checkmark & \checkmark & $\times$ & \checkmark  & \checkmark & \checkmark \\

\end{tabular}
}}

\end{table}

\smallskip\noindent\textbf{Dynamic Proactive Secret Sharing.} Proactive security was first introduced by Ostrovsky and Yung \cite{ostrovsky1991withstand}, which is refreshing secrets to withstand compromise. Later, Herzberg et al. \cite{herzberg1995proactive} introduced proactive secret sharing (PSS). The PSS allows the distributed key shares in a SSS to be updated periodically, so that the secret remains secure even if an attacker compromises a threshold number of shareholders in each epoch. However, it did not support dynamic committees because users may join in or leave from a committee dynamically. Desmedt and Jajodia \cite{desmedt1997redistributing} introduced a scheme that redistributes secret shares to new access structure (or new committee). Specifically, a resharing technique is used to change the committee and threshold in PSS. However, the scheme is not verifiable, which disallows PSS to identify the faulty (or malicious) users. The property of verifiability is essential to PSS (i.e., verifiable secret sharing such as Feldman \cite{feldman1987practical}), which holds malicious users accountable. So, the dynamic proactive secret sharing (DPSS) we considered in this work includes verifiability. 

There exist several DPSS schemes in the literature. Wong et al. \cite{wong2002verifiable} introduced a verifiable secret redistribution protocol that supports dynamic committee. The proposed protocol allows new shareholders to verify the validity of their shares after redistribution between different committees. Zhou et al. \cite{zhou2005apss} introduced an APSS, a PSS protocol for asynchronous systems that tolerate denial-of-service attacks. Schultz et al. \cite{schultz2008mobile} introduced a resharing protocol called MPSS. The MPSS supports mobility, which means the group of shareholders can change during resharing. Baron et al. \cite{baron2015communication} introduced a DPSS protocol that achieves a constant amortized communication overhead per secret share. In CCS'19, Maram et al. \cite{maram2019churp} presented a practical DPSS: CHURP. CHURP is designed for open blockchains, and it has very low communication overhead per epoch compared to the existing schemes \cite{wong2002verifiable,zhou2005apss,schultz2008mobile,baron2015communication}. Specifically, the total number of bits transmitted between all committee members in an epoch is substantially lower than in existing schemes. Recently, Benhamouda et al. \cite{benhamouda2020can} introduced anonymous secret redistribution. The benefit is to ensure sharing and resharing of secrets among small dynamic committees.

DPSS can be used to secure blockchain rewriting, such as $\mu$chain \cite{puddu2017muchain}. $\mu$chain relies on encryption with secret sharing (ESS) to hide illegal content, as certain use-cases aim to prevent distribution of illegal content (e.g., child pornography) via the blockchain. ESS allows all the mutable transactions containing illegal content to be encrypted using transaction-specific keys. The transaction-specific keys are split into shares using DPSS \cite{baron2015communication}, and these resulting shares are distributed to a number of miners, which then reshare the keys among all online miners dynamically. In this work, we use KP-ABE with DPSS to ensure blockchain rewiring with fine-grained access control. The master secret key in KP-ABE is split into key shares, and these key shares are distributed to all users in a committee. The key shares can be securely redistributed across dynamic committees. To the best of our knowledge, ours is the first attempt to distribute the master secret key in KP-ABE for decentralized systems. 

\section{Conclusion}

In this paper, we proposed a new framework of accountable fine-grained blockchain rewriting. The proposed framework is designed for open blockchains that require no trust assumptions. Besides, the proposed framework achieves public accountability, which can thwart the malicious rewriting of blockchain. Specifically, public accountability allows the modifiers' public keys and the rewriting privileges to be held accountable for the modified transactions. We presented a practical instantiation, and showed that the proposed solution is suitable for open blockchain applications. In particular, the proof-of-concept implementation demonstrated that our proposed solution can be easily integrated into the existing open blockchains.

\section{Acknowledgments}

This work was supported by the Ministry of Education, Singapore, under its MOE AcRF Tier 2 grant (MOE2018-T2-1-111). Yingjiu Li was supported in part by the Ripple University Blockchain Research Initiative. 

\bibliographystyle{ACM-Reference-Format}
\bibliography{paper}

%%% -*-BibTeX-*-
%%% Do NOT edit. File created by BibTeX with style
%%% ACM-Reference-Format-Journals [18-Jan-2012].

\begin{thebibliography}{57}

%%% ====================================================================
%%% NOTE TO THE USER: you can override these defaults by providing
%%% customized versions of any of these macros before the \bibliography
%%% command.  Each of them MUST provide its own final punctuation,
%%% except for \shownote{}, \showDOI{}, and \showURL{}.  The latter two
%%% do not use final punctuation, in order to avoid confusing it with
%%% the Web address.
%%%
%%% To suppress output of a particular field, define its macro to expand
%%% to an empty string, or better, \unskip, like this:
%%%
%%% \newcommand{\showDOI}[1]{\unskip}   % LaTeX syntax
%%%
%%% \def \showDOI #1{\unskip}           % plain TeX syntax
%%%
%%% ====================================================================

\ifx \showCODEN    \undefined \def \showCODEN     #1{\unskip}     \fi
\ifx \showDOI      \undefined \def \showDOI       #1{#1}\fi
\ifx \showISBNx    \undefined \def \showISBNx     #1{\unskip}     \fi
\ifx \showISBNxiii \undefined \def \showISBNxiii  #1{\unskip}     \fi
\ifx \showISSN     \undefined \def \showISSN      #1{\unskip}     \fi
\ifx \showLCCN     \undefined \def \showLCCN      #1{\unskip}     \fi
\ifx \shownote     \undefined \def \shownote      #1{#1}          \fi
\ifx \showarticletitle \undefined \def \showarticletitle #1{#1}   \fi
\ifx \showURL      \undefined \def \showURL       {\relax}        \fi
% The following commands are used for tagged output and should be
% invisible to TeX
\providecommand\bibfield[2]{#2}
\providecommand\bibinfo[2]{#2}
\providecommand\natexlab[1]{#1}
\providecommand\showeprint[2][]{arXiv:#2}

\bibitem[\protect\citeauthoryear{??}{bit}{[n.d.]}]%
        {bitcoin}
 \bibinfo{year}{[n.d.]}\natexlab{}.
\newblock \bibinfo{title}{{Bitcoin}}.
\newblock \bibinfo{howpublished}{https://bitcoin.org/en/}.
\newblock


\bibitem[\protect\citeauthoryear{??}{eth}{[n.d.]}]%
        {ethereum}
 \bibinfo{year}{[n.d.]}\natexlab{}.
\newblock \bibinfo{title}{{Ethereum}}.
\newblock \bibinfo{howpublished}{https://ethereum.org/en/}.
\newblock


\bibitem[\protect\citeauthoryear{??}{fla}{[n.d.]}]%
        {flask}
 \bibinfo{year}{[n.d.]}\natexlab{}.
\newblock \bibinfo{title}{{Flask Framework}}.
\newblock \bibinfo{howpublished}{https://flask.palletsprojects.com/en/1.1.x/}.
\newblock


\bibitem[\protect\citeauthoryear{??}{GDP}{[n.d.]}]%
        {GDPR}
 \bibinfo{year}{[n.d.]}\natexlab{}.
\newblock \bibinfo{title}{{General Data Protection Regulation}}.
\newblock \bibinfo{howpublished}{https://eugdpr.org}.
\newblock


\bibitem[\protect\citeauthoryear{??}{sou}{[n.d.]}]%
        {sourcecode}
 \bibinfo{year}{[n.d.]}\natexlab{}.
\newblock \bibinfo{title}{{Our Source Code}}.
\newblock
  \bibinfo{howpublished}{https://github.com/lbwtorino/Fine-Grained-Blockchain-Rewriting-in-Permissionless-Setting}.
\newblock


\bibitem[\protect\citeauthoryear{??}{pro}{[n.d.]}]%
        {proof_of_stake}
 \bibinfo{year}{[n.d.]}\natexlab{}.
\newblock \bibinfo{title}{{Proof of Stake}}.
\newblock
  \bibinfo{howpublished}{https://en.wikipedia.org/wiki/Proof\_of\_stake}.
\newblock


\bibitem[\protect\citeauthoryear{Agrawal and Chase}{Agrawal and Chase}{2017}]%
        {agrawal2017fame}
\bibfield{author}{\bibinfo{person}{Shashank Agrawal} {and}
  \bibinfo{person}{Melissa Chase}.} \bibinfo{year}{2017}\natexlab{}.
\newblock \showarticletitle{FAME: fast attribute-based message encryption}. In
  \bibinfo{booktitle}{\emph{CCS}}. \bibinfo{pages}{665--682}.
\newblock


\bibitem[\protect\citeauthoryear{Akinyele, Garman, Miers, Pagano, Rushanan,
  Green, and Rubin}{Akinyele et~al\mbox{.}}{2013}]%
        {akinyele2013charm}
\bibfield{author}{\bibinfo{person}{Joseph~A Akinyele},
  \bibinfo{person}{Christina Garman}, \bibinfo{person}{Ian Miers},
  \bibinfo{person}{Matthew~W Pagano}, \bibinfo{person}{Michael Rushanan},
  \bibinfo{person}{Matthew Green}, {and} \bibinfo{person}{Aviel~D Rubin}.}
  \bibinfo{year}{2013}\natexlab{}.
\newblock \showarticletitle{Charm: a framework for rapidly prototyping
  cryptosystems}.
\newblock \bibinfo{journal}{\emph{Journal of Cryptographic Engineering}}
  \bibinfo{volume}{3}, \bibinfo{number}{2} (\bibinfo{year}{2013}),
  \bibinfo{pages}{111--128}.
\newblock


\bibitem[\protect\citeauthoryear{Andrychowicz and Dziembowski}{Andrychowicz and
  Dziembowski}{2015}]%
        {andrychowicz2015pow}
\bibfield{author}{\bibinfo{person}{Marcin Andrychowicz} {and}
  \bibinfo{person}{Stefan Dziembowski}.} \bibinfo{year}{2015}\natexlab{}.
\newblock \showarticletitle{Pow-based distributed cryptography with no trusted
  setup}. In \bibinfo{booktitle}{\emph{CRYPTO}}. \bibinfo{pages}{379--399}.
\newblock


\bibitem[\protect\citeauthoryear{Ateniese, Magri, Venturi, and
  Andrade}{Ateniese et~al\mbox{.}}{2017}]%
        {ateniese2017redactable}
\bibfield{author}{\bibinfo{person}{Giuseppe Ateniese},
  \bibinfo{person}{Bernardo Magri}, \bibinfo{person}{Daniele Venturi}, {and}
  \bibinfo{person}{Ewerton Andrade}.} \bibinfo{year}{2017}\natexlab{}.
\newblock \showarticletitle{Redactable blockchain--or--rewriting history in
  bitcoin and friends}. In \bibinfo{booktitle}{\emph{EuroS\&P}}.
  \bibinfo{pages}{111--126}.
\newblock


\bibitem[\protect\citeauthoryear{Baron, El~Defrawy, Lampkins, and
  Ostrovsky}{Baron et~al\mbox{.}}{2015}]%
        {baron2015communication}
\bibfield{author}{\bibinfo{person}{Joshua Baron}, \bibinfo{person}{Karim
  El~Defrawy}, \bibinfo{person}{Joshua Lampkins}, {and} \bibinfo{person}{Rafail
  Ostrovsky}.} \bibinfo{year}{2015}\natexlab{}.
\newblock \showarticletitle{Communication-optimal proactive secret sharing for
  dynamic groups}. In \bibinfo{booktitle}{\emph{ACNS}}.
  \bibinfo{pages}{23--41}.
\newblock


\bibitem[\protect\citeauthoryear{Beimel}{Beimel}{1996}]%
        {beimel1996secure}
\bibfield{author}{\bibinfo{person}{Amos Beimel}.}
  \bibinfo{year}{1996}\natexlab{}.
\newblock \emph{\bibinfo{title}{Secure schemes for secret sharing and key
  distribution}}.
\newblock \bibinfo{thesistype}{Ph.D. Dissertation}. \bibinfo{school}{PhD
  thesis, Israel Institute of Technology, Technion, Haifa, Israel}.
\newblock


\bibitem[\protect\citeauthoryear{Benhamouda, Gentry, Gorbunov, Halevi,
  Krawczyk, Lin, Rabin, and Reyzin}{Benhamouda et~al\mbox{.}}{2020}]%
        {benhamouda2020can}
\bibfield{author}{\bibinfo{person}{Fabrice Benhamouda}, \bibinfo{person}{Craig
  Gentry}, \bibinfo{person}{Sergey Gorbunov}, \bibinfo{person}{Shai Halevi},
  \bibinfo{person}{Hugo Krawczyk}, \bibinfo{person}{Chengyu Lin},
  \bibinfo{person}{Tal Rabin}, {and} \bibinfo{person}{Leonid Reyzin}.}
  \bibinfo{year}{2020}\natexlab{}.
\newblock \showarticletitle{Can a Public Blockchain Keep a Secret?}. In
  \bibinfo{booktitle}{\emph{TCC}}. \bibinfo{pages}{260--290}.
\newblock


\bibitem[\protect\citeauthoryear{Boneh, Boyen, and Goh}{Boneh
  et~al\mbox{.}}{2005}]%
        {boneh2005hierarchical}
\bibfield{author}{\bibinfo{person}{Dan Boneh}, \bibinfo{person}{Xavier Boyen},
  {and} \bibinfo{person}{Eu-Jin Goh}.} \bibinfo{year}{2005}\natexlab{}.
\newblock \showarticletitle{Hierarchical identity based encryption with
  constant size ciphertext}. In \bibinfo{booktitle}{\emph{CRYPTO}}.
  \bibinfo{pages}{440--456}.
\newblock


\bibitem[\protect\citeauthoryear{Boneh, Sahai, and Waters}{Boneh
  et~al\mbox{.}}{2006}]%
        {boneh2006fully}
\bibfield{author}{\bibinfo{person}{Dan Boneh}, \bibinfo{person}{Amit Sahai},
  {and} \bibinfo{person}{Brent Waters}.} \bibinfo{year}{2006}\natexlab{}.
\newblock \showarticletitle{Fully collusion resistant traitor tracing with
  short ciphertexts and private keys}. In
  \bibinfo{booktitle}{\emph{EUROCRYPT}}. \bibinfo{pages}{573--592}.
\newblock


\bibitem[\protect\citeauthoryear{Boneh and Waters}{Boneh and Waters}{2006}]%
        {boneh2006ful}
\bibfield{author}{\bibinfo{person}{Dan Boneh} {and} \bibinfo{person}{Brent
  Waters}.} \bibinfo{year}{2006}\natexlab{}.
\newblock \showarticletitle{A fully collusion resistant broadcast, trace, and
  revoke system}. In \bibinfo{booktitle}{\emph{CCS}}.
  \bibinfo{pages}{211--220}.
\newblock


\bibitem[\protect\citeauthoryear{Camenisch, Derler, Krenn, P{\"o}hls, Samelin,
  and Slamanig}{Camenisch et~al\mbox{.}}{2017}]%
        {camenisch2017chameleon}
\bibfield{author}{\bibinfo{person}{Jan Camenisch}, \bibinfo{person}{David
  Derler}, \bibinfo{person}{Stephan Krenn}, \bibinfo{person}{Henrich~C
  P{\"o}hls}, \bibinfo{person}{Kai Samelin}, {and} \bibinfo{person}{Daniel
  Slamanig}.} \bibinfo{year}{2017}\natexlab{}.
\newblock \showarticletitle{Chameleon-hashes with ephemeral trapdoors}. In
  \bibinfo{booktitle}{\emph{PKC}}. \bibinfo{pages}{152--182}.
\newblock


\bibitem[\protect\citeauthoryear{Derler, Samelin, and Slamanig}{Derler
  et~al\mbox{.}}{2020}]%
        {derler2020bringing}
\bibfield{author}{\bibinfo{person}{David Derler}, \bibinfo{person}{Kai
  Samelin}, {and} \bibinfo{person}{Daniel Slamanig}.}
  \bibinfo{year}{2020}\natexlab{}.
\newblock \showarticletitle{Bringing Order to Chaos: The Case of
  Collision-Resistant Chameleon-Hashes}. In \bibinfo{booktitle}{\emph{PKC}}.
  \bibinfo{pages}{462--492}.
\newblock


\bibitem[\protect\citeauthoryear{Derler, Samelin, Slamanig, and
  Striecks}{Derler et~al\mbox{.}}{2019}]%
        {DBLP:conf/ndss/DerlerSSS19}
\bibfield{author}{\bibinfo{person}{David Derler}, \bibinfo{person}{Kai
  Samelin}, \bibinfo{person}{Daniel Slamanig}, {and} \bibinfo{person}{Christoph
  Striecks}.} \bibinfo{year}{2019}\natexlab{}.
\newblock \showarticletitle{Fine-Grained and Controlled Rewriting in
  Blockchains: Chameleon-Hashing Gone Attribute-Based}. In
  \bibinfo{booktitle}{\emph{{NDSS}}}.
\newblock


\bibitem[\protect\citeauthoryear{Desmedt and Jajodia}{Desmedt and
  Jajodia}{1997}]%
        {desmedt1997redistributing}
\bibfield{author}{\bibinfo{person}{Yvo Desmedt} {and} \bibinfo{person}{Sushil
  Jajodia}.} \bibinfo{year}{1997}\natexlab{}.
\newblock \bibinfo{booktitle}{\emph{Redistributing secret shares to new access
  structures and its applications}}.
\newblock \bibinfo{type}{{T}echnical {R}eport}.
\newblock


\bibitem[\protect\citeauthoryear{Deuber, Magri, and Thyagarajan}{Deuber
  et~al\mbox{.}}{2019}]%
        {deuber2019redactable}
\bibfield{author}{\bibinfo{person}{Dominic Deuber}, \bibinfo{person}{Bernardo
  Magri}, {and} \bibinfo{person}{Sri Aravinda~Krishnan Thyagarajan}.}
  \bibinfo{year}{2019}\natexlab{}.
\newblock \showarticletitle{Redactable blockchain in the permissionless
  setting}. In \bibinfo{booktitle}{\emph{IEEE S{\&}P}}.
  \bibinfo{pages}{124--138}.
\newblock


\bibitem[\protect\citeauthoryear{Douceur}{Douceur}{2002}]%
        {douceur2002sybil}
\bibfield{author}{\bibinfo{person}{John~R Douceur}.}
  \bibinfo{year}{2002}\natexlab{}.
\newblock \showarticletitle{The sybil attack}. In
  \bibinfo{booktitle}{\emph{International workshop on peer-to-peer systems}}.
  \bibinfo{pages}{251--260}.
\newblock


\bibitem[\protect\citeauthoryear{Ducas}{Ducas}{2010}]%
        {ducas2010anonymity}
\bibfield{author}{\bibinfo{person}{L{\'e}o Ducas}.}
  \bibinfo{year}{2010}\natexlab{}.
\newblock \showarticletitle{Anonymity from asymmetry: New constructions for
  anonymous HIBE}. In \bibinfo{booktitle}{\emph{CT-RSA}}.
  \bibinfo{pages}{148--164}.
\newblock


\bibitem[\protect\citeauthoryear{Dziembowski, Faust, Kolmogorov, and
  Pietrzak}{Dziembowski et~al\mbox{.}}{2015}]%
        {dziembowski2015proofs}
\bibfield{author}{\bibinfo{person}{Stefan Dziembowski},
  \bibinfo{person}{Sebastian Faust}, \bibinfo{person}{Vladimir Kolmogorov},
  {and} \bibinfo{person}{Krzysztof Pietrzak}.} \bibinfo{year}{2015}\natexlab{}.
\newblock \showarticletitle{Proofs of space}. In
  \bibinfo{booktitle}{\emph{CRYPTO}}. \bibinfo{pages}{585--605}.
\newblock


\bibitem[\protect\citeauthoryear{Feldman}{Feldman}{1987}]%
        {feldman1987practical}
\bibfield{author}{\bibinfo{person}{Paul Feldman}.}
  \bibinfo{year}{1987}\natexlab{}.
\newblock \showarticletitle{A practical scheme for non-interactive verifiable
  secret sharing}. In \bibinfo{booktitle}{\emph{FOCS}}.
  \bibinfo{pages}{427--438}.
\newblock


\bibitem[\protect\citeauthoryear{Garay, Kiayias, and Leonardos}{Garay
  et~al\mbox{.}}{2015}]%
        {garay2015bitcoin}
\bibfield{author}{\bibinfo{person}{Juan Garay}, \bibinfo{person}{Aggelos
  Kiayias}, {and} \bibinfo{person}{Nikos Leonardos}.}
  \bibinfo{year}{2015}\natexlab{}.
\newblock \showarticletitle{The bitcoin backbone protocol: Analysis and
  applications}. In \bibinfo{booktitle}{\emph{EUROCRYPT}}.
  \bibinfo{pages}{281--310}.
\newblock


\bibitem[\protect\citeauthoryear{Gennaro, Jarecki, Krawczyk, and Rabin}{Gennaro
  et~al\mbox{.}}{1999}]%
        {gennaro1999secure}
\bibfield{author}{\bibinfo{person}{Rosario Gennaro},
  \bibinfo{person}{Stanis{\l}aw Jarecki}, \bibinfo{person}{Hugo Krawczyk},
  {and} \bibinfo{person}{Tal Rabin}.} \bibinfo{year}{1999}\natexlab{}.
\newblock \showarticletitle{Secure distributed key generation for discrete-log
  based cryptosystems}. In \bibinfo{booktitle}{\emph{EUROCRYPT}}.
  \bibinfo{pages}{295--310}.
\newblock


\bibitem[\protect\citeauthoryear{Gilad, Hemo, Micali, Vlachos, and
  Zeldovich}{Gilad et~al\mbox{.}}{2017}]%
        {gilad2017algorand}
\bibfield{author}{\bibinfo{person}{Yossi Gilad}, \bibinfo{person}{Rotem Hemo},
  \bibinfo{person}{Silvio Micali}, \bibinfo{person}{Georgios Vlachos}, {and}
  \bibinfo{person}{Nickolai Zeldovich}.} \bibinfo{year}{2017}\natexlab{}.
\newblock \showarticletitle{Algorand: Scaling byzantine agreements for
  cryptocurrencies}. In \bibinfo{booktitle}{\emph{Proceedings of the 26th
  Symposium on Operating Systems Principles}}. \bibinfo{pages}{51--68}.
\newblock


\bibitem[\protect\citeauthoryear{Goyal, Pandey, Sahai, and Waters}{Goyal
  et~al\mbox{.}}{2006}]%
        {goyal2006attribute}
\bibfield{author}{\bibinfo{person}{Vipul Goyal}, \bibinfo{person}{Omkant
  Pandey}, \bibinfo{person}{Amit Sahai}, {and} \bibinfo{person}{Brent Waters}.}
  \bibinfo{year}{2006}\natexlab{}.
\newblock \showarticletitle{Attribute-based encryption for fine-grained access
  control of encrypted data}. In \bibinfo{booktitle}{\emph{CCS}}.
  \bibinfo{pages}{89--98}.
\newblock


\bibitem[\protect\citeauthoryear{Herzberg, Jarecki, Krawczyk, and
  Yung}{Herzberg et~al\mbox{.}}{1995}]%
        {herzberg1995proactive}
\bibfield{author}{\bibinfo{person}{Amir Herzberg},
  \bibinfo{person}{Stanis{\l}aw Jarecki}, \bibinfo{person}{Hugo Krawczyk},
  {and} \bibinfo{person}{Moti Yung}.} \bibinfo{year}{1995}\natexlab{}.
\newblock \showarticletitle{Proactive secret sharing or: How to cope with
  perpetual leakage}. In \bibinfo{booktitle}{\emph{CRYPTO}}.
  \bibinfo{pages}{339--352}.
\newblock


\bibitem[\protect\citeauthoryear{Jakobsson and Juels}{Jakobsson and
  Juels}{1999}]%
        {jakobsson1999proofs}
\bibfield{author}{\bibinfo{person}{Markus Jakobsson} {and} \bibinfo{person}{Ari
  Juels}.} \bibinfo{year}{1999}\natexlab{}.
\newblock \showarticletitle{Proofs of work and bread pudding protocols}.
\newblock In \bibinfo{booktitle}{\emph{Secure information networks}}.
  \bibinfo{pages}{258--272}.
\newblock


\bibitem[\protect\citeauthoryear{Kate, Zaverucha, and Goldberg}{Kate
  et~al\mbox{.}}{2010}]%
        {kate2010constant}
\bibfield{author}{\bibinfo{person}{Aniket Kate}, \bibinfo{person}{Gregory~M
  Zaverucha}, {and} \bibinfo{person}{Ian Goldberg}.}
  \bibinfo{year}{2010}\natexlab{}.
\newblock \showarticletitle{Constant-size commitments to polynomials and their
  applications}. In \bibinfo{booktitle}{\emph{ASIACRYPT}}.
  \bibinfo{pages}{177--194}.
\newblock


\bibitem[\protect\citeauthoryear{Kohno, Stubblefield, Rubin, and Wallach}{Kohno
  et~al\mbox{.}}{2004}]%
        {kohno2004analysis}
\bibfield{author}{\bibinfo{person}{Tadayoshi Kohno}, \bibinfo{person}{Adam
  Stubblefield}, \bibinfo{person}{Aviel~D Rubin}, {and} \bibinfo{person}{Dan~S
  Wallach}.} \bibinfo{year}{2004}\natexlab{}.
\newblock \showarticletitle{Analysis of an electronic voting system}. In
  \bibinfo{booktitle}{\emph{IEEE S{\&}P}}. \bibinfo{pages}{27--40}.
\newblock


\bibitem[\protect\citeauthoryear{Krawczyk and Rabin}{Krawczyk and
  Rabin}{2000}]%
        {DBLP:conf/ndss/KrawczykR00}
\bibfield{author}{\bibinfo{person}{Hugo Krawczyk} {and} \bibinfo{person}{Tal
  Rabin}.} \bibinfo{year}{2000}\natexlab{}.
\newblock \showarticletitle{Chameleon Signatures}. In
  \bibinfo{booktitle}{\emph{{NDSS}}}.
\newblock


\bibitem[\protect\citeauthoryear{Lai and Tang}{Lai and Tang}{2018}]%
        {lai2018making}
\bibfield{author}{\bibinfo{person}{Junzuo Lai} {and} \bibinfo{person}{Qiang
  Tang}.} \bibinfo{year}{2018}\natexlab{}.
\newblock \showarticletitle{Making any attribute-based encryption accountable,
  efficiently}. In \bibinfo{booktitle}{\emph{ESORICS}}.
  \bibinfo{pages}{527--547}.
\newblock


\bibitem[\protect\citeauthoryear{Liu, Cao, and Wong}{Liu et~al\mbox{.}}{2013}]%
        {liu2013blackbox}
\bibfield{author}{\bibinfo{person}{Zhen Liu}, \bibinfo{person}{Zhenfu Cao},
  {and} \bibinfo{person}{Duncan~S Wong}.} \bibinfo{year}{2013}\natexlab{}.
\newblock \showarticletitle{Blackbox traceable CP-ABE: how to catch people
  leaking their keys by selling decryption devices on ebay}. In
  \bibinfo{booktitle}{\emph{CCS}}. \bibinfo{pages}{475--486}.
\newblock


\bibitem[\protect\citeauthoryear{Luu, Narayanan, Zheng, Baweja, Gilbert, and
  Saxena}{Luu et~al\mbox{.}}{2016}]%
        {luu2016secure}
\bibfield{author}{\bibinfo{person}{Loi Luu}, \bibinfo{person}{Viswesh
  Narayanan}, \bibinfo{person}{Chaodong Zheng}, \bibinfo{person}{Kunal Baweja},
  \bibinfo{person}{Seth Gilbert}, {and} \bibinfo{person}{Prateek Saxena}.}
  \bibinfo{year}{2016}\natexlab{}.
\newblock \showarticletitle{A secure sharding protocol for open blockchains}.
  In \bibinfo{booktitle}{\emph{CCS}}. \bibinfo{pages}{17--30}.
\newblock


\bibitem[\protect\citeauthoryear{Maram, Zhang, Wang, Low, Zhang, Juels, and
  Song}{Maram et~al\mbox{.}}{2019}]%
        {maram2019churp}
\bibfield{author}{\bibinfo{person}{Sai Krishna~Deepak Maram},
  \bibinfo{person}{Fan Zhang}, \bibinfo{person}{Lun Wang},
  \bibinfo{person}{Andrew Low}, \bibinfo{person}{Yupeng Zhang},
  \bibinfo{person}{Ari Juels}, {and} \bibinfo{person}{Dawn Song}.}
  \bibinfo{year}{2019}\natexlab{}.
\newblock \showarticletitle{Churp: Dynamic-committee proactive secret sharing}.
  In \bibinfo{booktitle}{\emph{CCS}}. \bibinfo{pages}{2369--2386}.
\newblock


\bibitem[\protect\citeauthoryear{Matzutt, Hiller, Henze, Ziegeldorf,
  M{\"u}llmann, Hohlfeld, and Wehrle}{Matzutt et~al\mbox{.}}{2018}]%
        {matzutt2018quantitative}
\bibfield{author}{\bibinfo{person}{Roman Matzutt}, \bibinfo{person}{Jens
  Hiller}, \bibinfo{person}{Martin Henze}, \bibinfo{person}{Jan~Henrik
  Ziegeldorf}, \bibinfo{person}{Dirk M{\"u}llmann}, \bibinfo{person}{Oliver
  Hohlfeld}, {and} \bibinfo{person}{Klaus Wehrle}.}
  \bibinfo{year}{2018}\natexlab{}.
\newblock \showarticletitle{A quantitative analysis of the impact of arbitrary
  blockchain content on bitcoin}. In \bibinfo{booktitle}{\emph{FC}}.
  \bibinfo{pages}{420--438}.
\newblock


\bibitem[\protect\citeauthoryear{Matzutt, Hohlfeld, Henze, Rawiel, Ziegeldorf,
  and Wehrle}{Matzutt et~al\mbox{.}}{2016}]%
        {matzutt2016poster}
\bibfield{author}{\bibinfo{person}{Roman Matzutt}, \bibinfo{person}{Oliver
  Hohlfeld}, \bibinfo{person}{Martin Henze}, \bibinfo{person}{Robin Rawiel},
  \bibinfo{person}{Jan~Henrik Ziegeldorf}, {and} \bibinfo{person}{Klaus
  Wehrle}.} \bibinfo{year}{2016}\natexlab{}.
\newblock \showarticletitle{Poster: I don't want that content! on the risks of
  exploiting bitcoin's blockchain as a content store}. In
  \bibinfo{booktitle}{\emph{CCS}}. \bibinfo{pages}{1769--1771}.
\newblock


\bibitem[\protect\citeauthoryear{Merkle}{Merkle}{1989}]%
        {merkle1989certified}
\bibfield{author}{\bibinfo{person}{Ralph~C Merkle}.}
  \bibinfo{year}{1989}\natexlab{}.
\newblock \showarticletitle{A certified digital signature}. In
  \bibinfo{booktitle}{\emph{CRYPTO}}. \bibinfo{pages}{218--238}.
\newblock


\bibitem[\protect\citeauthoryear{Nakamoto}{Nakamoto}{2008}]%
        {nakamoto2008bitcoin}
\bibfield{author}{\bibinfo{person}{Satoshi Nakamoto}.}
  \bibinfo{year}{2008}\natexlab{}.
\newblock \showarticletitle{Bitcoin: A peer-to-peer electronic cash system}.
\newblock  (\bibinfo{year}{2008}).
\newblock


\bibitem[\protect\citeauthoryear{Ning, Cao, Dong, Gong, and Chen}{Ning
  et~al\mbox{.}}{2016}]%
        {ning2016traceable}
\bibfield{author}{\bibinfo{person}{Jianting Ning}, \bibinfo{person}{Zhenfu
  Cao}, \bibinfo{person}{Xiaolei Dong}, \bibinfo{person}{Junqing Gong}, {and}
  \bibinfo{person}{Jie Chen}.} \bibinfo{year}{2016}\natexlab{}.
\newblock \showarticletitle{Traceable CP-ABE with short ciphertexts: How to
  catch people selling decryption devices on ebay efficiently}. In
  \bibinfo{booktitle}{\emph{ESORICS}}. \bibinfo{pages}{551--569}.
\newblock


\bibitem[\protect\citeauthoryear{Ostrovsky and Yung}{Ostrovsky and
  Yung}{1991}]%
        {ostrovsky1991withstand}
\bibfield{author}{\bibinfo{person}{Rafail Ostrovsky} {and}
  \bibinfo{person}{Moti Yung}.} \bibinfo{year}{1991}\natexlab{}.
\newblock \showarticletitle{How to withstand mobile virus attacks}. In
  \bibinfo{booktitle}{\emph{ACM PODC}}. \bibinfo{pages}{51--59}.
\newblock


\bibitem[\protect\citeauthoryear{Puddu, Dmitrienko, and Capkun}{Puddu
  et~al\mbox{.}}{2017}]%
        {puddu2017muchain}
\bibfield{author}{\bibinfo{person}{Ivan Puddu}, \bibinfo{person}{Alexandra
  Dmitrienko}, {and} \bibinfo{person}{Srdjan Capkun}.}
  \bibinfo{year}{2017}\natexlab{}.
\newblock \showarticletitle{$\mu$chain: How to Forget without Hard Forks.}
\newblock \bibinfo{journal}{\emph{IACR Cryptology ePrint Archive}}
  \bibinfo{volume}{2017} (\bibinfo{year}{2017}), \bibinfo{pages}{106}.
\newblock


\bibitem[\protect\citeauthoryear{Rouselakis and Waters}{Rouselakis and
  Waters}{2013}]%
        {rouselakis2013practical}
\bibfield{author}{\bibinfo{person}{Yannis Rouselakis} {and}
  \bibinfo{person}{Brent Waters}.} \bibinfo{year}{2013}\natexlab{}.
\newblock \showarticletitle{Practical constructions and new proof methods for
  large universe attribute-based encryption}. In
  \bibinfo{booktitle}{\emph{CCS}}. \bibinfo{pages}{463--474}.
\newblock


\bibitem[\protect\citeauthoryear{Schnorr}{Schnorr}{1991}]%
        {schnorr1991efficient}
\bibfield{author}{\bibinfo{person}{Claus-Peter Schnorr}.}
  \bibinfo{year}{1991}\natexlab{}.
\newblock \showarticletitle{Efficient signature generation by smart cards}.
\newblock \bibinfo{journal}{\emph{Journal of cryptology}} \bibinfo{volume}{4},
  \bibinfo{number}{3} (\bibinfo{year}{1991}), \bibinfo{pages}{161--174}.
\newblock


\bibitem[\protect\citeauthoryear{Schultz, Liskov, and Liskov}{Schultz
  et~al\mbox{.}}{2008}]%
        {schultz2008mobile}
\bibfield{author}{\bibinfo{person}{David~A Schultz}, \bibinfo{person}{Barbara
  Liskov}, {and} \bibinfo{person}{Moses Liskov}.}
  \bibinfo{year}{2008}\natexlab{}.
\newblock \showarticletitle{Mobile proactive secret sharing}. In
  \bibinfo{booktitle}{\emph{ACM PODC}}. \bibinfo{pages}{458--458}.
\newblock


\bibitem[\protect\citeauthoryear{Shamir}{Shamir}{1979}]%
        {shamir1979share}
\bibfield{author}{\bibinfo{person}{Adi Shamir}.}
  \bibinfo{year}{1979}\natexlab{}.
\newblock \showarticletitle{How to share a secret}.
\newblock \bibinfo{journal}{\emph{Commun. ACM}} \bibinfo{volume}{22},
  \bibinfo{number}{11} (\bibinfo{year}{1979}), \bibinfo{pages}{612--613}.
\newblock


\bibitem[\protect\citeauthoryear{Shoup}{Shoup}{1997}]%
        {shoup1997lower}
\bibfield{author}{\bibinfo{person}{Victor Shoup}.}
  \bibinfo{year}{1997}\natexlab{}.
\newblock \showarticletitle{Lower bounds for discrete logarithms and related
  problems}. In \bibinfo{booktitle}{\emph{EUROCRYPT}}.
  \bibinfo{pages}{256--266}.
\newblock


\bibitem[\protect\citeauthoryear{Thyagarajan, Bhat, Magri, Tschudi, and
  Kate}{Thyagarajan et~al\mbox{.}}{2020}]%
        {thyagarajan2020reparo}
\bibfield{author}{\bibinfo{person}{Sri Aravinda~Krishnan Thyagarajan},
  \bibinfo{person}{Adithya Bhat}, \bibinfo{person}{Bernardo Magri},
  \bibinfo{person}{Daniel Tschudi}, {and} \bibinfo{person}{Aniket Kate}.}
  \bibinfo{year}{2020}\natexlab{}.
\newblock \showarticletitle{Reparo: Publicly Verifiable Layer to Repair
  Blockchains}.
\newblock \bibinfo{journal}{\emph{arXiv preprint arXiv:2001.00486}}
  (\bibinfo{year}{2020}).
\newblock


\bibitem[\protect\citeauthoryear{Tian, Li, Li, Szalachowski, and Zhou}{Tian
  et~al\mbox{.}}{2020}]%
        {tian2020policy}
\bibfield{author}{\bibinfo{person}{Yangguang Tian}, \bibinfo{person}{Nan Li},
  \bibinfo{person}{Yingjiu Li}, \bibinfo{person}{Pawel Szalachowski}, {and}
  \bibinfo{person}{Jianying Zhou}.} \bibinfo{year}{2020}\natexlab{}.
\newblock \showarticletitle{Policy-based Chameleon Hash for Blockchain
  Rewriting with Black-box Accountability}. In
  \bibinfo{booktitle}{\emph{ACSAC}}. \bibinfo{pages}{813--828}.
\newblock


\bibitem[\protect\citeauthoryear{Tziakouris}{Tziakouris}{2018}]%
        {tziakouris2018cryptocurrencies}
\bibfield{author}{\bibinfo{person}{Giannis Tziakouris}.}
  \bibinfo{year}{2018}\natexlab{}.
\newblock \showarticletitle{Cryptocurrencies—a forensic challenge or
  opportunity for law enforcement? an interpol perspective}.
\newblock \bibinfo{journal}{\emph{IEEE S{\&}P}} \bibinfo{volume}{16},
  \bibinfo{number}{4} (\bibinfo{year}{2018}), \bibinfo{pages}{92--94}.
\newblock


\bibitem[\protect\citeauthoryear{Wong, Wang, and Wing}{Wong
  et~al\mbox{.}}{2002}]%
        {wong2002verifiable}
\bibfield{author}{\bibinfo{person}{Theodore~M Wong}, \bibinfo{person}{Chenxi
  Wang}, {and} \bibinfo{person}{Jeannette~M Wing}.}
  \bibinfo{year}{2002}\natexlab{}.
\newblock \showarticletitle{Verifiable secret redistribution for archive
  systems}. In \bibinfo{booktitle}{\emph{First International IEEE Security in
  Storage Workshop, 2002. Proceedings.}} \bibinfo{pages}{94--105}.
\newblock


\bibitem[\protect\citeauthoryear{Yu, Nikoli{\'c}, Hou, and Saxena}{Yu
  et~al\mbox{.}}{2020}]%
        {yu2020ohie}
\bibfield{author}{\bibinfo{person}{Haifeng Yu}, \bibinfo{person}{Ivica
  Nikoli{\'c}}, \bibinfo{person}{Ruomu Hou}, {and} \bibinfo{person}{Prateek
  Saxena}.} \bibinfo{year}{2020}\natexlab{}.
\newblock \showarticletitle{Ohie: Blockchain scaling made simple}. In
  \bibinfo{booktitle}{\emph{IEEE (S\&P)}}. \bibinfo{pages}{90--105}.
\newblock


\bibitem[\protect\citeauthoryear{Zamani, Movahedi, and Raykova}{Zamani
  et~al\mbox{.}}{2018}]%
        {zamani2018rapidchain}
\bibfield{author}{\bibinfo{person}{Mahdi Zamani}, \bibinfo{person}{Mahnush
  Movahedi}, {and} \bibinfo{person}{Mariana Raykova}.}
  \bibinfo{year}{2018}\natexlab{}.
\newblock \showarticletitle{Rapidchain: Scaling blockchain via full sharding}.
  In \bibinfo{booktitle}{\emph{CCS}}. \bibinfo{pages}{931--948}.
\newblock


\bibitem[\protect\citeauthoryear{Zhou, Schneider, and Van~Renesse}{Zhou
  et~al\mbox{.}}{2005}]%
        {zhou2005apss}
\bibfield{author}{\bibinfo{person}{Lidong Zhou}, \bibinfo{person}{Fred~B
  Schneider}, {and} \bibinfo{person}{Robbert Van~Renesse}.}
  \bibinfo{year}{2005}\natexlab{}.
\newblock \showarticletitle{APSS: Proactive secret sharing in asynchronous
  systems}.
\newblock \bibinfo{journal}{\emph{ACM (TISSEC)}} \bibinfo{volume}{8},
  \bibinfo{number}{3} (\bibinfo{year}{2005}), \bibinfo{pages}{259--286}.
\newblock


\end{thebibliography}

\appendix

\section{Security Analysis of New Assumption}

\begin{theorem} \label{theo_q}
Let $(\epsilon_1, \epsilon_2, \epsilon_T) : \Z  \rightarrow \{ 0, 1 \}^* $ be three random encodings (injective functions) where $\Z$ is a prime field. $\epsilon_1$ maps all $a \in \Z$ to the string representation $\epsilon_1(g^a)$ of $g^a \in \mathbb{G}$. Similarly, $\epsilon_2$ for $\mathbb{H}$ and $\epsilon_T$ for $\mathbb{G_T}$. If $(a, b ,c, d, \{ z_i \}_{i \in [1, q']} ) \xleftarrow{\text{R}} \Z$ and encodings $\epsilon_1, \epsilon_2, \epsilon_T$ are randomly chosen, then we define the advantage of the adversary in solving the $q'$-type with at most $\calQ$ queries to the group operation oracles $\calO_1, \calO_2, \calO_T$ and the bilinear pairing $\e$ as
\begin{eqnarray*}
 \adv_{\calA}^{q'\text{-}type} (\lambda) &=& | \prob [ \calA (q, \epsilon_1 (1), \epsilon_1 (a), \epsilon_1 (c), \epsilon_1 ((ac)^2), \\ 
& & \epsilon_1 (abd), \epsilon_1 (d/ab), \epsilon_1 (z_i), \epsilon_1 (acz_i), \\
& & \epsilon_1 (ac/z_i), \epsilon_1 (a^2cz_i), \epsilon_1 (b/z_i^2), \\
& & \epsilon_1 (b^2/z_i^2),  \epsilon_1 (acz_i/z_j), \epsilon_1 (bz_i/z_j^2), \\
& & \epsilon_1 (abcz_i/z_j), \epsilon_1 ((ac)^2z_i/z_j), \\
& & \epsilon_2 (1), \epsilon_2 (b), \epsilon_2 (abd), \epsilon_2 (abcd), \\
& & \epsilon_2 (d/ab), \epsilon_2 (c), \epsilon_2 (cd/ab), \\
& = & b: (a,b,c,d, \{ z_i \}_{i \in [1, q']} , s \xleftarrow{\text{R}} \Z, b \in (0,1), \\
& & t_b = abc, t_{1-b} = s ) ] \\
& & - 1/2 | \leq  \frac{ 16 ( \calQ + q' + 22 )^2}{q}
\end{eqnarray*}

\end{theorem}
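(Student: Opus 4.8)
The plan is to run the standard generic-group simulation in the style of Shoup, adapted to the asymmetric pairing setting, and to reduce generic indistinguishability to a purely combinatorial check on Laurent monomials. First I would introduce formal indeterminates $a,b,c,d$ and $\{z_i\}_{i\in[q']}$, together with one extra indeterminate $t$ standing for the challenge, and have the simulator keep three association lists $L_\mathbb{G}, L_\mathbb{H}, L_{\mathbb{G}_T}$ of (polynomial, handle) pairs. These are initialised with the Laurent monomials occurring in ${\sf par}$ and $D$ (placed in $L_\mathbb{G}$ or $L_\mathbb{H}$ according to the group they live in), and with $t$ placed in $L_\mathbb{G}$. A query to the $\mathbb{G}$- or $\mathbb{H}$-operation oracle returns the handle of the sum of the two queried polynomials; a pairing query returns the handle of their product, inserted into $L_{\mathbb{G}_T}$; in each case a fresh uniformly random handle is assigned unless a syntactically equal polynomial is already present. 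Since $t$ is treated formally, the whole transcript of this simulation is independent of the bit $b$, so $\adv_\calA^{q'}$ is bounded by the probability that the simulation is \emph{inconsistent} with a real execution, i.e. that, under the random assignment of the indeterminates specified in the definition together with $t \mapsto abc$ (game $0$) or $t \mapsto s$ (game $1$), two syntactically distinct polynomials appearing in some list evaluate to the same value.

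By the Schwartz--Zippel lemma, applied after clearing the (bounded-degree) denominators $ab$, $z_j$ and $z_j^2$, each such coincidence occurs with probability at most $D_0/q$ for an explicit constant $D_0$ read off from the maximal total degree of a product of two entries of $D$ (the dominant contribution coming from $g^{(ac)^2 z_i/z_j}$ paired against $h^{abcd}$). Union-bounding over all unordered pairs of handles — of which there are at most a constant multiple of $(\calQ + q')^2$, since $|D| = O(q'^2)$ and the adversary makes at most $\calQ$ group/pairing queries — then gives a bound of the claimed shape $\frac{16(\calQ+q'+22)^2}{q}$; reconciling the constants $16$ and $22$ against the exact entry count of $D$ and the degree bound is routine.

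The substantive step is to show that replacing the formal $t$ by $t = abc$ creates no coincidence beyond those already counted for the formal transcript — equivalently, since the adversary can only exploit the challenge by linear operations inside $\mathbb{G}$ and by pairing it against $\mathbb{H}$-elements, that $abc$ is $\mathbb{Z}$-independent of the $\mathbb{G}$-monomials of $D$, and that for every $\mathbb{H}$-monomial $y$ occurring in $D$ (or $y = 1$) the monomial $abc \cdot y$ does not lie in the $\mathbb{Z}$-span of $\{ x \cdot y' : x \in D_\mathbb{G} \cup \{1\}, y' \in D_\mathbb{H} \cup \{1\} \}$. Because every object in play is a single Laurent monomial, ``lies in the span of'' collapses to ``equals one of'', so this reduces to the finite check $abc \cdot y \ne x \cdot y'$. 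I would organise it by the exponent of $d$: only the underlined elements carry $d$, with $d$-degree in $\{-1,0,+1\}$, and since $abc$ has $d$-degree $0$, any putative equality $abc \cdot y = x \cdot y'$ forces $\deg_d(y) = \deg_d(x) + \deg_d(y')$; a short case split on the resulting triples of $d$-degrees, followed by comparing the residual exponents of $a,b,c$ and the $z_i$-exponents (which force $i=j$ in every $z_i/z_j$ term and are therefore impossible), eliminates each combination. As the underlined terms are the only difference from the Rouselakis--Waters $q$-type instance whose generic hardness is already established, it suffices to run this check on products involving at least one underlined monomial; the replacement $t = s$ needs nothing, since $s$ appears in no element of $D$.

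I expect the main obstacle to be exactly this monomial bookkeeping: the $z_i$-indexed family, combined with the $1/ab$, $1/z_j$ and $1/z_j^2$ denominators, makes the ambient Laurent-monomial lattice large, so one must verify carefully that none of the ``new'' cross-products — for instance $g^{abcz_i/z_j}$ or $g^{(ac)^2 z_i/z_j}$ paired with $h^{d/ab}$, $h^{cd/ab}$, $h^{abd}$ or $h^{abcd}$ — ever reproduces $abc$ times an admissible $\mathbb{H}$-term, the decisive clash being each time in the $z$-exponents or in the power of $d$. The generic-group boilerplate and the Schwartz--Zippel degree accounting are comparatively standard.
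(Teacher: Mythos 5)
Your proposal follows essentially the same route as the paper's proof: a Shoup-style generic-group simulation keeping three lists of formal Laurent polynomials, a Schwartz--Zippel/degree bound after clearing the $ab$, $z_j$, $z_j^2$ denominators, and a union bound over pairs of handles yielding the $O((\calQ+q')^2/q)$ shape. The only substantive difference is that you explicitly isolate and plan to discharge the monomial-independence check for $abc$ (that no product of a $\mathbb{G}$-entry and an $\mathbb{H}$-entry of $D$ equals $abc$ times an admissible $\mathbb{H}$-monomial), via the $d$-degree and $z_i$-exponent case split -- a step the paper's writeup leaves implicit inside its assertion that $p_i-p_j$ remains non-zero after the substitution $t_b=abc$ -- and your remark that the constants need reconciling is apt, since the paper's own derivation ends with $26$ where the theorem statement has $22$.
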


\begin{proof}
	Let $\calS$ play the following game for $\calA$. $\calS$ maintains three polynomial sized dynamic lists: $ L_1=\{(p_i, \epsilon_{1,i}) \}, L_2=\{ (q_i, \epsilon_{2,i}) \}, L_T=\{ (t_i, \epsilon_{T,i}) \}$, the $p_i \in \Z [A, B, C, D, Z_i, Z_j, T_0, T_1]$ are 8-variate polynomials over $\Z$ (note that $i \neq j$), such that $p_0=1, p_1=A, p_2= C, p_3 = (AC)^2, p_4= ABD, p_5=D/AB, p_6= Z_i, p_7 = ACZ_i, p_8 = AC/Z_i, p_9 = A^2CZ_i, p_{10} = B/Z_i^2, p_{11} = B^2/Z_i^2, p_{12} = ACZ_i/Z_j, p_{13} = BZ_i/Z_j^2, p_{14} = ABCZ_i/Z_j, p_{15} = (AC)^2Z_i/Z_j, q_0=1, q_1 = B, q_2 = ABD, q_3 = ABCD, q_4 = D/AB, q_5 = C, q_6 = CD/AB, p_{16} = T_0, p_{17} = T_1, t_0 =1$, and $(\{\epsilon_{1,i}\}_{i=0}^{16} \in \{ 0, 1 \}^{*} , \{\epsilon_{2, i}\}_{i=0}^5 \in \{ 0, 1 \}^{*}, \{\epsilon_{T, 0}\} \in \{ 0, 1 \}^{*} )$ are arbitrary distinct strings. Therefore, the three lists are initialized as $L_1=\{ (p_i, \epsilon_{1,i}) \}_{i=0}^{17}, L_2= \{ (q_i , \epsilon_{2,i} ) \}_{i=0}^6, L_T= (t_0 , \epsilon_{T,0}) $.

At the beginning of the game, $\calS$ sends the encoding strings $(\{ \epsilon_{1,i} \}_{i=0, \cdots , 17},  \{ \epsilon_{2,i} \}_{i=0, \cdots , 6}, \epsilon_{T, 0} )$ to $\calA$, which includes $q'$+26 strings. Note that the number of encoding string $\epsilon_{1,i}$ is linear to the parameter $q'$. After this, $\calS$ simulates the group operation oracles $\calO_1, \calO_2, \calO_T$ and the bilinear pairing $\e$. We assume that all requested operands are obtained from $\mathcal{S}$.

\begin{itemize}
\item{$\calO_1$:} The group operation involves two operands $\epsilon_{1,i}, \epsilon_{1,j}$. Based on these operands, $\calS$ searches the list $L_1$ for the corresponding polynomials $p_i$ and $p_j$. Then $\calS$ performs the polynomial addition or subtraction $p_l=p_i \pm p_j$ depending on whether multiplication or division is requested. If $p_l$ is in the list $L_1$, then $\calS$ returns the corresponding $\epsilon_l$ to $\calA$. Otherwise, $\calS$ uniformly chooses $\epsilon_{1, l} \in \{ 0, 1 \}^*$, where $\epsilon_{1, l}$ is unique in the encoding string $L_1$, and appends the pair $(p_l, \epsilon_{1, l})$ into the list $L_1$. Finally, $\calS$ returns $\epsilon_{1, l}$ to $\calA$ as the answer. Group operation queries in $\calO_2, \calO_T$ are treated similarly. 

\item{$\e$:} The group operation involves two operands $\epsilon_{T,i}, \epsilon_{T,j}$. Based on these operands, $\calS$ searches the list $L_T$ for the corresponding polynomials $t_i$ and $t_j$. Then $\calS$ performs the polynomial multiplication $t_l=t_i \cdot t_j$. If $t_l$ is in the list $L_T$, then $\calS$ returns the corresponding $\epsilon_{T,l}$ to $\calA$. Otherwise, $\calS$ uniformly chooses $\epsilon_{T, l} \in \{ 0, 1 \}^*$, where $\epsilon_{T, l}$ is unique in the encoding string $L_T$, and appends the pair $(t_l, \epsilon_{T, l})$ into the list $L_T$. Finally, $\calS$ returns $\epsilon_{T, l}$ to $\calA$ as the answer. 

\end{itemize}

After querying at most $\calQ$ times of corresponding oracles, $\calA$ terminates and outputs a guess $b' = \{ 0,1 \}$. At this point, $\calS$ chooses random $a, b, c, d, z_i, z_j, s \in \Z$ and $ t_b = abc $ and $t_{1-b} = s $. $\calS$ sets $A = a, B = b, C=c, D=d, Z_i = z_i, Z_j = z_j, T_0 = t_b, T_1 = t_{1-b}$. The simulation by $\calS$ is perfect (and reveal nothing to $\calA$ about $b$) unless the \abort~event happens. Thus, we bound the probability of event \abort~by analyzing the following cases:

\begin{enumerate}

\item $p_i(a, b, c, d, z_i, z_j, t_0, t_1) = p_j (a, b, c, d, z_i, z_j, t_0, t_1)$: The polynomial $p_i \neq p_j$ due to the construction method of $L_1$, and $(p_i - p_j) (a, b, c, d, z_i, z_j,t_0, t_1)$ is a non-zero polynomial of degree $[0, 6]$, or $q$-2 ($q$-2 is produced by $Z_j^{q-2}$). Since $Z_j \cdot Z_j^{q-2} = Z_j^{q-1} \equiv 1 (\mod q)$, we have $(AC)^2Z_iZ_j \cdot Z_j^{q-2} \equiv (AC)^2Z_iZ_j (\mod q)$.  By using Lemma 1 in \cite{shoup1997lower}, we have $\Pr[(p_i - p_j)(a, b, c, d, z_i, z_j, t_0, t_1) = 0] \leq \frac{6}{q}$ because the maximum degree of $(AC)^2Z_i/Z_j (p_i - p_j) (a, b, c, d, z_i, z_j, t_0, t_1) $ is 6. So, we have $\Pr[p_i(a, b, c, d, z_i, z_j, t_0, t_1) = p_j(a, b, c, d, z_i, z_j, t_0, t_1)] \leq \frac{6}{q}$, and the \abort~probability is $\Pr[\textsf{abort}_1] \leq \frac{6}{q}$.

\item $q_i(a, b, c, d, z_i, z_j, t_0, t_1) = q_j (a, b, c, d, z_i, z_j, t_0, t_1)$: The polynomial $q_i \neq q_j$ due to the construction method of $L_2$, and $(q_i - q_j) (a, b, c, d, z_i, z_j, t_0, t_1)$ is a non-zero polynomial of degree $[0, 4]$, or $q$-2 ($q$-2 is produced by $(AB)^{q-2}$). Since $AB \cdot (AB)^{q-2} = (AB)^{q-1} \equiv 1 (\mod q)$, we have $CDAB \cdot (AB)^{q-2} \equiv CDAB (\mod q)$. The maximum degree of $ CD/AB (q_i - q_j) (a, b, c, d, z_i, z_j, t_0, t_1) $ is 4, so the \abort~probability is $\Pr[\textsf{abort}_2] \leq \frac{4}{q}$.

\item $t_i(a, b, c, d, z_i, z_j, t_0, t_1) = t_j (a, b, c, d, z_i, z_j, t_0, t_1)$: The polynomial $p_i \neq p_j$ due to the construction method of $L_1$, and $(p_i - p_j) (a, b, c, d, z_i, z_j,t_0, t_1)$ is a non-zero polynomial of degree $[0, 6]$, or $q$-2. Since $(AC)^2Z_i \cdot Z_j^{q-2} (t_i - t_j) (a, b, c, d, z_i, z_j, t_0, t_1)$ has degree 6, we have $\Pr[(p_i - p_j)(a, b, c, d, z_i, z_j, t_0, t_1) = 0] \leq \frac{6}{q}$. The \abort~probability is $\Pr[\textsf{abort}_3] \leq \frac{6}{q}$.

\end{enumerate}
   
By summing over all valid pairs $(i,j)$ in each case (i.e., at most  $\binom{ \calQ_{\epsilon_1} + 18 }{2} + \binom{\calQ_{\epsilon_2}+ 7 }{2} + \binom{\calQ_{\epsilon_T}+1}{2}$ pairs), and $\calQ_{\epsilon_1} + \calQ_{\epsilon_2} + \calQ_{\epsilon_T} = \calQ + q' + 26$, we have the \textsf{abort} probability is
\begin{align*}
        \Pr[\textsf{abort}]
        & = \Pr[\textsf{abort}_1] + \Pr[\textsf{abort}_2] + \Pr[\textsf{abort}_3] \\
        & \leq [ \binom{ \calQ_{\epsilon_1} + 18 }{2} + \binom{\calQ_{\epsilon_2}+ 7 }{2} + \binom{\calQ_{\epsilon_T}+1}{2} ] \\
        &  \cdot ( \frac{4}{q} + 2 \frac{6}{q} ) \leq \frac{ 16 ( \calQ + q' + 26 )^2}{q}.
\end{align*}
\end{proof}

\section{P2P Communication Technique \cite{maram2019churp}}

P2P channels can be implemented in different ways depending on the deployment environment. In a permissionless setting, establishing a direct off-chain connection between users is undesirable, as it compromises users' anonymity. Revealing network-layer identities is also dangerous, as it may lead to targeted attacks. Here, we list two approaches. The first approach is to use anonymizing overlay networks such as Tor, at the cost of considerable additional setup and engineering complexity. The second approach uses transaction ghosting, a technique for P2P messaging on a blockchain, which is an overlay on existing blockchain infrastructure. The key idea is to overwrite transactions so that they are broadcast but subsequently dropped by the network. Most of these transactions are broadcast for free. 

Now, we use the Ethereum P2P network as an example; a similar technique can apply to other blockchains such as Bitcoin. Suppose Alice creates a transaction $T$ and sends it to network peers. Note that $T$ includes a message (i.e., payload). The network peers add $T$ to their pool of unconfirmed transactions, known as $mempool$. They propagate $T$ so that it can be included in all peers’ views of the $mempool$. $T$ remains in the $mempool$ until a miner includes it in a block, at which point it is removed, and a transaction fee is transferred from Alice to the miner. The key observation is, until $T$ is mined, Alice can overwrite it with another transaction $T'$ (embed empty message). When this happens, $T$ is dropped from the $mempool$. Thus, both $T$ and $T'$ are propagated to all users, but Alice only pays for $T'$. One can see that $T$ is broadcast for free, and we denote $T$ as pending transaction before overwrite. To conclude, transaction ghosting guarantees that a sender and a receiver can efficiently establish an Ethereum P2P channel via pending transactions such as $T$. The core idea is that the sender can transmit messages to the receiver by embedding them in pending transactions.

%\subsection{Security Models}

%We compare our proposed security models with the closely related security models. First, our proposed indistinguishability and adaptive collision-resistance models are derived from the strong indistinguishability and insider collision-resistance models defined in \cite{DBLP:conf/ndss/DerlerSSS19}. In particular, the proposed adaptive collision-resistance model is more strong compared to the models in \cite{camenisch2017chameleon,DBLP:conf/ndss/DerlerSSS19}, as adversary can periodically corrupt a threshold number of shareholders. Second, our proposed accountability model is derived from the sanitizer accountability model defined in \cite{samelin2020policy}. Note that our proposed accountability model grants a judge oracle to attackers, which determines whether or not a modified transaction links to a responsible modifier and committee.

\section{Security Analysis of Generic Framework}

In this section, we present the security analysis of the proposed generic framework, including indistinguishability, adaptive collision-resistance, and accountability. 

\subsection{Proof of Theorem \ref{the_ind}}

\begin{theorem}\label{the_ind}	
	The proposed generic framework is indistinguishable if the {\sf CH} scheme is indistinguishable. 
\end{theorem} 

\begin{proof} The reduction is executed between an adversary $\calA$ and a simulator $\calS$. Assume that $\calA$ activates at most $n(\lambda)$ chameleon hashes. Let $\calS$ denote a distinguisher against {\sf CH}, who is given a chameleon public key $\pk^*$ and a {\sf HashOrAdapt} oracle, aims to break the indistinguishability of {\sf CH}. In particular, $\calS$ is allowed to access the chameleon trapdoor $Dlog(\pk^*)$ \cite{DBLP:conf/ndss/DerlerSSS19}. $\calS$ randomly chooses $g$ $\in [1, n(\lambda)]$ as a guess for the index of the {\sf HashOrAdapt} query. In the $g$-th query, $\calS$'s challenger directly hashes a message $(h, {\sf r}) \leftarrow {\sf Hash} (\pk^*, m)$, instead of calculating the chameleon hash and randomness $(h, {\sf r})$ using {\sf Adapt} algorithm. 

$\calS$ sets up the game for $\calA$ by distributing a master secret key to a group of users in a committee. $\calS$ can honestly generate secret keys for any modifier associated with an access privilege $\Lambda$. If $\calA$ submits a tuple $(m_0, m_1, \delta)$ in the $g$-th query, then $\calS$ first obtains a chameleon hash $(h_b, {\sf r}_b)$ from his {\sf HashOrAdapt} oracle. Then, $\calS$ simulates a message-signature pair $(c, \sigma)$, and a ciphertext $C$ on message $Dlog(\pk^*)$ according to the protocol specification. Eventually, $\calS$ returns $(h_b, {\sf r}_b, C, c, \sigma)$ to $\calA$. $\calS$ outputs whatever $\calA$ outputs. If $\calA$ guesses the random bit correctly, then $\calS$ can break the indistinguishability of {\sf CH}. 
	
\end{proof}	

\subsection{Proof of Theorem \ref{the_cr}}

\begin{theorem}\label{the_cr}	
	The proposed generic framework is adaptively collision-resistant if the {\sf ABET} scheme is semantically secure, the {\sf CH} scheme is collision-resistant, and the {\sf DPSS} scheme has secrecy. 
\end{theorem}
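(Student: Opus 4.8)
We sketch the argument as a short sequence of game hops, starting from the $\sf ACR$ experiment of Figure~\ref{Coll} and ending in a game in which a winning adversary directly breaks the collision-resistance of $\sf CH$; the intermediate hops are justified by the secrecy of $\sf DPSS$ and the semantic security of $\sf ABET$. Let $n(\lambda)$ bound the number of ${\sf Hash'}$ queries and let $\calA$ be a PPT adversary with non-negligible $\adv_{\calA}^{\sf ACR}$. \emph{Game $0$} is the real experiment; in \emph{Game $1$} the simulator picks $g \getsr [n(\lambda)]$ and aborts unless $\calA$'s forgery uses the chameleon hash $h^*$ output by the $g$-th ${\sf Hash'}$ query, which costs a factor $1/n(\lambda)$. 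From now on the simulator generates every hash itself, so it knows the chameleon key pair of $h^*$ (call its trapdoor $\sk_{\sf CH}^*$) and of all other hashes; hence every ${\sf Adapt'}$ query can be answered by invoking ${\sf Adapt}_{\sf CH}$ directly with the relevant trapdoor, never decrypting the embedded $\sf ABET$ ciphertext, while the sets $\calQ_1,\calQ_2,\calQ_3$ and the index conditions built into ${\sf Adapt}$ are maintained exactly as specified.

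In \emph{Game $2$} we replace the honest runs of ${\sf Share}_{\sf DPSS}$ and ${\sf Redistribute}_{\sf DPSS}$ by the $\sf DPSS$ simulator, which, given the per-epoch corruption pattern that $\calA$ induces through ${\sf Corrupt}$ (at most a threshold $t$ in any epoch, as the experiment requires), produces a consistent transcript and consistent corrupted shares $\{s_i^e\}^t$ without ever fixing $\msk_{\sf ABET}$; by secrecy of $\sf DPSS$ against such a mobile adversary, $\calA$'s view is computationally unchanged, and every $\sf DPSS$-derived value handed to $\calA$ is now independent of $\msk_{\sf ABET}$. In \emph{Game $3$} we replace the ciphertext $C \leftarrow {\sf Enc}_{\sf ABET}(\mpk_{\sf ABET},\sk_{\sf CH}^*,\delta,j)$ of the $g$-th hash by an encryption of an independent random string of equal length. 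A distinguisher for this hop yields an $\sf ABET$ attacker $\calB$: it takes $\mpk_{\sf ABET}$, answers ${\sf KeyGen'}$ through its key-generation oracle --- legal precisely because the winning predicate forces $\delta$ to satisfy no policy in $\calQ_1$ --- answers ${\sf Corrupt}$ with the $\sf DPSS$ simulator (sound by Game $2$, and the reason $\calB$ need not know $\msk_{\sf ABET}$), answers ${\sf Hash'}$ and ${\sf Adapt'}$ with the self-generated chameleon trapdoors (re-encrypting $\sk_{\sf CH}$ under $\mpk_{\sf ABET}$ whenever a fresh ciphertext or signature is needed), and submits $(\sk_{\sf CH}^*,R')$ as challenge messages under $\delta$ and index $j$, planting the challenge ciphertext as $C$; $\calB$ outputs $1$ iff $\calA$ wins, so the gap is at most $\adv^{\sf sem}_{\sf ABET}$. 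The signature component $c$ of the $g$-th hash carries no usable information about $\sk_{\sf CH}^*$ because it is masked by the freshly generated secret signing key, so no further hop is needed for it.

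It remains to reduce Game $3$ to $\sf CH$ collision-resistance. There, $C$ and $c$ of the $g$-th hash are independent of $\sk_{\sf CH}^*$, so the only way $\calA$ still touches $\sk_{\sf CH}^*$ is through ${\sf Adapt'}$ queries on $h^*$. Build $\calC$ against $\sf CH$: it receives a chameleon public key, plants it as the public key of $h^*$, answers every ${\sf Adapt'}$ on $h^*$ via its own $\sf CH$ adaptation oracle (recording the resulting messages in $\calQ_3$), and simulates the rest as in Game $3$. When $\calA$ outputs a valid forgery, the winning conditions give ${\sf Verify}_{\sf CH}$ accepting both $(m^*,r^*)$ and $(m^{*'},r^{*'})$ under $h^*$ with $m^*\neq m^{*'}$ and $(h^*,m^*,\cdot)\notin\calQ_3$; the last condition ensures $(m^*,r^*)$ was never produced by $\calC$'s adaptation oracle, so $\big((m^*,r^*),(m^{*'},r^{*'})\big)$ is a fresh collision. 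Collecting the bounds, $\adv_{\calA}^{\sf ACR}(\lambda)\le n(\lambda)\big(\adv^{\sf secrecy}_{\sf DPSS}(\lambda)+\adv^{\sf sem}_{\sf ABET}(\lambda)+\adv^{\sf CR}_{\sf CH}(\lambda)\big)$, which is negligible.

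The delicate part of the plan is making the $\sf ABET$ reduction $\calB$ go through: it must service ${\sf Adapt'}$, ${\sf KeyGen''}$ and ${\sf Corrupt}$ without ever holding $\msk_{\sf ABET}$ or decrypting the challenge ciphertext, which forces every adaptation through directly known chameleon trapdoors while still faithfully tracking $\calQ_2,\calQ_3$ and the index conditions, and one must verify that no policy $\calB$ forwards to its $\sf ABET$ key oracle is satisfied by the challenge attribute set $\delta$ --- exactly the guarantee the $\sf ACR$ predicate $\delta\cap\calQ_1=\emptyset$ supplies. Secondarily, the $\sf DPSS$ secrecy hop must be compatible with the mobile, adaptive corruption schedule of $\calA$, and the target-hash guess contributes the (polynomial) security loss.
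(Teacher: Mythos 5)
Your proposal follows essentially the same route as the paper's proof: a hybrid argument that first guesses the target ${\sf Hash'}$ query (losing a factor $n(\lambda)$), then invokes {\sf DPSS} secrecy to decouple the adversary's view from $\msk_{\sf ABET}$, then uses {\sf ABET} semantic security to replace the encrypted chameleon trapdoor in the target ciphertext (the paper uses $\bot$ where you use a random string, which is immaterial), and finally reduces a winning forgery to a fresh {\sf CH} collision, arriving at the identical bound $n(\lambda)\bigl(\adv^{\sf DPSS}+\adv^{\sf ABET}+\adv^{\sf CH}\bigr)$. The only differences are presentational --- you reorder the guess and the {\sf DPSS} hop and phrase the latter via a simulator rather than via the event that $\calA$ recovers $\msk$, and you spell out the oracle simulation in more detail --- so this is the same proof.
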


\begin{proof} We define a sequence of games $\mathbb{G}_i$, $i=0, \cdots, 4$ and let $\adv_i^{GF}$ denote the advantage of the adversary in game $\mathbb{G}_i$. Assume that $\calA$ issues at most $n(\lambda)$ queries to ${\sf Hash}$ oracle. 
\begin{itemize}

\item{$\mathbb{G}_0$:} This is the original game for adaptive collision-resistance.

\item{$\mathbb{G}_1$:} This game is identical to game $\mathbb{G}_1$ except that $\calS$ will output a random bit if $\calA$ outputs a correct master secret key when no more than $t$ users in a committee are corrupted, and the setup is honest (or the dealer is honest). The difference between $\mathbb{G}_0$ and $\mathbb{G}_1$ is negligible if {\sf DPSS} scheme has secrecy. 
\begin{equation}
\left|\adv_0^{GF}-\adv_1^{GF}\right| \leq \adv_{\calS}^{\sf DPSS}(\lambda).
\end{equation}

\item{$\mathbb{G}_2$:} This game is identical to game $\mathbb{G}_1$ except the following difference: $\calS$ randomly chooses $g \in [1, n(\lambda)]$ as a guess for the index of the ${\sf Hash'}$ oracle which returns the chameleon hash $(h^*, m^*, {\sf r}^*, C^*, c^*, \sigma^*)$. $\calS$ will output a random bit if $\calA$'s attacking query does not occur in the $g$-th query. Therefore, we have
\begin{equation}
\adv_1^{GF}= n(\lambda) \cdot \adv_2^{GF}
\end{equation}

\item{$\mathbb{G}_3$:} This game is identical to game $\mathbb{G}_2$ except that in the $g$-th query, the encrypted message $\sk_{\sf CH}$ in $C^*$ is replaced by $``\bot"$ (i.e., an empty value). Below we show that the difference between $\mathbb{G}_2$ and $\mathbb{G}_3$ is negligible if {\sf ABET} scheme is semantically secure. 

Let $\calS$ denote an attacker against {\sf ABET} with semantic security, who is given a public key $\pk^*$ and a key generation oracle, aims to distinguish between encryptions of $M_0$ and $M_1$ associated with a challenge index $j^*$ and a challenge set of attributes $\delta^*$, which are predetermined at the beginning of the game for semantic security. $\calS$ simulates the game for $\calA$ as follows. 

\begin{itemize}

\item $\calS$ sets up $\mpk_{\sf ABET} = \pk^*$ and completes the remainder of {\sf Setup} honestly, which includes user's key pairs and chameleon key pairs for hashing in {\sf CH}. $\calS$ returns all public information to $\calA$. 

\item $\calS$ can honestly answer the queries made by $\calA$ regarding decryption keys using his given oracle, such that $\Lambda_i ( \delta^* ) \neq 1, i \neq j^* $. In the $g$-th query, upon receiving a hash query w.r.t., a hashed message $m^*$ from $\calA$. $\calS$ first submits two messages (i.e., $[ M_0 = \sk_{\sf CH}, M_1 = \bot ]$) to his challenger, and obtains a challenge ciphertext $C^*$ under index $j^*$ and $\delta^*$. Then, $\calS$ returns the tuple $(h^*, m^*, {\sf r}^*, C^*, c^*, \sigma^*)$ to $\calA$. Note that $\calS$ can simulate the message-signature pair $(c^*, \sigma^*)$ honestly using user's key pairs. Besides, $\calS$ can simulate the adapt query successfully using $\sk_{\sf CH}$.

\end{itemize}

\noindent If the encrypted message in $C^*$ is $\sk_{\sf CH}$, then the simulation is consistent with $\mathbb{G}_2$; Otherwise, the simulation is consistent with $\mathbb{G}_3$. Therefore, if the advantage of $\calA$ is significantly different in $\mathbb{G}_2$ and $\mathbb{G}_3$, $\calS$ can break the semantic security of the {\sf ABET}. Hence, we have
\begin{equation}
\left|\adv_2^{GF}-\adv_3^{GF}\right| \leq \adv_{\calS}^{\sf ABET}(\lambda).
\end{equation}

\item{$\mathbb{G}_4$:} This game is identical to game $\mathbb{G}_3$ except that in the $g$-th query, $\calS$ outputs a random bit if $\calA$ outputs a valid collision $(h^{*}, m^{*'}, {\sf r}^{*'}, C^{*'} , c^{*'}, \sigma^{*'})$, and it was not previously returned by the {\sf Adapt'} oracle. Below we show that the difference between $\mathbb{G}_3$ and $\mathbb{G}_4$ is negligible if {\sf CH} is collision-resistant. 

Let $\calS$ denote an attacker against {\sf CH} with collision-resistant, who is given a chameleon public key $\pk^*$ and an {\sf Adapt'} oracle, aims to find a collision which was not simulated by the {\sf Adapt'} oracle. $\calS$ simulates the game for $\calA$ as follows. 

\begin{itemize}

\item $\calS$ sets up $\pk_{\sf CH} = \pk^*$ for the $g$-th hash query, and completes the remainder of {\sf Setup} honestly, which includes user's key pairs and master key pair in {\sf ABET}. $\calS$ returns all public information to $\calA$. 

\item $\calS$ can simulate all queries made by $\calA$ except adapt queries. If $\calA$ submits an adapt query in the form of $(h, m, {\sf r}, C, c, \sigma, m')$, then $\calS$ obtains a randomness ${\sf r'}$ from his {\sf Adapt'} oracle, and returns $( h, m', {\sf r'}, C',  c', \sigma' )$ to $\calA$. In particular, $\calS$ simulates the $g$-th hash query as $(h^*, m^*, {\sf r}^*, C^*,  c^*, \sigma^*)$ w.r.t. a hashed message $m^*$, where $C^* \leftarrow {\sf Enc}_{\sf ABET}(\mpk^*, \bot, \delta^*, j^*)$, $c^*$ is derived from $\bot$ because $Dlog(\pk^*)$ is unknown.

\item If $\calA$ outputs a collision $(h^*, m^*, {\sf r}^*, C^*, c^*, \sigma^*, m^{*'}, \break {\sf r}^{*'}, C^{*'}, c^{*'},  \sigma^{*'})$ with respect to the $g$-th query, and all relevant checks are succeed, then $\calS$ output $(h^*, m^{*'}, {\sf r}^{*'}, C^{*'}, c^{*'}, \sigma^{*'})$ as a valid collision to {\sf CH}; Otherwise, $\calS$ aborts the game. Therefore, we have
\begin{equation}
\left|\adv_3^{GF}-\adv_4^{GF}\right| \leq \adv_{\calS}^{\sf CH}(\lambda).
\end{equation}

\end{itemize}
	
\noindent Combining the above results together, we have
\begin{eqnarray*}
\adv_{\mathcal{A}}^{GF}(\lambda) & \le & n(\lambda) \cdot  ( \adv_{\calS}^{\sf DPSS}(\lambda) +  \adv_{\calS}^{\sf ABET}(\lambda) \\
& & + \adv_{\calS}^{\sf CH}(\lambda) ).
\end{eqnarray*}

\end{itemize}
	
\end{proof}

\subsection{Proof of Theorem \ref{the_acc}}

\begin{theorem}\label{the_acc}	
	The proposed generic framework is accountable if the $\Sigma$ scheme is EUF-CMA secure, and the {\sf DPSS} scheme has correctness.  
\end{theorem}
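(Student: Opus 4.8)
The plan is to start from the accountability experiment ${\sf Exp}_{\calA}^{\sf ACT}(\lambda)$ and split the adversary's success event according to the two disjuncts in the winning predicate. When ${\sf Judge}$ links the output transaction $T^{*}=(h,m^{*'},{\sf r}^{*'},\sigma^{*'})$ (with $\sigma^{*'}=(c^{*'},\sigma_{\Sigma}^{*'})$) to a committee ${\sf C}^{*}$, the adversary wins if \emph{either} $T^{*}$ was never produced by the ${\sf Judge'}$ oracle ($T^{*}\notin\calQ$), \emph{or} the verification key $\pk^{*}$ that authenticates $\sigma_{\Sigma}^{*'}$ was never a member of ${\sf C}^{*}$ ($\pk^{*}\notin{\sf C}^{*}$). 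I would bound the first event by the $\eufcma$ security of $\Sigma$ and the second by the correctness of {\sf DPSS}, giving a bound of the form $\adv_{\calA}^{\sf ACT}(\lambda)\le\mathrm{poly}(n)\cdot\adv_{\calS}^{\eufcma}(\lambda)+\adv_{\calS}^{\sf DPSS}(\lambda)$.

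For the first branch I would construct a forger $\calS$ against $\Sigma$. Since at most $n$ participants exist per epoch, $\calS$ guesses the user whose key will be implicated (losing a factor polynomial in $n$) and plants the $\eufcma$ challenge verification key as that user's $\pk$; all other user keys, the {\sf ABET} master-key shares, the chameleon keys, and the {\sf DPSS} transcripts are generated honestly, so $\calS$ can answer every ${\sf Judge'}$ query by running {\sf KeyGen}, {\sf Adapt}, and {\sf Judge} as specified, invoking its signing oracle whenever the implicated user must sign. Because ${\sf Judge}$ only outputs a link after checking the homomorphic-signature relation $\pk'=\pk\cdot\Delta(\sk)$ and verifying $\sigma_{\Sigma}^{*'}$ under $\pk^{*}$, a successful $\calA$ in this branch hands $\calS$ a valid pair $(c^{*'},\sigma_{\Sigma}^{*'})$ under the challenge key. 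It remains to argue freshness: since $T^{*}\notin\calQ$ and each honestly produced modified transaction carries a signature on a message $c'$ pinned down by the modifier's key together with the trapdoor $\sk_{\sf CH}$, the message $c^{*'}$ was never submitted to the signing oracle, so $(c^{*'},\sigma_{\Sigma}^{*'})$ is a genuine forgery.

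For the second branch I would argue soundness of the committee-membership determination inside ${\sf Judge}$. ${\sf Judge}$ declares $\pk^{*}\in{\sf C}^{*}$ only after inspecting the commitments and witnesses that committee members posted on-chain during the {\sf DPSS} handoff (share reduction and full-share distribution), which are confirmed under PoW and hence immutable. If $\pk^{*}$ never participated in ${\sf C}^{*}$, then an honest execution of {\sf Redistribute} would never have accepted a share from $\pk^{*}$ nor recorded the corresponding commitment; by the correctness (and public verifiability) of {\sf DPSS}, producing an on-chain transcript that makes ${\sf Judge}$ accept $\pk^{*}\in{\sf C}^{*}$ contradicts that guarantee. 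After these two hops the winning predicate cannot be satisfied, so the advantage is negligible.

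The main obstacle I expect is the second branch: "{\sf DPSS} correctness" as stated only asserts that honest parties recover the shared secret, whereas the soundness of the membership check really rests on the verifiability of the commitment scheme embedded in {\sf DPSS} (the KZG commitments of the instantiation) together with the immutability of the on-chain record. I would therefore make precise, in the generic framework, exactly which {\sf DPSS} property is invoked --- essentially that no adversary can forge a valid membership transcript for a committee it did not join --- and reconcile that with the abstract {\sf DPSS} interface. The signature branch, by contrast, is a routine reduction once the user-guessing step and the freshness of $c^{*'}$ are handled carefully.
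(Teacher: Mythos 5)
Your proposal is correct in substance and rests on the same two reductions as the paper --- a guess-and-plant reduction to the $\eufcma$ security of $\Sigma$ that loses a factor of $n$ over the users, and a hop charged to {\sf DPSS} --- but it organizes them differently and uses the {\sf DPSS} hypothesis for a different purpose. The paper proceeds by sequential game hops: $\mathbb{G}_1$ aborts if the honest committee members reconstruct a wrong master secret key $\msk'\neq\msk$ (which is what {\sf DPSS} correctness literally provides), and $\mathbb{G}_2$ aborts on a signature forgery, which the forger extracts under the planted challenge key via the homomorphic map $M_{\Sigma}$ with the shift $\Delta(\sk)$ recovered from the pair $(c,c^{*})$. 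You instead split the winning predicate by its two disjuncts, charging $T^{*}\notin\calQ$ to the forger and $\pk^{*}\notin{\sf C}^{*}$ to {\sf DPSS}. Your split tracks the experiment's winning condition more explicitly, and your observation that the second disjunct really requires membership-transcript soundness (public verifiability of the KZG commitments posted on-chain) rather than bare {\sf DPSS} correctness is well taken --- the paper's $\mathbb{G}_1$ hop does not directly address that disjunct either, so making the invoked property precise would strengthen rather than merely reproduce the argument. One detail to align with the paper in the forgery branch: the adversary's signature may verify under a shifted key $\pk^{*}\cdot\Delta(\sk)$ rather than under the planted challenge key itself, so the reduction must explicitly pull the forgery back using the linearity of keys and signatures (as the paper does with $M_{\Sigma}$); your claim that $\calA$ ``hands $\calS$ a valid pair under the challenge key'' glosses over this step.
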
 

\begin{proof} We define a sequence of games $\mathbb{G}_i$, $i=0, \cdots, 2$ and let $\adv_i^{GF}$ denote the advantage of the adversary in game $\mathbb{G}_i$. 
	
\begin{itemize}

\item{$\mathbb{G}_0$:} This is the original game for accountability.

\item{$\mathbb{G}_1$:} This game is identical to game $\mathbb{G}_1$ except that $\calS$ will output a random bit if all honest users in a committee outputs a master secret key $\msk'$ such that $\msk' \neq \msk$, and the setup is honest. The difference between $\mathbb{G}_0$ and $\mathbb{G}_1$ is negligible if {\sf DPSS} scheme has correctness. 
\begin{equation}
\left|\adv_0^{GF}-\adv_1^{GF}\right| \leq \adv_{\calS}^{\sf DPSS}(\lambda).
\end{equation}

\item{$\mathbb{G}_2$:} This game is identical to game $\mathbb{G}_1$ except that $\calS$ will output a random bit if $\calA$ outputs a valid forgery $\sigma^*$, where $\sigma^*$ was not previously simulated by $\calS$ and the user is honest. The difference between $\mathbb{G}_1$ and $\mathbb{G}_2$ is negligible if $\Sigma$ is EUF-CMA secure.  

Let $\calF$ denote a forger against $\Sigma$, who is given a public key $\pk^*$ and a signing oracle $\calO^{\sf Sign}$, aims to break the EUF-CMA security of $\Sigma$. Assume that $\calA$ activates at most $n$ users in the system. 

\begin{itemize}
	\item $\calF$ randomly chooses a user in the system and sets up its public key as $\pk^*$. $\calF$ completes the remainder of {\sf Setup} honestly. Below we mainly focus on user $\pk^*$ only. 
	
	\item To simulate a chameleon hash for message $m$, $\calF$ first obtains a signature $\sigma$ from his signing oracle $\calO^{\sf Sign}$. Then, $\calF$ generates chameleon hash and ciphertext honestly because $\calF$ chooses the chameleon secret key $\sk_{\sf CH}$, and returns $(m, h, {\sf r}, C, c, \sigma )$ to $\calA$. Besides, the message-signature pairs and collisions can be perfectly simulated by $\calF$ for any adapt query. $\calF$ records all the simulated message-signature pairs by including them to a set $\calQ$. 
	
	\item When forging attack occurs, i.e., $\calA$ outputs $( m^*, h^*, {\sf r}^*, C^*, c^*, \sigma^* )$, $\calF$ checks whether:

\begin{itemize}

\item the forging attack happens to user $\pk^*$;

\item the ciphertext $C^*$ encrypts the chameleon trapdoor $\sk_{\sf CH}$;

\item the message-signature pair $(c^*, \sigma^*)$ is derived from the chameleon trapdoor $\sk_{\sf CH}$; 

\item the message-signature pair $(c^*, \sigma^{*}) \notin \calQ$;

\item $1 \leftarrow \Sigma.{\sf Verify}( \pk^*, c^*, \sigma^{*} )$ and $1 \leftarrow {\sf Verify}( \pk_{\sf CH}, \break m^*, h^*, {\sf r}^* )$.

\end{itemize}
	
\end{itemize}

\noindent If all the above conditions hold, $\calF$ confirms that it as a successful forgery from $\calA$, then $\calF$ extracts the forgery via $\sigma \leftarrow M_{\Sigma} ( {\sf PP}, \pk^*, \sigma^{*}, \Delta(\sk) ) $ due to the homomorphic property of $\Sigma$ (regarding keys and signatures), where $\Delta(\sk)$ is derived from $(c, c^*)$. To this end, $\calF$ outputs $\sigma$ as its own forgery; Otherwise, $\calF$ aborts the game. 
\begin{equation}
\left|\adv_1^{GF}-\adv_2^{GF}\right| \leq n \cdot \adv_{\calF}^{\Sigma}(\lambda).
\end{equation}

\end{itemize}

\noindent Combining the above results together, we have
\begin{eqnarray*}
\adv_{\mathcal{A}}^{GF}(\lambda) & \le &  \adv_{\calS}^{\sf DPSS}(\lambda) + n \cdot \adv_{\calF}^{\Sigma}(\lambda).
\end{eqnarray*}

\end{proof}

\section{Security Analysis of ABET}

In this section, we present the security analysis of the proposed ABET scheme, including semantic security and ciphertext anonymity. 

\subsection{Semantic Security}

Informally, an ABE scheme is secure against chosen plaintext attacks if no group of colluding users can distinguish between encryption of $M_0$ and $M_1$ under an index and a set of attributes $\delta^*$ of an attacker's choice as long as no member of the group is authorized to decrypt on her own. The selective security is defined as an index, as well as a set of attributes $\delta^*$, are chosen by attackers at the beginning security experiment. The semantic security model here is based on the selective security model defined in \cite{rouselakis2013practical}. 

\begin{theorem}
The proposed ABET scheme is semantically secure in the standard model if the $q'$-type assumption is held in the asymmetric pairing groups. 	
\end{theorem}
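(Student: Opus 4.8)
The plan is a selective-security reduction: from any PPT adversary $\calA$ that distinguishes ABET encryptions of $M_0$ from $M_1$ we build a distinguisher $\calB$ for the Extended $q$-type (i.e.\ $q'$-type) assumption. First $\calB$ receives $({\sf par},D,T_\nu)$ with $T_0=g^{abc}$ or $T_1=g^s$, and receives from $\calA$ the committed challenge attribute set $\delta^*$ and challenge index $(I_1,\dots,I_{j^*})$. Then $\calB$ programs the master public key $\mpk=(g,u,v,w,h,\e(g,h)^\alpha,\{g_\ell^\alpha\},\{h_\ell^\alpha\},g^\beta,h^{1/\alpha},h^{\beta/\alpha},\e(g,h)^{\theta/\alpha})$ so that $(\alpha,\beta,\theta)$ are defined implicitly through the assumption exponents $a,b,c,d,\{z_i\}$. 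Following the Rouselakis--Waters blueprint \cite{rouselakis2013practical}, the elements $u,v,w$ and the attribute-hashing values are set from the $g^{z_i},g^{acz_i},g^{ac/z_i},g^{b/z_i^2}$ family so that the monomial ``slot'' $z_i$ can later absorb the $i$-th row of any queried MSP, while the index/HIBE layer $\{g_\ell^\alpha\},\{h_\ell^\alpha\},h^{1/\alpha},h^{\beta/\alpha},\e(g,h)^{\theta/\alpha}$ is built from the underlined additions $h^b,g^{abd},g^{d/ab},h^{abd},h^{abcd},h^{d/ab},h^c,h^{cd/ab}$ in $D$, which were put there precisely so that the $1/\alpha$ divisions and the index chain become computable.

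I would then simulate the {\sf KeyGen} oracle. By the selective restriction every query is on a policy $\Lambda$ with $\Lambda(\delta^*)\neq 1$, or on a policy satisfied by $\delta^*$ whose index $i$ stands in the non-decrypting HIBE relation to $j^*$ (otherwise $\calA$ wins trivially). In the first case I would invoke the linear-reconstruction property of the MSP (Section~\ref{MSP}): because $\delta^*$ is unauthorized there is a vector orthogonal to every row labelled by an attribute of $\delta^*$, and the unknown $\alpha$ is shared along that vector so that the dangerous monomials cancel and each simulated $\sk_{(\tau,1)}=g^{s_\tau}w^{t_\tau}$, $\sk_{(\tau,2)}=(u^{\pi(\tau)}v)^{-t_\tau}$, $\sk_{(\tau,3)}=h^{t_\tau}$ is computable from $D$ after choosing $t_\tau$ with a hidden $z_i$-shift that kills the leftover $g^{a^2cz_i}$-type factors, exactly as in \cite{rouselakis2013practical}. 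In the second case the policy part is generated honestly (with a known $\alpha$-share along the authorized vector) while the index block $\sk_0,\sk_1=g^\theta\widehat{i}^{\,t^*}g^{\beta r^*},\sk_2$ and its delegation components are simulated because the gap between the queried index and $j^*$ leaves an uncancelled $g_\ell^\alpha$ factor that supplies the missing randomness, with the $h^{1/\alpha}$-type terms from $D$ clearing the denominators.

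For the challenge ciphertext I would take the encryption randomness to be the implicit exponent behind $T_\nu$, form $ct_{(\tau,1)}=h^{r_\tau}$, $ct_{(\tau,2)}=(u^{A_\tau}v)^{r_\tau}w^{-s}$, $ct_0=(h^s,h^{s/\alpha},h^{\beta s/\alpha})$ and $ct_1=(\widehat{j^*})^{s}$ from the $h$-side terms of $D$, and set the encapsulation $ct=(M_{\nu'}\,\|\,0^{l-|M_{\nu'}|})\oplus\h_2(X\,\|\,Y)$ where the slot $X$ that should be $\e(g,h)^{\alpha s}$ is replaced by a pairing of $T_\nu$ with a suitable $h$-side element and $\nu'\in\{0,1\}$ is the hidden message bit. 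When $T_\nu=g^{abc}$ this is a correctly distributed encryption of $M_{\nu'}$, so $\calA$'s view is identical to the real experiment; when $T_\nu=g^s$ the slot $X$ is a uniformly random target-group element independent of everything else, hence $\calA$'s view is independent of $\nu'$ up to the negligible slack of the fixed-pattern padding check. Therefore $\calB$ returns $\calA$'s guess of $\nu'$ as its guess of $\nu$; its advantage is at least $\calA$'s distinguishing advantage minus $\mathrm{negl}(\lambda)$, and since the $q'$-type assumption holds (Theorem~\ref{theo_q}) this gives the claimed semantic security.

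The step I expect to be the main obstacle is making the two embeddings coexist consistently: the Rouselakis--Waters part uses the $z_i$ slots to linearise MSP rows, the BBG-style index layer uses the $\{g_\ell^\alpha\}$ chain together with the $1/\alpha$ elements to linearise the index hierarchy, and both act on the \emph{same} master secret $\alpha$. One must verify that no key-query answer and no challenge component ever requires a monomial outside the span of $D$ — in particular that $g^{abc}$ never surfaces except through $T_\nu$ — and this bookkeeping is exactly what dictated the precise list of underlined terms added to the $q'$-type assumption. A secondary subtlety is checking that the simulated $t_\tau,t^*,r^*$ and the index randomness are distributed exactly as in the real scheme after the hidden shifts, so that, conditioned on $T_\nu$, the simulation is perfect rather than merely computationally close.
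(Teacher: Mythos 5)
Your proposal follows essentially the same route as the paper's proof: a selective reduction to the $q'$-type assumption that implicitly sets $\alpha=ab$ (with $\beta,\theta,\{z_i\}$ chosen by the reduction), simulates the MSP key components via the Rouselakis--Waters cancellation technique, builds the index/HIBE layer from the underlined $d$- and $1/ab$-terms of $D$, and embeds the challenge $T_\nu$ (paired into $\mathbb{G}_T$) in the $\e(g,h)^{\alpha s}$ blinding slot with $s=c$. If anything you are slightly more thorough than the paper, which only treats key queries with $\Lambda_i(\delta^*)\neq 1$ and defers the monomial-span bookkeeping you rightly flag as the delicate step.
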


\begin{proof} Let $\calS$ denote a $q'$-type problem attacker, who is given the terms from the assumption, aims to distinguish $g^{abc}$ and $g^s$. The reduction is performed as follows. 

\begin{itemize}
	\item $\calS$ simulates master public key $\mpk = (g, u, v, w, h, \e(g, h)^{\alpha}, \{ g_1^{\alpha}, \cdots \break g_k^{\alpha} \},  \{ h_1^{\alpha}, \cdots h_k^{\alpha} \}, g^{\beta}, h^{1/ \alpha}, h^{\beta/\alpha}, \e(g, h)^{\theta /\alpha } )$ as follows: $\e(g, h)^{\alpha} = \e(g^a, h^b)$, $\{ g_1^{\alpha}, \cdots g_k^{\alpha} \} = \{ g^{abdz_1}, \cdots, g^{abdz_k} \}$, $\{ h_1^{\alpha}, \cdots h_k^{\alpha} \} = \{ h^{abdz_1}, \break \cdots, h^{abdz_k} \}$, $h^{1/\alpha} = h^{d/ab}$, $h^{\beta/\alpha} = h^{\beta d/ab}$, $\e(g , h)^{\theta/\alpha} = \e(g^{\theta}, h^{d/ab})$, and $(u,v,w)$ are simulated using the same method described in \cite{rouselakis2013practical}. Note that $( \beta, \theta, \{ z_1, \cdots, z_k \} )$ are randomly chosen by $\calS$, and $\alpha$ (or $1/\alpha$) is implicitly assigned as $ab$ (or $d/ab$) from the $q'$-type assumption. $\calA$ submits a challenge index $j^* = \{ I_1, \cdots, I_j \}$ and a set of attributes $\delta^*$ to $\calS$.
	
	\item $\calS$ simulates decryption keys $\sk_{\Lambda_i} = ( \{ \sk_{\tau} \}_{\tau \in [n_1]}, \sk_0, \sk_1, \sk_2 ) $ (note that $\Lambda_i (\delta^*) \neq 1$) as follows: $\sk_0 = ( g^{dt/ab}, g^r)$, $\sk_1 = g^{\theta} \cdot \widehat{i^t} \cdot g^{\beta \cdot r} $, $\sk_2 = \{ g_{i-1}^{abdt}, \cdots, g_1^{abdt} \}$, where $\widehat{i} = g_k^{abdI_1} \cdots g_i^{abdI_i} \cdot g $.  Note that $\{ \sk_{\tau} \}_{\tau \in [n_1]}$ is simulated using the same method described in \cite{rouselakis2013practical}, and $(t, r)$ are randomly chosen by $\calS$. 
	
	\item $\calS$ simulates challenge ciphertext $C^* = (ct, \{ ct_{\tau, 1}, ct_{\tau, 2} \}_{\tau \in [\delta]}, ct_0, ct_1 )$ as follows: $ct = M_b \oplus \h ( T || \e(g, h)^{\theta cd/ab} )$, $ct_0 = ( h^c, h^{cd/ab}, h^{\beta cd/ab})$, and $ct_1 = \widehat{j^c} = h_k^{abcdI_1} \cdots h_j^{abcdI_j} \cdot h $. Note that $\{ ct_{\tau, 1}, ct_{\tau, 2} \}_{\tau \in [\delta]}$ are simulated using the same method described in \cite{rouselakis2013practical}, $s$ is implicitly assigned as $c$ from the $q'$-type assumption, and $T$ can be either $\e(g, h)^{abc}$ or $\e(g, h)^s$. 
	
\end{itemize}

\noindent Finally, $\calS$ outputs whatever $\calA$ outputs. If $\calA$ guesses the random bit correctly, then $\calS$ can break the $q'$-type problem. 

\end{proof}

\subsection{Ciphertext Anonymity}

Informally, ciphertext anonymity requires that any third party cannot distinguish the encryption of a chosen message for a first chosen index from the encryption of the same message for a second chosen index. In other words, the attacker cannot decide whether a ciphertext was encrypted for a chosen index or a random index. We prove the ABET scheme has selective ciphertext anonymity (i.e., the index is chosen prior to the security experiment).

\begin{theorem}
The proposed ABET scheme is anonymous if the eDDH assumption is held in the asymmetric pairing groups. 	
\end{theorem}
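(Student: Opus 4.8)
The plan is to reduce (selective) ciphertext anonymity of the proposed ABET scheme to the eDDH assumption over the asymmetric pairing $\e:\mathbb{G}\times\mathbb{H}\to\mathbb{G}_T$, via a short hybrid over the two challenge indices. The key structural observation is that the index of a ciphertext enters only through the component $ct_1=\widehat{j}^{\,s}$ with $\widehat{j}=h_k^{\alpha I_1}\cdots h_j^{\alpha I_j}\cdot h\in\mathbb{H}$, while $ct$, $ct_0$ and the attribute pairs $\{ct_{(\tau,1)},ct_{(\tau,2)}\}$ are index-independent. Hence, after the adversary $\calA$ commits at the start of the selective game to a message $M$, a target attribute set $\delta^*$ and two index vectors $(I_1,\dots,I_{j_0})$, $(I_1,\dots,I_{j_1})$, it suffices to show that two honest ciphertexts differing only in whether $ct_1$ is built from $\widehat{j_0}$ or $\widehat{j_1}$ are indistinguishable; I would do this through an intermediate hybrid in which $ct_1$ is a uniformly random element of $\mathbb{H}$ — manifestly symmetric in $j_0$ and $j_1$ — and argue that each of its two endpoints is eDDH-indistinguishable from it, the two arguments being identical up to relabelling.

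Concretely, the simulator $\calS$ receives an eDDH instance $(g^a,g^b,g^{ab},h^c,h^{ab},h^{1/ab},h^{abc},T)$ with $T\in\{h^{c/ab},h^s\}$, and runs $\calA$. It picks $\beta,\theta$ and the non-target $z$-values itself and chooses the implicit values of the master secret $\alpha$ and of the challenge encryption randomness in terms of $a,b,c$ so that $h^{1/\alpha}$, $h^{\beta/\alpha}$, $\e(g,h)^{\alpha}$, $\e(g,h)^{\theta/\alpha}$, and the whole of $ct_0$, the attribute pairs and the mask $ct$ are publicly computable from the instance, matching the bookkeeping of the semantic-security reduction for the $q'$-type assumption; the $\mathbb{H}$-side index elements along the target branch are planted from $h^{ab}$ and $h^{abc}$ so that the honest value $ct_1=\widehat{j_0}^{\,s}$ equals a publicly computable group element multiplied by $T$ — when $T=h^{c/ab}$ this reproduces the real ciphertext for index $j_0$, and when $T=h^{s}$ it is a fresh random element of $\mathbb{H}$. {\sf KeyGen} queries are answered honestly for every index/policy obeying the standard non-triviality restriction (no queried key decrypts the challenge on its own, i.e.\ $j_0\le i\iff j_1\le i$ whenever the queried policy is satisfied by $\delta^*$), which is possible because all decryption-key components live in $\mathbb{G}$ and can be formed from $g,g^a,g^b,g^{ab}$ and freshly sampled key randomness; delegated keys never arise here since $\calA$ obtains keys only through the oracle.

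The step I expect to be the main obstacle is making this embedding provably consistent: showing that $\calS$ can simultaneously output a correctly distributed $\mpk$, answer all admissible {\sf KeyGen} queries, and plant $T$ into $ct_1$, while the implicit value of $\alpha$ is never pinned down by any relation $\calA$ can test through the pairing. This is exactly the point at which the asymmetric design is used — the index material of a ciphertext lives in $\mathbb{H}$ and the corresponding key material in $\mathbb{G}$, so a would-be ``index consistency check'' of $ct_1$ against the public index elements would need an $\mathbb{H}\times\mathbb{H}$ pairing, which is unavailable, so no information about the index beyond what $T$ already fixes can leak. Granting this, $\adv_{\calA}$ in each hybrid step is bounded by $\adv^{\text{eDDH}}$, and summing the two steps gives that $\calA$ cannot distinguish the two challenge indices, i.e.\ the scheme achieves selective ciphertext anonymity; the routine simulation of $u,v,w$ and of the attribute components is inherited verbatim from \cite{rouselakis2013practical} and from the preceding proof, and I would not reproduce it.
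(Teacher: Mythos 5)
Your reduction is not the paper's. The paper keeps $ct_1=\widehat{j}^{\,s}$ honestly computed for the challenge index and instead embeds the eDDH challenge $T$ into the $1/\alpha$-bearing components: it sets $\alpha=ab$, $s=c$, and replaces $(h^{s/\alpha},h^{\beta s/\alpha})$ inside $ct_0$ and the term $\e(g,h)^{\theta s/\alpha}$ inside the mask by $(T,T^{\beta})$ and $\e(g,T)^{(\cdot)}$; the claim is that once those components are randomized the index-dependent element can no longer be correlated with the rest of the ciphertext. Your plan does the opposite: it leaves $ct_0=(h^s,\dots)$ honest and randomizes $ct_1$.

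That choice breaks the argument at exactly the step you flag as the main obstacle. The master public key publishes the $\mathbb{G}$-side index elements $\{g_1^{\alpha},\dots,g_k^{\alpha}\}$ together with $g$, so for any candidate index one can form $W=g_k^{\alpha I_1}\cdots g_j^{\alpha I_j}\cdot g\in\mathbb{G}$ and run the $\mathbb{G}\times\mathbb{H}$ test $\e(W,\,ct_{(0,1)})\stackrel{?}{=}\e(g,\,ct_1)$, which holds for an honest $ct_1=\widehat{j}^{\,s}$ and fails with overwhelming probability for a uniform $ct_1\in\mathbb{H}$. Your assertion that an index-consistency check would require an $\mathbb{H}\times\mathbb{H}$ pairing is therefore false; your intermediate hybrid is publicly distinguishable from the real game, so no reduction can establish that hybrid step while $ct_{(0,1)}=h^s$ is left intact. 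This is precisely why the paper randomizes $ct_0$ rather than $ct_1$. There is also an arithmetic obstacle to your embedding: with $\alpha=ab$ and $s=c$ one has $ct_1=h^{abc(\sum z_iI_i)+c}$, which is computable from $h^{abc}$ and $h^c$ already present in the eDDH instance without ever touching $T$, so $ct_1$ cannot consistently be written as a public element times $T$. Finally, key queries cannot be answered ``honestly'' as you claim: $\sk_0$ contains $g^{t^*/\alpha}=g^{t^*/ab}$, and $g^{1/ab}$ is not given in the instance, so the simulator must implicitly re-randomize the key exponents, which is the bookkeeping the paper's proof actually spends its effort on.
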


\begin{proof} Let $\calS$ denote an eDDH problem distinguisher, who is given terms from the assumption, aims to distinguish $h^{c/ab}$ and $h^s$. The reduction is performed as follows. 

\begin{itemize}

\item $\calS$ simulates master public key $\mpk = (g, u, v, w, h, \e(g, h)^{\alpha}, \{ g_1^{\alpha}, \cdots \break g_k^{\alpha} \}, \{ h_1^{\alpha}, \cdots  h_k^{\alpha} \}, g^{\beta}, h^{1/ \alpha}, h^{\beta/\alpha}, \e(g, h)^{\theta /\alpha } )$ as follows: $\e(g,h)^{\alpha} = \e(g, h)^{ab}$, $ \{ g_1^{\alpha}, \cdots, g_k^{\alpha} \} = \{ g_1^{ab}, \cdots, g_k^{ab} \} $, $ \{ h_1^{\alpha}, \cdots, h_k^{\alpha} \} = \{ h_1^{ab}, \cdots, \break h_k^{ab} \} $, $h^{1/\alpha} = h^{1/ab}$, $h^{\beta/\alpha} = h^{\beta/ab}$, $\e(g, h)^{\theta/\alpha} = \e(g, h)^{\theta/ab}$. Note that  $\calS$ randomly chooses $(u,v, w)$ and $(\beta, \theta, \{ z_i \})$, and implicitly sets $\alpha = ab$. $\calA$ submits a challenge index $j^* = \{ I_1, \cdots, I_j \}$ to $\calS$. 
 
\item $\calS$ simulates a decryption key $\sk_{\Lambda_i} = ( \{ \sk_{\tau} \}_{\tau \in [n_1]}, \sk_0, \sk_1, \sk_2 ) $ with respect to index $i = \{ I_1, \cdots, I_i \} $ as follows: $ \sk_0 = (g^t, g^{\beta \cdot r} \cdot g^{- \Sigma (z_i I_i) t \cdot b} )$, $\sk_1 = g^{ \theta } \cdot g^{a \cdot r}$, where $(t, r) \in \Z $ are randomly chosen by $\calS$. The components $( \{ \sk_{\tau} \}_{\tau \in [n_1]}, \sk_2 )$ are honestly simulated by $\calS$. The simulated components $( g^t , g^{\beta \cdot r} \cdot g^{- \Sigma (z_i I_i) t\cdot b}, g^{\theta} \cdot g^{a \cdot r}  )$ are correctly distributed, because $g^{\theta} \cdot g^{a \cdot r}  = g^{\theta} \cdot \widehat{i}^{t} \cdot g^{a [ \beta \cdot r- \Sigma (z_i I_i) t \cdot b] } = g^{\theta} \cdot \widehat{i}^{ t} \cdot g^{a \cdot \overline{r}} $, where $\overline{r} =\beta \cdot r - \Sigma (z_i I_i) t \cdot b$, and $\widehat{i} = g_{k}^{ab I_1} \cdots g_i^{ab I_i} \cdot g$. So, the simulated components $( g^{t} , g^{\overline{r}}, g^{\theta} \cdot \widehat{i}^{t} \cdot g^{a \cdot \overline{r}})$ match the real distribution.

\item $\calS$ simulates the challenge ciphertext $C^* = (ct, \{ ct_{\tau, 1}, ct_{\tau, 2} \}_{\tau \in [\delta]}, \break ct_0, ct_1 )$ with respect to index $j^*$ as follows: $ct = M_b \oplus \h (\e(g, h)^{abc} || \break \e(g, T)^{\beta} ) $, $ct_0 = (h^c, T, T^{\beta})$, and $ct_1 = h_k^{abcI_1} \cdots h_j^{abcI_j} \cdot h^c $. Note that $\calS$ simulates $\{ ct_{\tau, 1}, ct_{\tau, 2} \}_{\tau \in [\delta]}$ honestly, and $T$ can be either $h^{c/ab}$ or $h^s$. 

\end{itemize}

\noindent Finally, $\calS$ outputs whatever $\calA$ outputs. If $\calA$ guesses the random bit correctly, then $\calS$ can break the eDDH problem. 
\end{proof}

\end{document}